\title{Extending the Applicability Condition\\
in the Formal System $\LD$}
\author{Ferruccio Guidi\\
Department of Computer Science and Engineering,\\
University of Bologna, Italy}
\begin{document}

\maketitle

\begin{abstract}

The formal system $\LD$ is a typed $\lambda$-calculus
derived from $\LY$, aiming to support
the foundations of Mathematics that require an underlying theory of expressions
(for example the Minimal Type Theory).

The system is developed in the context of the
Hypertextual Electronic Library of Mathematics
as a machine-checked digital specification,
that is not the formal counterpart of previous informal material.
The first version of the calculus appeared in 2006
and proved unsatisfactory for some reasons.

In this article we present a revised version of the system
and we prove three relevant desired properties:
the confluence of reduction,
the strong normalization of an extended form of reduction,
known as the ``big tree'' theorem,
and the preservation of validity by reduction.
To our knowledge, we are presenting here
the first fully machine-checked proof of the ``big tree'' theorem
for a calculus that includes $\LY$.

\end{abstract}

\section{Introduction}
\SecLabel{introduction}

The formal system $\LD$ is a typed $\lambda$-calculus aiming to support
the foundations of Mathematics that require an underlying theory of expressions
(for example $\MTT$ of \cite{Mai09} and its predecessors).
The system is developed in the context of the HELM project of \cite{APSGS03}
as a machine-checked digital specification,
that is not the formal counterpart of some previous informal material.
The first version of the calculus \cite{lambdadeltaV1a},
formalized in the proof management system (p.m.s.) Coq \cite{CoqOld}
and published by \cite{lambdadeltaJ1a},
proved unsatisfactory for some reasons.
So a revision of the calculus is ongoing since April 2011
and includes a brand new formalization \cite{lambdadeltaV2a}
in the p.m.s. Matita of \cite{ARST11}.
Firstly, the revision aims at this problem:
the calculus of \cite{lambdadeltaJ1a}
comes from $\LY$ \cite{SPAc6}, a language of
the Automath family \cite{SPA94}, and yet
it cannot type every term typed by $\LY$
since it lacks the ``pure'' type inference rule for function application \cite{Bru91}.
If $\NTA{}{\G}{M}{N}$ is a type assignment judgment
and $\NV{}{\G}{M}$ is the corresponding validity judgment, this rule states:
\[
\Rule{\NTA{}{\G}{f}{F}\quad\NV{}{\G}{\MAppl{F}{t}}}
{\mathrm{@-pure}}
{\NTA{}{\G}{\MAppl{f}{t}}{\MAppl{F}{t}}}
\EqLabel{at_pure}\]
This rule is redundant when the terms have three degrees
(objects, classes, and sorts) as in Pure Type Systems (PTS's)
\cite{Brn92} and their derivatives.
On the contrary it becomes effective when more degrees are available,
as in the $\AUTF$ family \cite{SPAb3} or in $\LY$,
since $\NTA{}{\G}{f}{F}$ and $\NTA{}{\G}{F}{\F}$
do not imply that $\F$ is a sort.
In this case $f$ can be a function, $F$ a function space,
and $\F$ a family of function spaces.
If we take $t$ in the domain of $f$,
we might want $\NV{}{\G}{\MAppl{f}{t}}$
even when $f$ and $F$ are given abstractly as variables declared in $\G$.
\Ruleref{at_pure} is designed to realize this situation.

In the mathematical language we express a large variety of concepts,
each with its own requirements.
When we translate this language to typed $\lambda$-calculus,
a widely accepted policy suggests that
expressions denoting concepts with different requirements
should correspond to $\lambda$-terms with different degrees.
Consider typical concepts of interest:
sets, elements, propositions and proofs.
While well-established similarities between elements and proofs
support their representation with terms of the same degree,
significant differences arise as well,
playing in favor of representing them differently.  

Mainly, identifying two proofs of a proposition
(also known as ``proof irrilevance'') is sensible,
while identifying two elements of a set generally is not.
\cite{SPAb3} approaches this problem by advocating a calculus in which
two terms inhabiting the same type of degree $3$ are definitionally
equal. This is to say that terms of degree $4$ are provided for
representing irrelevant proofs. 
Similarly, subtle differences can be found in the requirements for sets and propositions. 
So it seams that a calculus with many degrees for its terms,
may allow flexible interpretations of the mathematical language.

We note that $\LD$ has a disadvantage in this sense
because of its ``isotropy'', 
by which we mean that the features of its terms do not depend on their degree.

Secondly, the revised $\LD$ aims at other improvements
some of which were advocated already by \cite{lambdadeltaJ1a}.
Simpler ``arities'' make the arity assignment judgment
decidable for all values of the sort hierarchy parameter.
The reaxiomatized step of environment-dependent parallel reduction 
allows to remove the substitution operator
and provides for the long-awaited \Ruleref{cpr_appl}.
Tait-style reducibility candidates \cite{Tai75}
in place of Girard-style ones \cite{GTL89}
simplify the strong normalization theorem.
Simpler environments allow to remove some ancillary operators.
\[
\Rule{\CPR{\G}{f_1}{f_2}\quad\CPR{\G}{t_1}{t_2}}
{\mathrm{appl}}
{\CPR{\G}{\MAppl{f_1}{t_1}}{\MAppl{f_2}{t_2}}}
\EqLabel{cpr_appl}\]

The main contributions of this article are
the so-called ``big tree'' theorem \cite{SPAc4} for $\LD$,
which yields the subject reduction theorem for its stratified validity.

The ``big tree'' theorem states that
valid terms are strongly normalizing with respect to a relation 
comprising reduction steps, type steps, subtraction steps, and more.
It generalizes ordinary strong normalization
and gives a very powerful induction principle
for proving properties on valid terms.
We are confident that this tool may prove useful
in systems other than $\LD$ as well.

Stratified validity (\IE validity up to a specified degree)
replaces type assignment as a primitive notion in the revised $\LD$.
This choice is motivated by the subject reduction theorem,
which, in presence of \Ruleref{at_pure}, 
is proved more easily for validity
(the property of having an unspecified type)
than for type assignment (the property of having a specified type) 
since types in $\LD$, as well as in other systems,
are not specified uniquely but up to conversion.
The same situation arises for $\LY$ \cite{SPA2}.

At this stage the revised $\LD$ does not include a type judgment
and the exclusion binder $\chi$ of \cite{lambdadeltaJ1a},
however we conjecture that our notion of validity implies \Ruleref{at_pure}.

The revised $\LD$ is defined in \SecRef{definitions} and its properties
are presented in \SecRef{propositions}. Our conclusions are in \SecRef{conclusion}.
\AppRef{notation} gives a summary of the notation we introduce,
while \AppRef{pointers} gives the pointers to the digital version of our results.

We agree that the symbol \hbox{\hss\qed}
terminates our definitions and our proofs in the text.

\section{Definition of $\LD$}
\SecLabel{definitions}

In this section we define the revised $\LD$ from scratch 
presenting its language (\SubRef{language}),
its reduction rules (\SubRef{cpr_defs}),
and its validity rules (\SubRef{snv_defs}).
These rules depend on some ancillary notions:
relocation (\SubRef{relocation}),
static type assignment (\SubRef{static}),
and degree assignment (\SubRef{da_defs}).
Other notions are introduced to state or prove the main theorems
of this article: closures (\SubRef{closures}),
extended reduction (\SubRef{cpx_defs}),
atomic arity assignment (\SubRef{aaa_defs}),
reducibility candidates (\SubRef{gcp_defs}),
lazy equivalence (\SubRef{lleq_defs}), and
an extension of ``big trees'' termed here ``very big trees'' (\SubRef{fpb_defs}).

We shall use some logical constants:
$\MAll{}$ (universal quantification),
$\MEx{}$ (existential quantification),
$\MImp$ (implication),
$\MAnd$ (conjunction),
and natural numbers with standard operators:
$\le$, $<$, $+$, and $-$. 
We shall need lists for the normalization theorem.
Metavariables for lists will be overlined, like $\vect{c}$.
The empty list will be $\Empty$, and the infix semicolon will denote
concatenation, like $\Cons{c}{\vect{c}}$.

Contrary to \cite{lambdadeltaJ1a},
in this presentation we want to
follow the digital specification of the calculus strictly,
especially in the treatment of variables,
and we make some notational changes with respect to that article.
The reader will find a summary of the revised notation in \AppRef{notation}.

\subsection{Language}
\SubLabel{language}


The grammar of $\LD$ features two syntactic categories:
terms and environments, and uses natural numbers.
Terms are presented in the ``item notation'' of \cite{KN96b},
and include sorts, variable occurrences,
abbreviations, typed abstractions,
applications, and type annotations.
Contrary to \cite{lambdadeltaJ1a},
environments contain just (nonrecursive) definitions, and typed declarations.

\begin{definition}[terms and environments]
\ObjLabel{tl}
Terms and environments are defined in \FigRef{tl}{}.
$\Sort{k}$ is the sort of index $k$,
$\LRef{i}$ is the reference to the variable introduced at depth $i$
\cite{SPAc2} (so $i$ is a ``de Bruijn index''),
$\Abbr{}{V}{T}$ is the abbreviation ``$\MAbbr{\LRef{0}}{V}{T}$'',
$\Abst{}{W}{T}$ is the function ``$\MAbst{\LRef{0}}{W}{T}$'',
$\Appl{V}{T}$ is the application ``$\MAppl{T}{V}$'',
and $\Cast{W}{T}$ is the type annotation ``$\MCast{T}{W}$''.
$\Null$ is the empty environment,
$\LDef{L}{}{V}$ is $L$ with the definition ``$\MAbbr{\LRef{0}}{V}{}$'',
and $\LDec{L}{}{W}$ is $L$ with the declaration ``$\MAbst{\LRef{0}}{W}{}$''.
\qed
\end{definition}

\begin{Figure}{Terms and environments}{tl}
\begin{tabular}{lrl}
natural number & $i,j,k$ & starting at $0$\\
term & $T, U, V, W$ & 
$\GLet 
 \Sort{k} \GOr \LRef{i} \GOr
 \Abbr{}{V}{T} \GOr \Abst{}{W}{T} \GOr
 \Appl{V}{T} \GOr \Cast{W}{T}
$\\
environment & $K,L$ &
$\GLet
 \Null \GOr \LDef{L}{}{V} \GOr \LDec{L}{}{W}
$\\
\end{tabular}
\end{Figure}

Convention:
the symbol $\EITHER{\delta}{\lambda}$
means: ``either $\delta$, or $\lambda$''.
If the symbol occurs many times in a statement, it means:
``either $\delta$ in every occurrence, or $\lambda$ in every occurrence''.
The same convention holds for similar symbols we will use,
like $\EITHER{\star}{\#}$ and $\EITHER{\copyright}{@}$.


The application can be extended to take a list $\vect{V}$ of arguments.

\begin{definition}[multiple application]
\ObjLabel{applv}
$\ApplV{\vect{V}}{}{T}$ defined in \FigRef{applv}{},
denotes the application of $T$ the arguments in the list $\vect{V}$
starting from the rightmost term in $\vect{V}$. 
\qed
\end{definition}

\begin{Figure}{Multiple application}{applv}
$\ApplV{\Empty}{}{T} = T$ \quad
$\ApplV{(\Cons{V}{\vect{V}})}{}{T} = \Appl{V}{(\ApplV{\vect{V}}{}{T})}$
\end{Figure}


Environments are lists
so some standard operators can be defined on them.

\begin{definition}[length]
\ObjLabel{length}
\FigRef{length}{} defines the length $\Length{L}$ of an environment $L$.
\qed
\end{definition}

\begin{Figure}{Length of an environment}{length}
$\Length{\Null} = 0$ \quad
$\Length{\Pair{L}{}{W}} = \Length{L}$
\end{Figure}


\begin{definition}[concatenation]
\ObjLabel{append}
\FigRef{append}{} defines 
the concatenation $\Append{K}{L}$ of $L$ before $K$.
In particular we write $\Bind{}{W}{L}$ for $\Append{(\Pair{\Null}{}{W})}{L}$.
\qed
\end{definition}

\begin{Figure}{Concatenation of two environments}{append}
$\Append{K}{\Null} = K$ \quad
$\Append{K}{(\Pair{L}{}{W})} = \Pair{(\Append{K}{L})}{}{W}$
\end{Figure}


Normalization requires two predicates:
see \Defref{gcr}{} and \Thref{cpx_props}{cpxs_fwd_beta}.

\begin{definition}[neutrality]
\ObjLabel{simple}
$\Simple{T}$ states that the term $T$ is simple (or
neutral) as defined in \FigRef{simple}{}.
Specifically, $T$ is neither an abbreviation, nor an abstraction.
\qed
\end{definition}

\begin{Figure}{Simple (or neutral) terms}{simple}
$\Simple{\Atom{i}}$ \quad $\Simple{\Flat{V}{T}}$
\end{Figure}


\begin{definition}[top structure]
\ObjLabel{tsts}
$\TSTS{T_1}{T_2}$ states that
the terms $T_1$ and $T_2$ have the same top structure
as defined in \FigRef{tsts}{}.
Specifically, $T_1$ and $T_2$ are the same atomic term
or start with the same constructor.
\qed
\end{definition}

\begin{Figure}{Terms with the same top structure}{tsts}
$\TSTS{\Atom{i}}{\Atom{i}}$ \quad
$\TSTS{\Item{}{V_1}{T_1}}{\Item{}{V_2}{T_2}}$
\end{Figure}

\subsection{Relocation}
\SubLabel{relocation}


Managing variables referred by depth requires
a well-known function $\Lift{l}{m}{T}{}$
connected to the function $\MAppl{\tau_m}{T}$ of \cite{SPAc2}.
In particular, when the term $T$ enters the scope of $m$ binders,
$\Lift{0}{m}{T}{}$ relocates the indexes of its free variables.  
The composition of such functions is of interest as well.

\begin{definition}[relocation]
\ObjLabel{lift}
The relation $\Lift{l}{m}{T_1}{T_2}$ defined in \FigRef{lift}{},
states that $T_2$ is the relocation of $T_1$
at level $l$ with depth (or ``height'') $m$.

We term the pair $\CLOSURE{l}{m}$ a ``relocation pair''.
\qed
\end{definition}

\begin{Figure}{Relocation}{lift}
\begin{tabular}{lrl}
natural number & $l,m$ & starting at $0$\\
relocation pair & $c$ & $\GDef \CLOSURE{l}{m}$\\
\end{tabular}\\[0.5pc]
\Rule{}{\mathrm{sort}}{\Lift{l}{m}{\Sort{k}}{\Sort{k}}}\quad
\Rule{i < l}{\mathrm{lref\_lt}}{\Lift{l}{m}{\LRef{i}}{\LRef{i}}}\quad
\Rule{l \le i}{\mathrm{lref\_ge}}{\Lift{l}{m}{\LRef{i}}{\LRef{(i+m)}}}\NL
\Rule{\Lift{l}{m}{W_1}{W_2}\quad\Lift{l+1}{m}{T_1}{T_2}}
     {\mathrm{bind}}
     {\Lift{l}{m}{\Bind{}{W_1}{T_1}}{\Bind{}{W_2}{T_2}}}\quad
\Rule{\Lift{l}{m}{V_1}{V_2}\quad\Lift{l}{m}{T_1}{T_2}}
     {\mathrm{flat}}
     {\Lift{l}{m}{\Flat{V_1}{T_1}}{\Flat{V_2}{T_2}}}
\end{Figure}


\begin{definition}[vector relocation]
\ObjLabel{liftv}
The relation $\LiftV{l}{m}{\vect{T}_1}{\vect{T}_2}$ defined in \FigRef{liftv}{},
applies $\CLOSURE{l}{m}$ to the components of the list $\vect{T}_1$
preserving their order in the list $\vect{T}_2$.
\qed
\end{definition}

\begin{Figure}{Vector relocation}{liftv}
\Rule{}{\mathrm{empty}}{\LiftV{l}{m}{\Empty}{\Empty}} \quad
\Rule{\Lift{l}{m}{T_1}{T_2} \quad \LiftV{l}{m}{\vect{T}_1}{\vect{T}_2}}
     {\mathrm{cons}}
     {\LiftV{l}{m}{(\Cons{T_1}{\vect{T}_1})}{\Cons{T_2}{\vect{T}_2}}}
\end{Figure}


\begin{definition}[multiple relocation]
\ObjLabel{lifts}
The relation $\Lifts{\vect{c}}{T_1}{T_2}$ defined in \FigRef{lifts}{},
applies the list $\vect{c}$ of relocation pairs to $T_1$
starting from the leftmost pair in $\vect{c}$.
\qed
\end{definition}

\begin{Figure}{Multiple relocation}{lifts}
\Rule{}{\mathrm{empty}}{\Lifts{\Empty}{T}{T}} \quad
\Rule{\Lifts{c}{T_1}{T} \quad \Lifts{\vect{c}}{T}{T_2}}
     {\mathrm{cons}}
     {\Lifts{\Cons{c}{\vect{c}}}{T_1}{T_2}}
\end{Figure}


If $\Lift{l}{m}{T_1}{T_2}$, notably, $T_2$ does not refer to
the variables introduced at depth $i$ with $l \le i < l + m$. 
So a relation $\Drop{l}{m}{L_1}{L_2}$ is provided
for removing the $i$-th entries of $L_1$ such that $l \le i < l + m$,
while relocating the $i$-th entries such that $i < l$.
The relation is defined only when this relocation is possible, 
that is when an $i$-th entry with $i < l$
does not refer to an $i$-th entry with $l \le i < l + m$.
The $0-th$ entry of $L_1$ is the head of $L_1$.
We term this relation ``drop''
as opposed to relocation, which is sometimes termed ``lift''.

Notice that if $\Drop{0}{i}{L_1}{L_2}$,
then the head of $L_2$ contains the $i$-th entry of $L_1$.

\begin{definition}[drop]
\ObjLabel{drop}
The relation $\Drop{l}{m}{L_1}{L_2}$ defined in \FigRef{drop}{},
states that $L_2$ is $L_1$ without the $i$-th entries such that $l \le i < l + m$,
and with the $i$-th entries such that
$i < l$ relocated accordingly.
\qed
\end{definition}

\begin{Figure}{Drop}{drop}
\Rule{}{\mathrm{atom}}{\Drop{l}{0}{\Null}{\Null}} \quad
\Rule{\Drop{0}{0}{L_1}{L_2}}{\mathrm{pair}}
     {\Drop{0}{0}{\Pair{L_1}{}{W}}{\Pair{L_2}{}{W}}} \NL
\Rule{\Drop{0}{m}{L_1}{L_2}}{\mathrm{drop}}
     {\Drop{0}{m+1}{\Pair{L_1}{}{W}}{L_2}} \quad
\Rule{\Drop{l}{m}{L_1}{L_2}\quad\Lift{l}{m}{W_2}{W_1}}{\mathrm{skip}}
     {\Drop{l+1}{m}{\Pair{L_1}{}{W_1}}{\Pair{L_2}{}{W_2}}}
\end{Figure}

\FigRef{drop}{atom} generalizes ``drop'' of \cite{lambdadeltaV1a}
allowing $\Drop{l}{0}{L}{L}$ when $\Length{L} \le l$.


\begin{definition}[multiple drop]
\ObjLabel{drops}
The relation $\Drops{\vect{c}}{L_1}{L_2}$ defined in \FigRef{drops}{},
applies the list $\vect{c}$ of relocation pairs to $L_1$
starting from the leftmost pair in $\vect{c}$.
\qed
\end{definition}

\begin{Figure}{Multiple Drop}{drops}
\Rule{}{\mathrm{empty}}{\Drops{\Empty}{L}{L}} \quad
\Rule{\Drops{c}{L_1}{L} \quad \Drops{\vect{c}}{L}{L_2}}
     {\mathrm{cons}}
     {\Drops{\Cons{c}{\vect{c}}}{L_1}{L_2}}
\end{Figure}


The next equivalence relation
appears in \Thref{lleq_props}{lleq_lreq_trans}.

\begin{definition}[ranged equivalence]
\ObjLabel{lreq}
The relation $\LREq{l}{m}{L_1}{L_2}$ defined in \FigRef{lreq}{},
states that $L_1$ and $L_2$ have the same length
and the same $i$-th entries for $l \le i < l + m$.
\qed
\end{definition}

\begin{Figure}{Ranged equivalence}{lreq}
\Rule{}{\mathrm{atom}}{\LREq{l}{m}{\Null}{\Null}} \quad
\Rule{\LREq{0}{m}{L_1}{L_2}}{\mathrm{pair}}
     {\LREq{0}{m+1}{\Pair{L_1}{}{W}}{\Pair{L_2}{}{W}}} \NL
\Rule{\LREq{0}{0}{L_1}{L_2}}{\mathrm{zero}}
     {\LREq{0}{0}{\Pair{L_1}{1}{W_1}}{\Pair{L_2}{2}{W_2}}} \quad
\Rule{\LREq{l}{m}{L_1}{L_2}}{\mathrm{succ}}
     {\LREq{l+1}{m}{\Pair{L_1}{1}{W_1}}{\Pair{L_2}{2}{W_2}}}
\end{Figure}

\subsection{Reduction}
\SubLabel{cpr_defs}

$\LD$ features a transition system with five schemes of reducible expressions (redexes).
Care is taken to design a deterministic and confluent system
with disjoint redex schemes,
in which the call-by-value $\beta$-reduction is broken into its basic components.

\begin{definition}[transitions]
\ObjLabel{transition}
\FigRef{transition}{} defines the redexes
and their transitions $\beta$, $\delta$, $\epsilon$,
$\zeta$, and $\theta$, which depend on an environment $L$.
The $\beta$-reduction is delayed (call-by-name style),
the $\delta$-expansion expands a definition in $L$,
the $\epsilon$-contraction removes a type annotation,
the $\zeta$-contraction removes an unreferenced abbreviation,
and the $\theta$-reduction \cite{CH00}
swaps an application and an abbreviation.
\qed
\end{definition}

\begin{Figure}{Transitions}{transition}
\Rule{}{\beta}
     {\CQR{L}{\Appl{V}{\Abst{}{W}{T}}}{\Abbr{}{(\Cast{W}{V})}{T}}} \quad
\Rule{\Drop{0}{i}{L}{\LDef{K}{}{V_1}}\quad\Lift{0}{i+1}{V_1}{V_2}}{\delta}
     {\CQR{L}{\LRef{i}}{V_2}} \NL
\Rule{}{\epsilon}{\CQR{L}{\Cast{U}{T}}{T}} \quad
\Rule{\Lift{0}{1}{T_2}{T_1}}{\zeta}{\CQR{L}{\Abbr{}{V}{T_1}}{T_2}} \quad
\Rule{\Lift{0}{1}{V_1}{V_2}}{\theta}
     {\CQR{L}{\Appl{V_1}{\Abbr{}{W}{T}}}{\Abbr{}{W}{\Appl{V_2}{T}}}}
\end{Figure}

Notice that the $\beta$-redex contains a type annotation $W$ that,
contrary to \cite{lambdadeltaJ1a}, remains in the $\beta$-reductum.
This choice is connected with the revised form of the normalization theorem.
Also notice that $\delta$-expansion, contrary to \cite{lambdadeltaJ1a},
does not mention substitution.
In the light of next \DefRef{cpr}{},
delayed parallel substitution is seen as a special case of reduction.


Following \cite{lambdadeltaJ1a}, we present parallel reduction
to ease the proof of the confluence theorem,
but here we take environment-dependent reduction as primitive.

\begin{definition}[parallel reduction for terms]
\ObjLabel{cpr}
The relation $\CPR{L}{T_1}{T_2}$ defined in \FigRef{cpr}{}, indicates
one step of parallel reduction from $T_1$ to $T_2$ in $L$.
\qed
\end{definition}

\begin{Figure}{Parallel reduction for terms (single step)}{cpr}
\Rule{\CPR{L}{W_1}{W_2}\quad\CPR{\Pair{L}{}{W_1}}{T_1}{T_2}}{\mathrm{bind}}
     {\CPR{L}{\Bind{}{W_1}{T_1}}{\Bind{}{W_2}{T_2}}} \quad
\Rule{\CPR{L}{V_1}{V_2}\quad\CPR{L}{T_1}{T_2}}{\mathrm{flat}}
     {\CPR{L}{\Flat{V_1}{T_1}}{\Flat{V_2}{T_2}}} \NL
\Rule{}{\mathrm{atom}}{\CPR{L}{\Atom{i}}{\Atom{i}}} \quad
\Rule{\Drop{0}{i}{L}{\LDef{K}{}{V_1}}\quad\CPR{K}{V_1}{V_2}\quad\Lift{0}{i+1}{V_2}{W_2}}{\delta}
     {\CPR{L}{\LRef{i}}{W_2}} \NL
\Rule{\CPR{L}{V_1}{V_2}\quad\CPR{L}{W_1}{W_2}\quad\CPR{\LDec{L}{}{W_1}}{T_1}{T_2}}{\beta}
     {\CPR{L}{\Appl{V_1}{\Abst{}{W_1}{T_1}}}{\Abbr{}{(\Cast{W_2}{V_2})}{T_2}}} \quad
\Rule{\CPR{\LDef{L}{}{V}}{U_1}{U_2}\quad\Lift{0}{1}{T_2}{U_2}}{\zeta}
     {\CPR{L}{\Abbr{}{V}{U_1}}{T_2}} \NL
\Rule{\CPR{L}{T_1}{T_2}}{\epsilon}{\CPR{L}{\Cast{U}{T_1}}{T_2}} \quad
\Rule{\CPR{L}{V_1}{V_2}\quad\Lift{0}{1}{V_2}{W_2}\quad\CPR{L}{U_1}{U_2}\quad\CPR{\LDef{L}{}{U_1}}{T_1}{T_2}}{\theta}
     {\CPR{L}{\Appl{V_1}{\Abbr{}{U_1}{T_1}}}{\Abbr{}{U_2}{\Appl{W_2}{T_2}}}}
\end{Figure}

We compute a call-by-value $\beta$-reduction in two steps,
as we illustrate by computing the term $\MAppl{\Delta}{\Delta}$.
In particular we set $\Delta_T = \Abst{}{T}{\Appl{\LRef{0}}{\LRef{0}}}$
and we agree that $\Lift{0}{1}{T}{U}$.

\begin{center}\begin{tabular}{rl}
$\beta$&
$\CPR{L}{\Appl{\Delta_T}{\Delta_T}}{\Abbr{}{(\Cast{T}{\Delta_T})}{\Appl{\LRef{0}}{\LRef{0}}}}$\\
$\epsilon$, $\delta$, $\zeta$&
$\CPR{L}{\Abbr{}{(\Cast{T}{\Delta_T})}{\Appl{\LRef{0}}{\LRef{0}}}}{\Appl{\Delta_T}{\Delta_T}}$\\
&by $\CPR{\LDef{L}{}{(\Cast{T}{\Delta_T}})}{\LRef{0}}{\Delta_U}$
and $\CPR{L}{\Cast{T}{\Delta_T}}{\Delta_T}$\\
\end{tabular}\end{center}

The advantage of environment-dependent parallel reduction
over the approach of \cite{lambdadeltaJ1a}
lies in the increased parallelism of $\delta$-expansions,
which we need for the ``big tree'' theorem.
Suppose that $\Subst{m}{V}{T}{}$ replaces with $V$ some references
in $T$ to the variable introduced at depth $m$,
and compare \FigRef{cpr}{bind} and \FigRef{cpr}{\delta}
with \Ruleref{deltafree} (\IE their environment-free counterpart).
When we replace many variable instances in one step with this rule,
each instance receives the same reduct $V_2$ of $V_1$.
Whereas, by \FigRef{cpr}{\delta} each instance may receive a different
reduct of $V_1$.
\[
\Rule{\CPR{}{V_1}{V_2}\quad\CPR{}{T_1}{T}\quad\Subst{0}{V_2}{T}{T_2}}
{\mathrm{\delta-free}}
{\CPR{}{\Abbr{}{V_1}{T_1}}{\Abbr{}{V_2}{T_2}}}
\EqLabel{deltafree}\]

Notice that the subsystem of rules:
\FigRef{cpr}{bind}, \FigRef{cpr}{flat}, \FigRef{cpr}{atom}, and \FigRef{cpr}{\delta}
axiomatizes environment-dependent parallel substitution.


We derive several notions from parallel reduction:
an extension for environments needed in the confluence theorem,
and some transitive closures.
In this setting we agree that
a ``computation'' is a reduction sequence consisting of zero or more steps.

\begin{definition}[parallel reduction for environments]
\ObjLabel{lpr}
The relation $\LPR{L_1}{L_2}$ defined in \FigRef{lpr}{} indicates
one step of parallel reduction from $L_1$ to $L_2$.
\qed
\end{definition}

\begin{Figure}{Parallel reduction for environments (single step)}{lpr}
\Rule{}{\mathrm{atom}}{\LPR{\Null}{\Null}} \quad
\Rule{\LPR{L_1}{L_2}\quad\CPR{L_1}{W_1}{W_2}}{\mathrm{pair}}
     {\LPR{\Pair{L_1}{}{W_1}}{\Pair{L_2}{}{W_2}}}
\end{Figure}


\begin{definition}[parallel computation and conversion]
\ObjLabel{cprs}
The relation $\CPRS{L}{T_1}{T_2}$ (computation) is the transitive closure of $\CPR{L}{T_1}{T_2}$,
while $\CPCS{L}{T_1}{T_2}$ (conversion) is the symmetric and transitive closure of $\CPR{L}{T_1}{T_2}$.
Moreover $\LPRS{L_1}{L_2}$ (computation) is the transitive closure of $\LPR{L_1}{L_2}$.
\FigRef{cprs}{} defines $\CPRS{L}{T_1}{T_2}$ for reference.
The other notions are defined in the same manner.
\qed
\end{definition}

\begin{Figure}{Parallel computation for terms (multi-step)}{cprs}
\Rule{\CPR{L}{T_1}{T_2}}{\mathrm{inj}}{\CPRS{L}{T_1}{T_2}} \quad
\Rule{\CPRS{L}{T_1}{T}\quad\CPR{L}{T}{T_2}}{\mathrm{step}}
     {\CPRS{L}{T_1}{T_2}}
\end{Figure}

The transitive closures we just defined are reflexive,
because so is $\CPR{L}{T_1}{T_2}$.
Therefore the symbol $^*$ in their notation is justified
as a Kleene star meaning ``zero or more''.

A characteristic feature of $\LD$ is the use of reflexive
relations for environments termed here ``refinements'',
invoked when proving that reduction preserves some property.
Specifically, they are invoked in the case of \FigRef{cpr}{\beta}
given that a backward application of \FigRef{cpr}{bind} moves
part of the $\beta$-redex and part of the $\beta$-reductum in the environment.
The basic refinement is given next and
occurs in the proof of the confluence theorem.
The other refinements imply this one.
See \DefRef{lsubd}{}, \DefRef{lsubsv}{},
\DefRef{lsuba}{}, \DefRef{lsubc}{}.

\begin{definition}[refinement for preservation of reduction]
\ObjLabel{lsubr}
\FigRef{lsubr}{} defines the relation $\LSubR{L_1}{L_2}$ stating that
$L_1$ refines $L_2$ for preservation of reduction.
\qed
\end{definition}

\begin{Figure}{Refinement for preservation of reduction}{lsubr}
\Rule{}{\mathrm{atom}}{\LSubR{L}{\Null}} \quad
\Rule{\LSubR{L_1}{L_2}}{\mathrm{pair}}
     {\LSubR{\Pair{L_1}{}{W}}{\Pair{L_2}{}{W}}} \quad
\Rule{\LSubR{L_1}{L_2}}{\mathrm{beta}}
     {\LSubR{\LDef{L_1}{}{(\Cast{W}{V})}}{\LDec{L_2}{}{W}}}
\end{Figure}

The main results on reduction, conversion, and refinement
are in \SubRef{cpr_props}.

\subsection{Iterated Static Type Assignment}
\SubLabel{static}


The ``static'' type assignment defined in this section is our counterpart
of the so-called ``de Bruijn'' type assignment
of the Automath tradition \cite{SPA2}.
As such, it plays a central role in our definition of validity. 
Its name recalls that we can compute it
without the help of $\beta\zeta\theta$-reductions.

Intuitively, the term $T$ has a static type $U$ in the environment $L$
iff the head variable occurrence of $T$ is hereditarily closed in $L$.
In that case, $U$ is just a candidate type for $T$.
However, when $T$ is valid, its static type serves
as the ``canonical'' type \cite{KN96a},
or as the ``inferred'' type \cite{Cos96}.

The ``static type iterated $n$ times'' is related to
the notion of validity implied by \Ruleref{at_pure}
and it will be convenient to define it 
as a primitive notion (denoted by $\LSTAS{}{L}{T}{U}{n}$),
that will not be the reflexive and transitive closure
of the ``static type iterated one time''.
In fact we are not interested in full reflexivity
(\IE $\LSTAS{}{L}{T}{T}{0}$ for each $T$).
On the contrary, we wish to ensure that $\LSTAS{}{L}{T}{U}{n}$
holds iff the head variable occurrence of $T$ is hereditarily closed in $L$
regardless of $n$, hence even for $n=0$.
As a matter of fact, differentiating the case $n=0$ for the sake of reflexivity,
yields a less elegant definition of $\LSTAS{}{L}{T}{U}{n}$.

According to our type policy,
the sort of index $k$ is typed by the sort of index $\Next{h}{k}$
where $\Next{h}{}$ is function chosen at will
as long as an extension condition is satisfied.

\begin{definition}[iterated static type assignment]
\ObjLabel{lstas}
A ``sort hierarchy parameter'' is any function $\Next{h}{}$
satisfying the strict extension condition: $k < \Next{h}{k}$.
Moreover $\MIter{n}{h}$ will denote $h$ composed $n$ times.
For a natural number $n$,
the relation $\LSTAS{h}{L}{T}{U}{n}$ defined in \FigRef{lstas}{},
indicates that $U$ is the $n$-iterated ``static'' type of $T$
in $L$ according to $\Next{h}{}$.
\qed
\end{definition}

\begin{Figure}{Iterated static type assignment}{lstas}
\begin{tabular}{lrl}
natural number & $n$ & starting at $0$
\end{tabular}\\[0.5pc]
\Rule{}{\mathrm{sort}}{\LSTAS{h}{L}{\Sort{k}}{\Sort{(\Iter{n}{h}{k})}}{n}}\quad
\Rule{\Drop{0}{i}{L}{\LDec{K}{}{W}}\quad\LSTAS{h}{K}{W}{V}{0}}
     {\mathrm{zero}}{\LSTAS{h}{L}{\LRef{i}}{\LRef{i}}{0}}\NL
\Rule{\LSTAS{h}{\Pair{L}{}{W}}{T}{U}{n}}{\mathrm{bind}}
     {\LSTAS{h}{L}{\Bind{}{W}{T}}{\Bind{}{W}{U}}{n}}\NL
\Rule{\Drop{0}{i}{L}{\LDec{K}{}{W_1}}\quad\LSTAS{h}{K}{W_1}{V_1}{n}\quad\Lift{0}{i+1}{V_1}{V_2}}
     {\mathrm{succ}}{\LSTAS{h}{L}{\LRef{i}}{V_2}{n+1}}\quad
\Rule{\LSTAS{h}{L}{T}{U}{n}}{\mathrm{appl}}{\LSTAS{h}{L}{\Appl{V}{T}}{\Appl{V}{U}}{n}}\NL
\Rule{\Drop{0}{i}{L}{\LDef{K}{}{V}}\quad\LSTAS{h}{K}{V}{W_1}{n}\quad\Lift{0}{i+1}{W_1}{W_2}}
     {\mathrm{ldef}}{\LSTAS{h}{L}{\LRef{i}}{W_2}{n}}\quad
\Rule{\LSTAS{h}{L}{T}{U}{n}}{\mathrm{cast}}{\LSTAS{h}{L}{\Cast{W}{T}}{U}{n}}
\end{Figure}

This definition allows to say that $U$ is the static type of $T$ in $L$
when $\LSTAS{h}{L}{T}{U}{1}$, which differs in \FigRef{lstas}{cast} from the notion
$\STA{h}{L}{T}{U}$ defined by \cite{lambdadeltaJ1a} with the name $\mathrm{st}$.
For example we have
$\STA{h}{\LDec{L}{}{(\Cast{\Sort{k_1}}{\Sort{k_2}})}}{\LRef{0}}{(\Cast{\Sort{k_1}}{\Sort{k_2}})}$
but
$\LSTAS{h}{\LDec{L}{}{(\Cast{\Sort{k_1}}{\Sort{k_2}})}}{\LRef{0}}{\Sort{k_2}}{1}$.
Although $\LSTAS{h}{L}{T}{T}{0}$ does not hold in general,
we can prove that $\LSTAS{h}{L}{T_1}{T_2}{0}$ implies $\CPR{L}{T_1}{T_2}$
by $\delta$-expansion and $\epsilon$-contraction.
We remark that the rules of \FigRef{lstas}{} are syntax-oriented,
so the $n$-iterated static type of $T$ in $L$,
if it exists, is unique for any given $h$ and $n$.
See \Thref{lstas_props}{lstas_mono}.

\subsection{Degree Assignment}
\SubLabel{da_defs}

The ``degree'' of a term $T$ is a number $d$
indicating the position of $T$ in a type hierarchy.
A well-established definition assigns degree $1$ to the bottom sort
(for instance $\tau$ in $\LY$ or $\Sort{}$ in the $\LCube$ \cite{Brn92})
and degree $d+1$ to $T$ such that $\NTA{}{\G}{T}{U}$ when $U$ has degree $d$. 
In $\LD$, as in $\ECC$ \cite{Luo90}, there is no top sort and the degree is an integer.
So this definition prevents from reasoning by induction on the degree.


According to our policy,
the degree of a sort is a natural number given by a function $g$
that can be chosen at will as long as a compatibility condition is satisfied.

Once sorts are assigned a degree, the assignment extends to terms
accordingly.

\begin{definition}[degree assignment]
\ObjLabel{da}
Given a sort hierarchy parameter $\Next{h}{}$,
a ``sort degree parameter'' is any function $\Deg{h}{g}{}{}$
satisfying the compatibility condition:
if $\Deg{h}{g}{k}{d}$ then $\Deg{h}{g}{\Next{h}{k}}{d-1}$.
The relation $\DA{h}{g}{L}{T}{d}$ defined in \FigRef{da}{},
indicates that $T$ has degree $d$ 
in $L$ according to $\Next{h}{}$ and $\Deg{h}{g}{}{}$. 
\qed
\end{definition}

\begin{Figure}{Degree assignment}{da}
\begin{tabular}{lrl}
natural number & $d$ & starting at $0$\\
\end{tabular}\\[0.5pc]
\Rule{\Deg{h}{g}{k}{d}}{\mathrm{sort}}{\DA{h}{g}{L}{\Sort{k}}{d}}\quad
\Rule{\Drop{0}{i}{L}{\LDef{K}{}{V}}\quad\DA{h}{g}{K}{V}{d}}{\mathrm{ldef}}
     {\DA{h}{g}{L}{\LRef{i}}{d}}\quad
\Rule{\Drop{0}{i}{L}{\LDec{K}{}{W}}\quad\DA{h}{g}{K}{W}{d}}{\mathrm{ldec}}
     {\DA{h}{g}{L}{\LRef{i}}{d+1}}\NL
\Rule{\DA{h}{g}{\Pair{L}{}{W}}{T}{d}}{\mathrm{bind}}
     {\DA{h}{g}{L}{\Bind{}{W}{T}}{d}}\quad
\Rule{\DA{h}{g}{L}{T}{d}}{\mathrm{flat}}
     {\DA{h}{g}{L}{\Flat{V}{T}}{d}}
\end{Figure}

As we see, the term $T$ has a degree in $L$
iff the head variable occurrence of $T$ is hereditarily closed in $L$.
So having a degree, is equivalent to having a static type.


The refinement given next occurs in the proof of the preservation theorem
and is needed to prove that the reduction of valid terms preserves their degree.

\begin{definition}[refinement for preservation of degree]
\ObjLabel{lsubd}
\FigRef{lsubd}{} defines the relation $\LSubD{h}{g}{L_1}{L_2}$ stating that
$L_1$ refines $L_2$ for preservation of degree.
\qed
\end{definition}

\begin{Figure}{Refinement for preservation of degree}{lsubd}
\Rule{}{\mathrm{atom}}{\LSubD{h}{g}{\Null}{\Null}} \quad
\Rule{\LSubD{h}{g}{L_1}{L_2}}{\mathrm{pair}}
     {\LSubD{h}{g}{\Pair{L_1}{}{W}}{\Pair{L_2}{}{W}}} \NL
\Rule{\LSubD{h}{g}{L_1}{L_2}\quad\DA{h}{g}{L_1}{V}{d+1}\quad\DA{h}{g}{L_2}{W}{d}}{\mathrm{beta}}
     {\LSubD{h}{g}{\LDef{L_1}{}{(\Cast{W}{V})}}{\LDec{L_2}{}{W}}}
\end{Figure}

The main results on degree assignment and on its refinement
are in \SubRef{da_props}.

\subsection{Stratified Validity}
\SubLabel{snv_defs}

Our validity rules for a term, say $X$, in an environment $L$,
are designed to ensure that:

\begin{enumerate}

\item
a variable occurrence is closed in $L$;
the expected type of a declared variable occurrence
is valid in its environment;
the expansion of a defined variable occurrence
is valid in its environment;

\item
every subterm of $X$ is valid in its environment;

\item\ObjLabel{castability}
for a type annotation $\Cast{W}{V}$,
the inferred type of $V$ converts to $W$ in $L$; 

\item\ObjLabel{applicability}
for an application $\Appl{V}{T}$,
the inferred type of $T$ iterated enough times
converts to the form $\Abst{}{W}{U}$, and 
the inferred type of $V$ converts to $W$ in $L$.

\end{enumerate}

\Thref{}{applicability} is our extension of
the ``applicability condition'', which in a PTS is:

\begin{itemize}

\item
for an application $\Appl{V}{T}$,
the inferred type of $T$ iterated one time
converts to the form $\Prod{}{W}{U}$, and 
the inferred type of $V$ converts to $W$ in $L$.

\end{itemize}

In \cite{lambdadeltaJ1a}
we took by mistake the latter condition replacing $\Pi$ with $\lambda$,
rather than \Thref{}{applicability}.
The idea of \Thref{}{castability} and \Thref{}{applicability}
is that a valid term is typable and its types are the valid terms
that convert to its inferred type.
Notice that this property holds for the calculus of \cite{lambdadeltaJ1a}.
As for $\LY$ \cite{SPA2},
the preservation theorem for $\LD$
(stating that validity is preserved by reduction)
requires an induction on the degree
motivated by its extended applicability condition.

So we define a ``stratified'' validity
depending on a degree assignment
in that we require a positive degree for $V$ in 
\Thref{}{castability} and \Thref{}{applicability},
and in that the inferred type of $T$ is not iterated
more times than the degree of $T$ in \Thref{}{applicability}.
Intuitively, this is validity up to a degree.
%
%
The next ancillary relations are needed
in the formal statement of \Thref{}{castability}
and \Thref{}{applicability}.

\begin{definition}[decomposed computation and conversion]
\ObjLabel{scpds}
the relation $\SCPDS{h}{g}{L}{T_1}{T_2}{n}$
defined in \FigRef{scpds}{}, 
concatenates a degree-guarded iterated static type assignment and a computation,
and the corresponding conversion $\SCPES{h}{g}{L}{T_1}{T_2}{n_1}{n_2}$.
\qed
\end{definition}

\begin{Figure}{Stratified decomposed computation and conversion}{scpds}
\Rule{n \le d\quad\DA{h}{g}{L}{T_1}{d}\quad\LSTAS{h}{L}{T_1}{T}{n}\quad\CPRS{L}{T}{T_2}}
     {\mathrm{scpds}}{\SCPDS{h}{g}{L}{T_1}{T_2}{n}}\NL
\Rule{\SCPDS{h}{g}{L}{T_1}{T}{n_1}\quad\SCPDS{h}{g}{L}{T_2}{T}{n_2}}
     {\mathrm{scpes}}{\SCPES{h}{g}{L}{T_1}{T_2}{n_1}{n_2}}
\end{Figure}


\begin{definition}[stratified validity]
\ObjLabel{snv}
The relation $\SNV{h}{g}{L}{T}$ defined in \FigRef{snv}{}
states that the term $T$ is valid in $L$
with respect to the parameters $h$ and $\Deg{h}{g}{}{}$.
\qed
\end{definition}

\begin{Figure}{Stratified validity}{snv}
\Rule{}{\mathrm{sort}}{\SNV{h}{g}{L}{\Sort{k}}}\quad
\Rule{\Drop{0}{i}{L}{\Pair{K}{}{W}}\quad\SNV{h}{g}{K}{W}}
     {\mathrm{lref}}{\SNV{h}{g}{L}{\LRef{i}}}\quad
\Rule{\SNV{h}{g}{L}{W}\quad\SNV{h}{g}{\Pair{L}{}{W}}{T}}
     {\mathrm{bind}}{\SNV{h}{g}{L}{\Bind{}{W}{T}}}\NL
\Rule{\SNV{h}{g}{L}{U}\quad\SNV{h}{g}{L}{T}\quad
      \SCPDS{h}{g}{L}{U}{U_0}{0}\quad\SCPDS{h}{g}{L}{T}{U_0}{1}
     }{\mathrm{cast}}{\SNV{h}{g}{L}{\Cast{U}{T}}}\NL
\Rule{\SNV{h}{g}{L}{V}\quad\SNV{h}{g}{L}{T}\quad
      \SCPDS{h}{g}{L}{V}{W_0}{1}\quad\SCPDS{h}{g}{L}{T}{\Abst{}{W_0}{U_0}}{n}
     }{\mathrm{appl}}{\SNV{h}{g}{L}{\Appl{V}{T}}}
\end{Figure}


The refinement given next is needed to prove
the preservation \Thref{preserve_props}{}.

\begin{definition}[refinement for preservation of validity]
\ObjLabel{lsubsv}
\FigRef{lsubsv}{} defines
the relation $\LSubSV{h}{g}{L_1}{L_2}$ stating that
$L_1$ refines $L_2$ for preservation of stratified validity.
\qed
\end{definition}

\begin{Figure}{Refinement for preservation of stratified validity}{lsubsv}
\Rule{\SNV{h}{g}{L}{U}\quad\SNV{h}{g}{L}{T}\quad
      \MAll{n} n \le d \MImp \SCPES{h}{g}{L}{U}{T}{n}{n+1}
     }{\mathrm{hcast}}{\SHNV{h}{g}{L}{\Cast{U}{T}}{d}}\NL
\Rule{}{\mathrm{atom}}{\LSubSV{h}{g}{\Null}{\Null}} \quad
\Rule{\LSubSV{h}{g}{L_1}{L_2}}{\mathrm{pair}}
     {\LSubSV{h}{g}{\Pair{L_1}{}{W}}{\Pair{L_2}{}{W}}} \NL
\Rule{\LSubSV{h}{g}{L_1}{L_2}\quad
      \SHNV{h}{g}{L_1}{\Cast{W}{V}}{d}\quad\SNV{h}{g}{L_2}{W}\quad
       \DA{h}{g}{L_1}{V}{d+1}\quad\DA{h}{g}{L_2}{W}{d}
     }{\mathrm{beta}}
     {\LSubSV{h}{g}{\LDef{L_1}{}{(\Cast{W}{V})}}{\LDec{L_2}{}{W}}}
\end{Figure}

The main results on stratified validity and on its refinement
are in \SubRef{snv_props}.

\subsection{Closures}
\SubLabel{closures}

Most properties of $\LD$ are proved by structural induction,
but this proof method fails for some important results
like the confluence theorem.
In most cases a proof by induction on the ``proper subclosures''
provides for a good alternative.
The main exception is the preservation theorem.
Hereafter, a ``closure'' is an ordered pair $\CLOSURE{L}{T}$ 
where $T$ is a term closed in an environment $L$.
Intuitively, a subclosure of $\CLOSURE{L}{T}$ contains
a subterm of $T$ and a subenvironment of $L$.


The ``direct'' and ``transitive'' subclosures of $\CLOSURE{L}{T}$
are defined next.

\begin{definition}[direct subclosure]
\ObjLabel{fqu}
The relation $\FQU{L_1}{T_1}{L_2}{T_2}$ defined in \FigRef{fqu}{},
states that $\CLOSURE{L_2}{T_2}$ is a ``direct subclosure'' of $\CLOSURE{L_1}{T_1}$.

The relation $\FQUQ{L_1}{T_1}{L_2}{T_2}$ is its reflexive closure.
\qed
\end{definition}

\begin{Figure}{Direct subclosure}{fqu}
\Rule{}{\mathrm{pair}}{\FQU{L}{\Item{}{V}{T}}{L}{V}}\quad
\Rule{}{\mathrm{flat}}{\FQU{L}{\Flat{V}{T}}{L}{T}}\quad
\Rule{}{\mathrm{bind}}{\FQU{L}{\Bind{}{W}{T}}{\Pair{L}{}{W}}{T}}\NL
\Rule{}{\mathrm{lref}}{\FQU{\Pair{K}{}{W}}{\LRef{0}}{K}{W}}\quad
\Rule{\Drop{0}{m+1}{L}{K}\quad\Lift{0}{m+1}{T}{U}}{\mathrm{drop}}{\FQU{L}{U}{K}{T}}
\end{Figure}

The symbol $^?$ in $\FQUQ{L_1}{T_1}{L_2}{T_2}$
means ``one or none'' as for regular expressions.


\begin{definition}[subclosure and proper subclosure]
\ObjLabel{fqus}
\FigRef{fqus}{} defines 
the relation $\FQUS{L_1}{T_1}{L_2}{T_2}$ (subclosure)
as the (reflexive and) transitive closure of $\FQUQ{L_1}{T_1}{L_2}{T_2}$.
While the proper subclosure is
the transitive closure of $\FQU{L_1}{T_1}{L_2}{T_2}$.
\qed
\end{definition}

\begin{Figure}{Subclosure}{fqus}
\Rule{\FQUQ{L_1}{T_1}{L_2}{T_2}}{\mathrm{inj}}{\FQUS{L_1}{T_1}{L_2}{T_2}} \quad
\Rule{\FQUS{L_1}{T_1}{L}{T}\quad\FQUQ{L}{T}{L_2}{T_2}}{\mathrm{step}}
     {\FQUS{L_1}{T_1}{L_2}{T_2}}
\end{Figure}

We want to remark that generalizing the constant $0$
in \FigRef{fqu}{drop}, invalidates the commutation property
between the direct subclosure and the parallel reduction,
which is crucial for the preservation theorem.
Moreover the proper subclosure is well founded, as we see
by observing that each step of direct subclosure
decreases the sum of the term constructors in the closure.

\subsection{Extended Reduction}
\SubLabel{cpx_defs}

Having introduced subclosures,
we can take a glance at the strong normalization of ``$rst$-reduction''
\cite{SPAc4}, informally known as the ``big tree'' theorem.

Ideally, given a closure $\CLOSURE{L_1}{T_1}$ we define
a step $\step{r}$ along the axis of reducts,
a step $\step{s}$ along the axis of subclosures,
and a step $\step{t}$ along the axis of iterated static types:
\[
\Rule{\CPR{L_1}{T_1}{T_2}\quad T_1\neq T_2}{}
{\CLOSURE{L_1}{T_1}\step{r}\CLOSURE{L_1}{T_2}}\quad
\Rule{\FQU{L_1}{T_1}{L_2}{T_2}}{}
{\CLOSURE{L_1}{T_1}\step{s}\CLOSURE{L_2}{T_2}}\quad
\Rule{\LSTAS{h}{L_1}{T_1}{T_2}{1}\quad\DA{h}{g}{L_1}{T_1}{d+1}}{}
{\CLOSURE{L_1}{T_1}\step{t}\CLOSURE{L_1}{T_2}}
\EqLabel{rst}\]
and we are interested in proving that
any sequence of such steps staring from $\CLOSURE{L}{T}$,
is finite when $\SNV{h}{g}{L}{T}$.
This is the strong normalization of a relation $\step{rst}$
comprising the steps in \EqRef{rst}.
We remark that the interest in this result lies on
the very powerful induction principle
it provides for proving properties of valid terms.
We shall need this power for the preservation theorem.
Notice the side condition $T_1\neq T_2$
ensuring that $\step{r}$ is not reflexive
(we can prove that \DefRef{cpr}{} forbids
single-step reduction cycles),
and the side condition $\DA{h}{g}{L_1}{T_1}{d+1}$
ensuring that $\step{t}$ cannot be applied indefinitely
(otherwise,
$\CLOSURE{L}{\Sort{k}}\step{t}\CLOSURE{L}{\Sort{(\Next{h}{k})}}$
is always possible).
 
As to the proof of the ``big tree'' theorem,
we take a sequence of steps starting from a valid closure
and we would like to commute adjacent steps
until the steps of the same kind are clustered.
At that point, an infinite sequence would lead to an infinite cluster,
contradicting either strong normalization of reduction (steps of kind $\step{r}$),
or well-foundedness of subclosures (steps of kind $\step{s}$), 
or else finiteness of degree in the given system of reference $\Deg{h}{g}{}{}$
(steps of kind $\step{t}$).

Unfortunately, it is a matter of fact that a step $\step{r}$ and a step $\step{t}$
may not commute.
Consider the $\beta$-redex $T_1 = \Appl{V}{\Abst{}{W_1}{\LRef{0}}}$
and its $\beta$-reductum $T_2 = \Abbr{}{(\Cast{W_1}{V})}{\LRef{0}}$.
Then the  static type of $T_1$ is $U_1 = \Appl{V}{\Abst{}{W_1}{(\Lift{0}{1}{W_1}{})}}$,
and its $\beta$-reductum is $U_0 = \Abbr{}{(\Cast{W_1}{V})}{(\Lift{0}{1}{W_1}{})}$.
Moreover, let $W_2$ be the static type of $V$,
then the static type of $T_2$ is $U_2 = \Abbr{}{(\Cast{W_1}{V})}{(\Lift{0}{1}{W_2}{})}$.
Now compare $U_0$ and $U_2$, that is: $W_1$ and $W_2$
(respectively, the ``expected'' and the ``inferred'' type of $V$).
Even assuming that $T_1$ is valid,
these terms are the same one just up to conversion.
It is an even simpler matter of fact that a step $\step{s}$ and a step $\step{t}$
may not commute.
Consider the term $T_1$ and its static type $U_1$,
take $V$ as a subterm of $T_1$ and its static type $W_2$.
Yet $W_2$ is not a subterm of $U_1$
and may just be related to $W_1$ by conversion when $T_1$ is valid.

Anyway, a step $\step{r}$ and a step $\step{s}$ commute with the help
of reduction for environments.
In fact, we can prove the ``pentagon''
(\IE a proposition on five closures connected by five relations)
of \Ruleref{cpr_fqu_conf},
in which the reduction for environments emerges in the case
$L_1 = \LDec{K}{}{V_1}$ and $T_1 = \LRef{0}$.
\[
\Rule{\FQU{L_1}{T_1}{K}{V_1}\quad\CPR{K}{V_1}{V_2}}{}
{\MEx{L_2,T_2}
 \LPR{L_1}{L_2}\MAnd\CPR{L_2}{T_1}{T_2}\MAnd\FQU{L_2}{T_2}{K}{V_2}
}
\EqLabel{cpr_fqu_conf}\]
These considerations lead us to define the ``extended reduction'' such that:

\begin{enumerate}

\item 
it extends ordinary reduction (\IE a $\step{r}$ step)
by supporting a $\step{t}$ step;

\item\ObjLabel{smooth}
it preserves strong normalization ``smoothly''
in that little effort is expected in updating the proof
that works for ordinary reduction \cite{lambdadeltaJ1a};

\item
it preserves the commutation with subclosures
in the form of \Ruleref{cpr_fqu_conf}.

\end{enumerate}

Extended reduction is our counterpart of
``$rt$-reduction'' \cite{SPAc4}.
It comprises the transitions of \DefRef{transition}{}
and the ones listed next.

\begin{definition}[extended transitions]
\ObjLabel{xtransition}
\FigRef{xtransition}{} defines the extended redexes
and their associated transitions $t$, $l$, and $e$,
which depend on a sort degree parameter $\Deg{h}{g}{}{}$
and on an environment $L$.
The transitions $t$, $l$ and $e$ respectively
replace a sort, a declared variable, and a type annotation
with their expected type.
\qed
\end{definition}

\begin{Figure}{Extended transitions}{xtransition}
\Rule{\Deg{h}{g}{k}{d+1}}{s}{\CQX{h}{g}{L}{\Sort{k}}{\Sort{(\Next{h}{k})}}} \quad
\Rule{\Drop{0}{i}{L}{\LDec{K}{}{W_1}}\quad\Lift{0}{i+1}{W_1}{W_2}}{l}
     {\CQX{h}{g}{L}{\LRef{i}}{W_2}} \quad
\Rule{}{e}{\CQX{h}{g}{L}{\Cast{U}{T}}{U}}
\end{Figure}

The transitions $t$ and $l$ provide the support for the $t$-step of \EqRef{rst},
while the transition $e$ allows the ``smooth'' update of
the strong normalization proof advocated by \Thref{}{smooth},
as we shall see.
%
%
We present extended reduction in its parallel form to extend \DefRef{cpr}{},
with respect to which we add the rules for the transitions $t$ and $e$.
Rule $\delta$ is modified as well to include the support for transition $l$. 
\DefRef{lpr}{} and \DefRef{cprs}{} are extended accordingly.
The point of extended reduction compared to static type assignment, is that
its context rules allow to compute the static type
in every subterm and not just along the ``spine''.

\begin{definition}[extended parallel reduction for terms]
\ObjLabel{cpx}
The relation $\CPX{h}{g}{L}{T_1}{T_2}$ of \FigRef{cpx}{} indicates
one step of extended parallel reduction from $T_1$ to $T_2$ in $L$.
\qed
\end{definition}

\begin{Figure}{Extended parallel reduction for terms (single step)}{cpx}
\Rule{\CPX{h}{g}{L}{W_1}{W_2}\quad\CPX{h}{g}{\Pair{L}{}{W_1}}{T_1}{T_2}}{\mathrm{bind}}
     {\CPX{h}{g}{L}{\Bind{}{W_1}{T_1}}{\Bind{}{W_2}{T_2}}}\quad
\Rule{\CPX{h}{g}{L}{V_1}{V_2}\quad\CPX{h}{g}{L}{T_1}{T_2}}{\mathrm{flat}}
     {\CPX{h}{g}{L}{\Flat{V_1}{T_1}}{\Flat{V_2}{T_2}}}\NL
\Rule{}{\mathrm{atom}}{\CPX{h}{g}{L}{\Atom{i}}{\Atom{i}}}\quad
\Rule{\Drop{0}{i}{L}{\Pair{K}{}{W_1}}\quad\CPX{h}{g}{K}{W_1}{W_2}\quad\Lift{0}{i+1}{W_2}{V_2}}{\delta}
     {\CPX{h}{g}{L}{\LRef{i}}{V_2}}\NL
\Rule{\CPX{h}{g}{L}{V_1}{V_2}\quad\CPX{h}{g}{L}{W_1}{W_2}\quad\CPX{h}{g}{\LDec{L}{}{W_1}}{T_1}{T_2}}{\beta}
     {\CPX{h}{g}{L}{\Appl{V_1}{\Abst{}{W_1}{T_1}}}{\Abbr{}{(\Cast{W_2}{V_2})}{T_2}}}\quad
\Rule{\Deg{h}{g}{k}{d+1}}{s}{\CPX{h}{g}{L}{\Sort{k}}{\Sort{(\Next{h}{k})}}}\NL
\Rule{\CPX{h}{g}{L}{V_1}{V_2}\quad\Lift{0}{1}{V_2}{W_2}\quad\CPX{h}{g}{L}{U_1}{U_2}\quad\CPX{h}{g}{\LDef{L}{}{U_1}}{T_1}{T_2}}{\theta}
     {\CPX{h}{g}{L}{\Appl{V_1}{\Abbr{}{U_1}{T_1}}}{\Abbr{}{U_2}{\Appl{W_2}{T_2}}}}\NL
\Rule{\CPX{h}{g}{\LDef{L}{}{V}}{U_1}{U_2}\quad\Lift{0}{1}{T_2}{U_2}}{\zeta}
     {\CPX{h}{g}{L}{\Abbr{}{V}{U_1}}{T_2}}\quad
\Rule{\CPX{h}{g}{L}{T_1}{T_2}}{\epsilon}{\CPX{h}{g}{L}{\Cast{U}{T_1}}{T_2}}\quad
\Rule{\CPX{h}{g}{L}{U_1}{U_2}}{e}{\CPX{h}{g}{L}{\Cast{U_1}{T}}{U_2}}
\end{Figure}


\begin{definition}[extended parallel reduction for environments]
\ObjLabel{lpx}
\FigRef{lpx}{} defines $\LPX{h}{g}{L_1}{L_2}$ indicating
one step of extended parallel reduction from $L_1$ to $L_2$.
\qed
\end{definition}

\begin{Figure}{Extended parallel reduction for environments (single step)}{lpx}
\Rule{}{\mathrm{atom}}{\LPX{h}{g}{\Null}{\Null}} \quad
\Rule{\LPX{h}{g}{L_1}{L_2}\quad\CPX{h}{g}{L_1}{W_1}{W_2}}{\mathrm{pair}}
     {\LPX{h}{g}{\Pair{L_1}{}{W_1}}{\Pair{L_2}{}{W_2}}}
\end{Figure}


\begin{definition}[extended parallel computation]
\ObjLabel{cpxs}
The relation $\CPXS{h}{g}{L}{T_1}{T_2}$ is the transitive closure of $\CPX{h}{g}{L}{T_1}{T_2}$,
while $\LPXS{h}{g}{L_1}{L_2}$ is the transitive closure of $\LPX{h}{g}{L_1}{L_2}$.
\FigRef{cpxs}{} defines $\CPXS{h}{g}{L}{T_1}{T_2}$ for reference.
$\LPXS{h}{g}{L_1}{L_2}$ is defined in the same manner.
\qed
\end{definition}

\begin{Figure}{Extended parallel computation for terms (multi-step)}{cpxs}
\Rule{\CPR{L}{T_1}{T_2}}{\mathrm{inj}}{\CPXS{h}{g}{L}{T_1}{T_2}} \quad
\Rule{\CPXS{h}{g}{L}{T_1}{T}\quad\CPX{h}{g}{L}{T}{T_2}}{\mathrm{step}}
     {\CPXS{h}{g}{L}{T_1}{T_2}}
\end{Figure}

The main results on extended reduction are in \SubRef{cpx_props}.

\subsection{Atomic Arity Assignment}
\SubLabel{aaa_defs}


Atomic arities are simple types representing the abstract syntax of
our reducibility candidates, introduced in the next \SubRef{gcp_defs},
and replace in this role the more complex ``binary arities''
used by \cite{lambdadeltaJ1a}.
Such arities are assigned to terms according to well-established rules.
The term ``atomic'' indicates that the base constructor of these arities
is not structured.

\begin{definition}[atomic arities and their assignment]
\ObjLabel{aaa}
Atomic arities are the simple types defined in \FigRef{aaa}{}.
$\Full$ is the base type, and $\Impl{B}{A}$ is the arrow type.
Moreover the relation $\AAA{L}{T}{A}$, defined in \FigRef{aaa}{} as well,
assigns the arity $A$ to $T$ in $L$.
\qed
\end{definition}

\begin{Figure}{Atomic arities and their assignment}{aaa}
\begin{tabular}{lrl}
atomic arity & $A, B$ & $\GLet \Full \GOr \Impl{B}{A}$
\end{tabular}\\[0.5pc]
\Rule{}{\mathrm{sort}}{\AAA{L}{\Sort{k}}{\Full}}\quad
\Rule{\Drop{0}{i}{L}{\Pair{K}{}{W}}\quad\AAA{K}{W}{B}}
     {\mathrm{lref}}{\AAA{L}{\LRef{i}}{B}}\quad
\Rule{\AAA{L}{V}{B}\quad\AAA{\LDef{L}{}{V}}{T}{A}}
     {\mathrm{abbr}}{\AAA{L}{\Abbr{}{V}{T}}{A}}\NL
\Rule{\AAA{L}{W}{B}\quad\AAA{\LDec{L}{}{W}}{T}{A}}
     {\mathrm{abst}}{\AAA{L}{\Abst{}{W}{T}}{\Impl{B}{A}}}\quad
\Rule{\AAA{L}{V}{B}\quad\AAA{L}{T}{\Impl{B}{A}}}
     {\mathrm{appl}}{\AAA{L}{\Appl{V}{T}}{A}}\quad
\Rule{\AAA{L}{U}{A}\quad\AAA{L}{T}{A}}
     {\mathrm{cast}}{\AAA{L}{\Cast{U}{T}}{A}}
\end{Figure}

As a type assignment, $\AAA{L}{T}{A}$ has two interpretations:
either $A$ is the simple type of the object $T$,
or $A$ is the simple type associated to the type $T$.
In this respect, consider the map $T \mapsto \Star{T}$
that turns a term of $\LD$ into a term of $\LR$
by operating the necessary $\delta\epsilon\zeta$-reductions on $T$ and 
by replacing every abstraction in $T$, say $\Abst{}{W}{}$ in the environment $K$,
with the abstraction $\Abst{}{B}{}$ such that $\AAA{K}{W}{B}$.
Moreover, extend this map to environment entries.
Then the rules of \FigRef{aaa}{} clearly show that
$\AAA{L}{T}{A}$ implies $\NTA{}{\Star{L}}{\Star{T}}{A}$ in $\LR$
(we did not prove this fact formally yet).


We need the next refinement in order to prove the preservation of atomic arity.

\begin{definition}[refinement for preservation of atomic arity]
\ObjLabel{lsuba}
\FigRef{lsuba}{} defines the relation $\LSubA{L_1}{L_2}$ stating that
$L_1$ refines $L_2$ for preservation of atomic arity.
\qed
\end{definition}

\begin{Figure}{Refinement for preservation of atomic arity}{lsuba}
\Rule{}{\mathrm{atom}}{\LSubA{\Null}{\Null}}\quad
\Rule{\LSubA{L_1}{L_2}}{\mathrm{pair}}
     {\LSubA{\Pair{L_1}{}{W}}{\Pair{L_2}{}{W}}}\quad
\Rule{\LSubA{L_1}{L_2}\quad\AAA{L_1}{\Cast{W}{V}}{B}\quad\AAA{L_2}{W}{B}}{\mathrm{beta}}
     {\LSubA{\LDef{L_1}{}{(\Cast{W}{V})}}{\LDec{L_2}{}{W}}}
\end{Figure}

Our results on atomic arity assignment and on its refinement
are in \SubRef{aaa_props}.

\subsection{Reducibility Candidates}
\SubLabel{gcp_defs}

The ``reducibility candidates'' are subsets of $\lambda$-terms
satisfying certain ``saturation'' conditions
used to establish properties of some typed $\lambda$-calculi.
In this article we use subsets of closures,
closed under the next seven conditions,
to prove that every term having an atomic arity in an environment,
is strongly normalizing with respect to extended reduction.
%
%
We start by defining the normal terms and the strongly normalizing terms.
These definitions take into account the fact that extended reduction
is reflexive and forbids single-step cycles.

\begin{definition}[normal terms and strongly normalizing terms]
\ObjLabel{cnx}
\FigRef{cnx}{} defines $\CNX{h}{g}{L}{T}$ and $\CSX{h}{g}{L}{T}$,
stating respectively that $T$ in $L$ is normal, and that $T$ in $L$ is strongly normalizing,
for extended reduction with respect to $h$ and $\Deg{h}{g}{}{}$. 
\qed
\end{definition}

\begin{Figure}{Normal terms and strongly normalizing terms for extended reduction}{cnx}
\Rule{\MAll{T_2} (\CPX{h}{g}{L}{T_1}{T_2}) \MImp (T_1 = T_2)}
     {\mathrm{cnx}}{\CNX{h}{g}{L}{T_1}}\NL
\Rule{\MAll{T_2} (\CPX{h}{g}{L}{T_1}{T_2}) \MImp (T_1 \neq T_2) \MImp (\CSX{h}{g}{L}{T_2})}
     {\mathrm{csx}}{\CSX{h}{g}{L}{T_1}}
\end{Figure}

Notice that $\CSX{h}{g}{L}{T_1}$ is inductively defined
with base case $\CNX{h}{g}{L}{T_1}$.
In fact, $\CNX{h}{g}{L}{T_1}$ implies $\CSX{h}{g}{L}{T_1}$
since $\CSX{h}{g}{L}{T_2}$ holds by ``ex falso quodlibet''.


Given a property $\R$ on closures,
a reducibility candidate $\C$ for $\R$ 
is a subset of closures satisfying $\R$, that we describe
constructively as a relation.
So we may write $\MAppl{\C}{T,L}$ for $\CLOSURE{L}{T} \in \C$.
Our reducibility theorem states that if $\R$ is a reducibility candidate,
then every closure with an atomic arity belongs to some $\C$
and therefore, satisfies $\R$.
In formal words we can prove that $\AAA{L}{T}{A}$ implies $\MAppl{\R}{T,L}$.
Strong normalization follows from choosing
$\CSX{h}{g}{L}{T}$ as $\MAppl{\R}{T,L}$.

We are going to present Tait-style reducibility candidates \cite{Tai75},
which differ from the Girard-style reducibility candidates \cite{GTL89}
used by \cite{lambdadeltaJ1a}, in that
condition ``CR2'' is not required
(\IE $\InGCR{L}{T_1}{\C}$ and $\CPX{h}{g}{L}{T_1}{T_2}$
imply $\InGCR{L}{T_2}{\C}$), and notably, in that 
closures without an arity are allowed in $\C$. 
This simplification gives us more freedom for constructing elements of $\C$.

\begin{definition}[reducibility candidate]
\ObjLabel{gcr}
Given a subset $\R$ of closures satisfying
\FigRef{}{S} and \FigRef{}{S0} of \FigRef{gcr}{},
a reducibility candidate $\C$ for $\R$ is a subset of closures
satisfying \FigRef{}{S1} to \FigRef{}{S7} of \FigRef{gcr}{}.
The notation ``$\InGCRV{L}{\vect{V}}{\R}$''
means ``$\InGCR{L}{V}{\R}$ for each component $V$ of $\vect{V}$''.
\qed
\end{definition}

\begin{Figure}{Reducibility candidate}{gcr}
\Rule{\InGCR{L}{\Appl{\Sort{k}}{T}}{\R}}
     {\mathrm{S}}{\InGCR{L}{T}{\R}}\quad
\Rule{\Drop{l}{m}{L}{K}\quad\InGCR{K}{T}{\R}\quad\Lift{l}{m}{T}{U}}
     {\mathrm{S0}}{\InGCR{L}{U}{\R}}\NL
\Rule{\InGCR{L}{T}{\C}}
     {\mathrm{S1}}{\InGCR{L}{T}{\R}}\quad
\Rule{\InGCRV{L}{\vect{V}}{\R}\quad\Simple{T}\quad\CNX{h}{g}{L}{T}}
     {\mathrm{S2}}{\InGCR{L}{\ApplV{\vect{V}}{}{T}}{\C}}\quad
\Rule{\InGCRV{L}{\vect{V}}{\R}}
     {\mathrm{S4}}{\InGCR{L}{\ApplV{\vect{V}}{}{\Sort{k}}}{\C}}\NL
\Rule{\InGCR{L}{\ApplV{\vect{V}}{}{\Abbr{}{(\Cast{W}{V})}{T}}}{\C}}
     {\mathrm{S3}}{\InGCR{L}{\ApplV{\vect{V}}{}{\Appl{V}{\Abst{}{W}{T}}}}{\C}}\quad
\Rule{\Drop{0}{i}{L}{\Pair{K}{}{W_1}}\quad\Lift{0}{i+1}{W_1}{W_2}\quad\InGCR{L}{\ApplV{\vect{V}}{}{W_2}}{\C}}
     {\mathrm{S5}}{\InGCR{L}{\ApplV{\vect{V}}{}{\LRef{i}}}{\C}}\NL
\Rule{\InGCR{L}{V}{\R}\quad\LiftV{0}{1}{\vect{V_1}}{\vect{V_2}}\quad\InGCR{\LDef{L}{}{V}}{\ApplV{\vect{V_2}}{}{T}}{\C}}
     {\mathrm{S6}}{\InGCR{L}{\ApplV{\vect{V_1}}{}{\Abbr{}{V}{T}}}{\C}}\quad
\Rule{\InGCR{L}{\ApplV{\vect{V}}{}{U}}{\C}\quad\InGCR{L}{\ApplV{\vect{V}}{}{T}}{\C}}
     {\mathrm{S7}}{\InGCR{L}{\ApplV{\vect{V}}{}{\Cast{U}{T}}}{\C}} 
\end{Figure}


Compound reducibility candidated are built through well-established constructions.
For now we are interested just in the ``functional'' construction introduced next.

\begin{definition}[function subset]
\ObjLabel{cfun}
If $\C_1$ and $\C_2$ are subsets of closures,
then the subset $\CFun{\C_1}{\C_2}$ is defined in \FigRef{cfun}{}.
\qed
\end{definition}

\begin{Figure}{Function subset}{cfun}
\Rule{\MAll{L,W,U,\vect{c}}\Drops{\vect{c}}{L}{K}\MImp\Lifts{\vect{c}}{T}{U}
      \MImp\InGCR{L}{W}{\C_1}\MImp\InGCR{L}{\Appl{W}{U}}{\C_2}}
     {\mathrm{cfun}}{\InGCR{K}{T}{\CFun{\C_1}{\C_2}}}
\end{Figure}

Notice that the environment $L$ of $W$ possibly extends the environment $K$ of $T$.
as is required to prove \FigRef{gcr}{S6},
in which $L$ and $\LDef{L}{}{V}$ have diffrent length.


\begin{definition}[interpretation of an atomic arity]
\ObjLabel{acr}
For a subset of closures $\R$,
the subset of closures $\ACR{\R}{}{A}$
associated to the atomic arity $A$,
is defined in \FigRef{acr}{}.
\qed
\end{definition}

\begin{Figure}{Interpretation of an atomic arity as a subset of closures}{acr}
$\ACR{\R}{}{\Full} = \R$ \quad
$\ACR{\R}{}{\CFun{B}{A}} = \Impl{\ACR{\R}{}{B}}{\ACR{\R}{}{A}}$
\end{Figure}


The refinement given next is needed to state
the general form of the reducibility theorem.
In particular it expresses in $\LD$ a simultaneus substitution
like the one occurring in the reducibility theorem for $\SF$,
which is stated using the ``parametric'' reducibility of \cite{GTL89}.

\begin{definition}[refinement for reducibility]
\ObjLabel{lsubc}
The relation $\LSubC{\R}{}{L_1}{L_2}$
defined in \FigRef{lsubc}{}, states that
$L_1$ refines $L_2$ for reducibility.
\qed
\end{definition}

\begin{Figure}{Refinement for reducibility}{lsubc}
\Rule{}{\mathrm{atom}}{\LSubC{\R}{}{\Null}{\Null}}\quad
\Rule{\LSubC{\R}{}{L_1}{L_2}}{\mathrm{pair}}
     {\LSubC{\R}{}{\Pair{L_1}{}{W}}{\Pair{L_2}{}{W}}}\NL
\Rule{\LSubC{\R}{}{L_1}{L_2}\quad\InGCR{L_1}{W}{\ACR{\R}{}{B}}\quad
      \InGCR{L_1}{V}{\ACR{\R}{}{B}}\quad\AAA{L_2}{W}{B}}{\mathrm{beta}}
     {\LSubC{\R}{}{\LDef{L_1}{}{(\Cast{W}{V})}}{\LDec{L_2}{}{W}}}
\end{Figure}

The main results on candidates and on their refinement
are in \SubRef{gcp_props}.

\subsection{Lazy Equivalence}
\SubLabel{lleq_defs}

In \SubRef{gcp_defs} we defined the normalization
of a term $T$ in the environment $L$
that, by \Thref{gcp_props}{aaa_csx}, is implied by $\AAA{L}{T}{A}$.
Now we would like to define the normalization of an environment $L$
in such a way that $\AAA{L}{T}{A}$ implies it as well.
However, we notice from \FigRef{aaa}{lref} that $\AAA{L}{T}{A}$ constrains just
the entries of $L$ hereditarily referred by $T$.
Thus, following the paradigm of \FigRef{cnx}{csx},
we need to replace $T_1 \neq T_2$
with the negated equivalence $\NLLEq{}{T}{L_1}{L_2}$ 
stating that $L_1$ and $L_2$ differ in one entry hereditarily referred by $T$.
The corresponding equivalence is defined next.
Working under the assumption that every entry of $L$ has an arity,
simplifies the development significantly,
but we aim at showing that this assumption is redundant.


\begin{definition}[lazy equivalence for environments]
\ObjLabel{lleq}
The relation $\LLEq{l}{T}{L_1}{L_2}$
defined in \FigRef{lleq}{},
states that the environments $L_1$ and $L_2$ are equal
in the entries hereditarily referred by the term $T$ at level $l$.
\qed
\end{definition}

\begin{Figure}{Lazy equivalence for environments}{lleq}
\Rule{\Length{L_1} = \Length{L_2}}
     {\mathrm{sort}}{\LLEq{l}{\Sort{k}}{L_1}{L_2}} \quad
\Rule{\Length{L_1} = \Length{L_2} \quad i < l}
     {\mathrm{skip}}{\LLEq{l}{\LRef{i}}{L_1}{L_2}} \quad
\Rule{\Length{L_1} = \Length{L_2} \quad \Length{L_1} \le i \quad \Length{L_2} \le i}
     {\mathrm{free}}{\LLEq{l}{\LRef{i}}{L_1}{L_2}} \NL
\Rule{l \le i \quad
      \Drop{0}{i}{L_1}{\Pair{K_1}{}{W}} \quad \Drop{0}{i}{L_2}{\Pair{K_2}{}{W}} \quad
      \LLEq{0}{W}{K_1}{K_2}
     }{\mathrm{lref}}{\LLEq{l}{\LRef{i}}{L_1}{L_2}} \NL
\Rule{\LLEq{l}{W}{L_1}{L_2} \quad\LLEq{l+1}{T}{\Pair{L_1}{}{W}}{\Pair{L_2}{}{W}}}
     {\mathrm{bind}}{\LLEq{l}{\Bind{}{W}{T}}{L_1}{L_2}} \quad
\Rule{\LLEq{l}{V}{L_1}{L_2} \quad\LLEq{l}{T}{L_1}{L_2}}
     {\mathrm{flat}}{\LLEq{l}{\Flat{V}{T}}{L_1}{L_2}}
\end{Figure}

This relation is an equivalence that we term ``lazy''
since we check for equality just the entries of $L_1$ and $L_2$
hereditarily referred by $T$.
%
%
Its nonrecursive definition \EqRef{lleq_alt} uses ``hereditarily free'' variables.
We say that a variable is ``hereditarily free'' in $\CLOSURE{L}{T}$
when it is free in $T$ or in an entry of $L$ hereditarily referred by $T$.
This idea is expressed formally by the next definition.
Alternatively, we can say that a variable is hereditarily free in $\CLOSURE{L}{T}$
when it is free in a $\delta l$-reduct of $T$ in $L$
(see \DefRef{transition}{} and \DefRef{xtransition}{}
for $\delta$-reducts and $l$-reducts respectively).

\begin{definition}[hereditarily free variables]
\ObjLabel{frees}
\FigRef{frees}{} defines $\FreeS{i}{l}{L}{T}$,
stating that the variable introduced at depth $i$ is hereditarily free 
at level $l$ in $\CLOSURE{L}{T}$.
\qed
\end{definition}

\begin{Figure}{Hereditarily free variables}{frees}
\Rule{\MAll{T} \NLift{i}{1}{T}{U}}
     {\mathrm{eq}}{\FreeS{i}{l}{L}{U}}\NL
\Rule{l \le j \quad j < i \quad (\MAll{T} \NLift{j}{1}{T}{U}) \quad
      \Drop{0}{j}{L}{\Pair{K}{}{W}} \quad \FreeS{i-j-1}{0}{K}{W}
     }{\mathrm{be}}{\FreeS{i}{l}{L}{U}}
\end{Figure}

We need the level $l$ to reason about hereditarily free variables
in the scope of binders.

For example we can prove that
$\FreeS{i+1}{l+1}{\Pair{L}{}{W}}{U}$ implies
$\FreeS{i}{l}{L}{\Bind{}{W}{U}}$.


An ancillary operation that we term ``pointwise union'' at level $l$
of $L_1$ and $L_2$ with respect to $T$ (notation: $\LLOr{l}{T}{L_1}{L_2}{}$), 
leads to important properties connecting lazy equivalence and
parallel reduction for environments such as \Thref{lleq_props}{lleq_lpx_trans}.
The environment $\LLOr{l}{T}{L_1}{L_2}{}$
is defined when $\Length{L_1} = \Length{L_2}$
and its $i$-th entry is taken from $L_2$
if $l \le i$ and $\FreeS{i}{l}{L_1}{T}$,
or else it is taken from $L_1$.

\begin{definition}[pointwise union]
\ObjLabel{llor}
The partial operation $\LLOr{l}{T}{L_1}{L_2}{}$ defined in \FigRef{llor}{},
constructs the ``pointwise union'' at level $l$
of $L_1$ and $L_2$ with respect to $T$.
\qed
\end{definition}

\begin{Figure}{Pointwise union of environments}{llor}
\Rule{}{\mathrm{atom}}{\LLOr{l}{U}{\Null}{\Null}{\Null}}\NL
\Rule{\LLOr{l}{U}{L_1}{L_2}{L} \quad
      \NFreeS{\Length{L_1}}{l}{\Bind{1}{W_1}{L_1}}{U}
     }{\mathrm{sn}}
     {\LLOr{l}{U}{\Bind{1}{W_1}{L_1}}{\Bind{2}{W_2}{L_2}}{\Bind{1}{W_1}{L}}}\NL
\Rule{\LLOr{l}{U}{L_1}{L_2}{L} \quad\!\! l \le \Length{L_1} \quad
      \FreeS{\Length{L_1}}{l}{\Bind{1}{W_1}{L_1}}{U}
     }{\mathrm{dx}}
     {\LLOr{l}{U}{\Bind{1}{W_1}{L_1}}{\Bind{2}{W_2}{L_2}}{\Bind{2}{W_2}{L}}}
\end{Figure}


Lazy equivalence yields environments $L$
normalizing with respect to $T$
(notation $\LSX{h}{g}{l}{T}{L}$) such that 
$\CSX{h}{g}{L}{T}$ implies $\LSX{h}{g}{l}{T}{L}$ for every level $l$.
See \Thref{lsx_props}{csx_lsx}.

\begin{definition}[strongly normalizing environments]
\ObjLabel{lsx}
\FigRef{lsx}{} defines the relation $\LSX{h}{g}{l}{T}{L}$,
stating that $L$
is strongly normalizing at level $l$ for extended reduction
with respect to the parameters $\Next{h}{}$ and $\Deg{h}{g}{}{}$,
and with respect to $T$.
\qed
\end{definition}

\begin{Figure}{Strongly normalizing environments for extended reduction}{lsx}
\Rule{\MAll{L_2} (\LPX{h}{g}{L_1}{L_2}) \MImp (\NLLEq{l}{T}{L_1}{L_2}) \MImp (\LSX{h}{g}{l}{T}{L_2})}
     {\mathrm{lsx}}{\LSX{h}{g}{l}{T}{L_1}}
\end{Figure}

Notice the common structure of \FigRef{lsx}{lsx} and \FigRef{cnx}{csx}.


An ancillary predicate on environments $\LCoSX{h}{g}{l}{L}$
is needed in \Thref{lsx_props}{lsx_cpx_trans_lcosx}.
It serves $\LSX{h}{g}{l}{T}{L}$ as, for instance,
$\LPR{L_1}{L_2}$ serves $\CPR{L}{T_1}{T_2}$
in \Thref{cpr_props}{cpr_conf_lpr}.

\begin{definition}[strongly co-normalizing environments]
\ObjLabel{lcosx}
The predicate $\LCoSX{h}{g}{l}{L}$
defined in \FigRef{lcosx}{},
states that the $L$ is ``co-normalizing'' at level $l$
with respect to $\Next{h}{}$ and $\Deg{h}{g}{}{}$.
This means that every $i$-th entry of $L$ such that $i < l$,
is strongly normalizing according to \DefRef{lsx}{}.
``Co-normalizing'' refers to ``$i < l$'' as opposed to ``$l \le i$''.
\qed
\end{definition}

\begin{Figure}{Strongly co-normalizing environments for extended reduction}{lcosx}
\Rule{}{\mathrm{atom}}{\LCoSX{h}{g}{l}{\Null}}\quad
\Rule{\LCoSX{h}{g}{0}{L}}{\mathrm{skip}}{\LCoSX{h}{g}{0}{(\Pair{L}{}{W}})}\quad
\Rule{\LCoSX{h}{g}{l}{L}\quad\LSX{h}{g}{l}{W}{L}}{\mathrm{pair}}{\LCoSX{h}{g}{l+1}{(\Pair{L}{}{W}})}
\end{Figure}

The main results on lazy equivalence, pointwise union,
and strongly normalizing environments are in \SubRef{lleq_props}.
Comparing \SubRef{relocation} with \SubRef{lleq_defs},
the reader should notice that the notions defined here 
depend just on the component $l$ of the relocation pair $\CLOSURE{l}{m}$.
In this perspective, the given definitions
are the general ones instantiated for $m=\infty$. 
We present them in this form because
the parameter $m$ turns out to be unnecessary for now.

\subsection{Very Big Trees}
\SubLabel{fpb_defs}

With the help of lazy equivalence,
we can finally define our counterpart of 
``$rst$-reduction'' \cite{SPAc4}, which we
informally introduced in \SubRef{cpx_defs}.
This counterpart is actually an extension that operates on
closures. We term it ``$qrst$-reduction'' because we add a 
``$q$-step'' of lazy equivalence.


\begin{definition}[$qrst$-reduction and $qrst$-computation]
\ObjLabel{fpbs}
The relation $\FPBQ{h}{g}{L_1}{T_1}{L_2}{T_2}$
defined in \FigRef{fpbs}{},
denotes one step of $qrst$-reduction
from the closure $\CLOSURE{L_1}{T_1}$ to the closure $\CLOSURE{L_2}{T_2}$
with respect to the parameters $\Next{h}{}$ and $\Deg{h}{g}{}{}$.
The relation $\FPBS{h}{g}{L_1}{T_1}{L_2}{T_2}$ ($qrst$-computation),
defined in \FigRef{fpbs}{} as well,
is the is the (reflexive and) transitive closure of
$\FPBQ{h}{g}{L_1}{T_1}{L_2}{T_2}$.
\qed
\end{definition}

\begin{Figure}{$qrst$-reduction and $qrst$-computation}{fpbs}
\Rule{\CPX{h}{g}{L}{T_1}{T_2}}{\mathrm{cpx}}{\FPBQ{h}{g}{L}{T_1}{L}{T_2}}\quad
\Rule{\LPX{h}{g}{L_1}{L_2}}{\mathrm{lpx}}{\FPBQ{h}{g}{L_1}{T}{L_2}{T}}\NL
\Rule{\FQUQ{L_1}{T_1}{L_2}{T_2}}{\mathrm{fquq}}{\FPBQ{h}{g}{L_1}{T_1}{L_2}{T_2}}\quad
\Rule{\LLEq{0}{T}{L_1}{L_2}}{\mathrm{lleq}}{\FPBQ{h}{g}{L_1}{T}{L_2}{T}}\NL
\Rule{\FPBQ{h}{g}{L_1}{T_1}{L_2}{T_2}}{\mathrm{inj}}{\FPBS{h}{g}{L_1}{T_1}{L_2}{T_2}}\quad
\Rule{\FPBS{h}{g}{L_1}{T_1}{L}{T}\quad\FPBQ{h}{g}{L}{T}{L_2}{T_2}}{\mathrm{step}}
     {\FPBS{h}{g}{L_1}{T_1}{L_2}{T_2}}
\end{Figure}

\FigRef{fpbs}{fquq} is the ``$s$-step'',
\FigRef{fpbs}{cpx} is the ``$rt$-step'' for terms,
\FigRef{fpbs}{lpx} is the ``$rt$-step'' for environments,
and \FigRef{fpbs}{lleq} is our new ``$q$-step''.
Because of it, our ``big'' trees are actually ``very big''
with respect to \cite{SPAc4}.
Formally, the ``very big'' tree rooted at $\CLOSURE{L}{T}$
comprises the $qrst$-computations starting at $\CLOSURE{L}{T}$. 
Our ``very big tree'' theorem states that if $T$ has an atomic arity
in $L$ (\SubRef{aaa_defs}), then the nonreflexive $rst$-steps in this tree are finite.


In order to state the theorem,
the next definition highlights the
proper (\IE nonreflexive) $rst$-steps
and the $qrst$-computations containing them.

\begin{definition}[proper $rst$-reduction and proper $qrst$-computation]
\ObjLabel{fpbg}
\FigRef{fpbg}{} defines the relation $\FPB{h}{g}{L_1}{T_1}{L_2}{T_2}$,
denoting one step of proper $rst$-reduction
from $\CLOSURE{L_1}{T_1}$ to $\CLOSURE{L_2}{T_2}$
with respect to the parameters $\Next{h}{}$ and $\Deg{h}{g}{}{}$,
and the relation $\FPBG{h}{g}{L_1}{T_1}{L_2}{T_2}$,
denoting a proper $qrst$-computation.
\qed
\end{definition}

\begin{Figure}{Proper $rst$-reduction and proper $qrst$-computation}{fpbg}
\Rule{\FQU{L_1}{T_1}{L_2}{T_2}}{\mathrm{fqu}}{\FPB{h}{g}{L_1}{T_1}{L_2}{T_2}}\quad
\Rule{\CPX{h}{g}{L}{T_1}{T_2} \quad T_1 \neq T_2}{\mathrm{cpx}}{\FPB{h}{g}{L}{T_1}{L}{T_2}}\quad
\Rule{\LPX{h}{g}{L_1}{L_2}\quad\NLLEq{0}{T}{L_1}{L_2}}{\mathrm{lpx}}{\FPB{h}{g}{L_1}{T}{L_2}{T}}\NL
\Rule{\FPB{h}{g}{L_1}{T_1}{L}{T} \quad \FPBS{h}{g}{L}{T}{L_2}{T_2}}{\mathrm{fpbg}}
     {\FPBG{h}{g}{L_1}{T_1}{L_2}{T_2}}
\end{Figure}

\Thref{fpb_props}{fpb_fpbq_alt} shows that
a step of proper $rst$-reduction is never reflexive,
but a proper $qrst$-computation may be.
Consider the term $\Appl{\Delta_{k,T}}{\Delta_{k,T}}$ where
$\Delta_{k,T} = \Abst{}{T}{\Appl{\Sort{k}}{\Appl{\LRef{0}}{\LRef{0}}}}$.
Following the example of $\Appl{\Delta_T}{\Delta_T}$ in \SubRef{cpr_defs},
we can prove
$\CPR{L}{\Appl{\Delta_{k,T}}{\Delta_{k,T}}}{\Appl{\Sort{k}}{\Appl{\Delta_{k,T}}{\Delta_{k,T}}}}$
(proper $r$-step),
and then $\FQU{L}{\Appl{\Sort{k}}{\Appl{\Delta_{k,T}}{\Delta_{k,T}}}}{L}{\Appl{\Delta_{k,T}}{\Delta_{k,T}}}$ 
($s$-step) by \FigRef{fqu}{flat}.
Moreover by \Thref{fpb_props}{lleq_fpb_trans},
starting a proper $qrst$-computation with a proper step,
is not restrictive.


Now we can define the closures whose ``very big'' tree contains
a finite number of nonreflexive $rst$-steps.
This is achieved by standard means with the next definition.

\begin{definition}[$q$-equivalence and strongly $rst$-normalizing closures]
\ObjLabel{fsb}\hspace{6pt}
\FigRef{fpbg}{} defines the relation $\FLEq{l}{L_1}{T_1}{L_2}{T_2}$ ($q$-equivalence)
that extends lazy equivalence to closures,
and the predicate $\FSB{h}{g}{L}{T}$ stating that
$\CLOSURE{L}{T}$ is strongly normalizing
for $qrst$-reduction with respect to the parameters $\Next{h}{}$ and $\Deg{h}{g}{}{}$. 
\qed
\end{definition}

\begin{Figure}{$q$-equivalence and strongly $rst$-normalizing closures}{fsb}
\Rule{\LLEq{l}{T}{L_1}{L_2}}{\mathrm{fleq}}{\FLEq{l}{L_1}{T}{L_2}{T}}\quad
\Rule{\MAll{L_2,T_2} \FPB{h}{g}{L_1}{T_1}{L_2}{T_2} \MImp \FSB{h}{g}{L_2}{T_2}}
     {\mathrm{fsb}}{\FSB{h}{g}{L_1}{T_1}}
\end{Figure}

\Thref{fpb_props}{fpb_fpbq_alt} and \Thref{fpb_props}{fpbq_inv_fpb_alt} show that
$\FPB{h}{g}{L_1}{T_1}{L_2}{T_2}$ is equivalent to
$\FPBQ{h}{g}{L_1}{T_1}{L_2}{T_2} \MAnd \NFLEq{0}{L_1}{T_1}{L_2}{T_2}$,
so we can rephrase \FigRef{fsb}{fsb} following the pattern of
\FigRef{cnx}{csx} and \FigRef{lsx}{lsx}.
Moreover $\FSB{h}{g}{L}{T}$ can be generated by \Ruleref{fsb_alt},
which is \FigRef{fsb}{fsb} with $\FPBG{h}{g}{L_1}{T_1}{L_2}{T_2}$
in place of $\FPB{h}{g}{L_1}{T_1}{L_2}{T_2}$. 
So $\CLOSURE{L}{T}$ is strongly $rst$-normalizing
iff it is strongly $qrst$-normalizing.

Our results on $qrst$-computations and $qrst$-normalization
are in \SubRef{fpb_props}.

\section{Propositions on $\LD$}
\SecLabel{propositions}

In this section we present the main properties
of reduction (\SubRef{cpr_props}),
of degree assignment (\SubRef{da_props}),
of $rt$-reduction (\SubRef{cpx_props}),
of atomic arity assignment (\SubRef{aaa_props}),
of reducibility candidates (\SubRef{gcp_props}),
of lazy equivalence (\SubRef{lleq_props}),
of $qrst$-reduction (\SubRef{fpb_props}),
and finally of stratified validity (\SubRef{snv_props})
respecting the dependences between these properties.

We aim at reaching our versions of the ``three problems'' \cite{SPA94}:
\Thref{cprs_props}{cprs_conf} (confluence of computation),
\Thref{fsb_props}{aaa_fsb} (strong $qrst$-normalization of valid terms),
and \Thref{preserve_props}{snv_cprs_lpr} (subject reduction of stratified validity).

The detailed theory of $\LD$ (1416 proofs) exists
only in the digital form of \cite{lambdadeltaV2a}.
In this article we just outline the proofs of the presented statements
by reporting on the proof strategy and on the main dependences of each proof.
Most proofs are by induction on the height of a derivation
or by cases on the last step of a derivation.
Very often both techniques are applied together. 

\AppRef{pointers} lists the pointers to the digital proofs
outlined in the article.

\subsection{Results on Reduction}
\SubLabel{cpr_props}

The relevant properties of reduction, conversion, and their refinement are listed next.

\begin{theorem}[reduction and its refinement]\hfil
\ObjLabel{cpr_props}
\begin{enumerate}

\item\ObjLabel{lsubr_trans}
(transitivity of refinement)\\
If $\LSubR{L_1}{L}$ and $\LSubR{L}{L_2}$ then $\LSubR{L_1}{L_2}$. 

\item\ObjLabel{lsubr_cpr_trans}
(transitivity of reduction for terms through refinement)\\
If $\LSubR{L_1}{L_2}$ and $\CPR{L_2}{T_1}{T_2}$ then $\CPR{L_1}{T_1}{T_2}$.

\item\ObjLabel{cpr_conf_lpr}
(confluence of reduction for terms with itself, diamond property, general form)\\
If $\CPR{L_0}{T_0}{T_1}$ and $\CPR{L_0}{T_0}{T_2}$
and $\LPR{L_0}{L_1}$ and $\LPR{L_0}{L_2}$ then
there exists $T$\\
such that $\CPR{L_1}{T_1}{T}$ and $\CPR{L_2}{T_2}{T}$.

\item\ObjLabel{lpr_conf}
(confluence of reduction for environments with itself, diamond property)\\
If $\LPR{L_0}{L_1}$ and $\LPR{L_0}{L_2}$
then there exists $L$ such that
$\LPR{L_1}{L}$ and $\LPR{L_2}{L}$.

\end{enumerate}
\end{theorem}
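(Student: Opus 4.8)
The plan is to prove the four claims in the stated order, using (1)--(2) as tools for (3) and (3) as a tool for (4). For the transitivity of the refinement (1) I would induct on the derivation of $\LSubR{L}{L_2}$ while inverting $\LSubR{L_1}{L}$. The \emph{atom} case is immediate; in the \emph{pair} case ($L = \Pair{L'}{}{W}$) the hypothesis $\LSubR{L_1}{L}$ forces the head of $L_1$ to be either $W$ again, or, when $W$ is a declaration, a $\beta$-shaped definition, and in both subcases the tails refine so the induction hypothesis applies; the \emph{beta} case of $\LSubR{L}{L_2}$ is symmetric. The only care needed is the bookkeeping of heads; no real obstacle arises.

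For the preservation of reduction through the refinement (2) I would induct on the derivation of $\CPR{L_2}{T_1}{T_2}$. The contextual rules \FigRef{cpr}{bind}, \FigRef{cpr}{flat} and the binder-crossing rules \FigRef{cpr}{\beta}, \FigRef{cpr}{\theta}, \FigRef{cpr}{\zeta} follow from the induction hypothesis once one remarks that $\LSubR{L_1}{L_2}$ extends to $\LSubR{\Pair{L_1}{}{W}}{\Pair{L_2}{}{W}}$ by \FigRef{lsubr}{pair}. The delicate rule is \FigRef{cpr}{\delta}: there the $i$-th entry of $L_2$ is a definition $\LDef{K}{}{V_1}$, and an auxiliary lemma --- proved by induction on $i$ with \FigRef{drop}{drop} and inversion on $\LSubR$ --- shows that then the $i$-th entry of $L_1$ is the \emph{same} definition, over an environment $K'$ with $\LSubR{K'}{K}$ (a declaration on the right may refine a $\beta$-definition on the left, but never the reverse); the induction hypothesis on $\CPR{K}{V_1}{V_2}$ then gives $\CPR{K'}{V_1}{V_2}$ and we rebuild the $\delta$-step in $L_1$. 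A declared entry of $L_2$ refined by a $\beta$-definition of $L_1$ only participates in the reflexive step $\LRef{i}\to\LRef{i}$ (rule \FigRef{cpr}{atom}) and is harmless.

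The core of the statement is the general diamond property (3), which I would prove by well-founded induction on the proper-subclosure order of $\CLOSURE{L_0}{T_0}$ (well founded by the remark in \SubRef{closures}), followed by a case analysis on the last rules of $\CPR{L_0}{T_0}{T_1}$ and $\CPR{L_0}{T_0}{T_2}$. When at least one step is contextual, the common reduct is assembled from the induction hypothesis applied to the immediate subclosures (the proper subterms, and for a binder the environment extended by the bound entry), joining the two environment reductions $\LPR{L_0}{L_1}$ and $\LPR{L_0}{L_2}$ suitably extended. Since the five redex schemes are pairwise disjoint by design, two genuine contractions must be of the same kind; scheme $\epsilon$ is routine, while for $\beta$, $\delta$, $\zeta$, $\theta$ one also needs commutation lemmas for relocation versus reduction, and for $\delta$ a commutation lemma for \emph{drop} versus reduction for environments, so that the referenced entry is tracked as $L_0$ reduces. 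The two genuinely subtle families are: (i) the $\delta$-against-$\delta$ and $\delta$-against-context overlaps on a variable occurrence, where the two derivations may select different reducts of the same environment entry --- reconciled by the induced reduction for environments, precisely the phenomenon foreshadowed by the pentagon \Ruleref{cpr_fqu_conf}; and (ii) the overlaps in which a $\beta$-contraction on one side meets the same $\beta$-redex reduced non-destructively on the other (or another $\beta$- or $\theta$-contraction), so that one must reconcile reduction under a declaration $\LDec{L_0}{}{W_1}$ with reduction under the fresh definition $\LDef{L_0}{}{(\Cast{W_2}{V_2})}$ introduced by the contraction --- this is exactly where parts (1)--(2) enter, through the refinement \FigRef{lsubr}{beta} (with the annotation of the new definition aligned, via part (3) itself applied to the subterms, with the type of the declaration). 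The combinatorial explosion of these cases, item (ii) in particular, is where essentially all the effort goes; everything else is routine.

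Finally, the diamond property for environments (4) is a short induction on the structure of $L_0$. In the cons case write $L_0 = \Pair{L_0'}{}{W_0}$ and, inverting the two hypotheses, $L_1 = \Pair{L_1'}{}{W_1}$ and $L_2 = \Pair{L_2'}{}{W_2}$ with $\LPR{L_0'}{L_1'}$, $\LPR{L_0'}{L_2'}$, $\CPR{L_0'}{W_0}{W_1}$, $\CPR{L_0'}{W_0}{W_2}$; the induction hypothesis yields $L'$ with $\LPR{L_1'}{L'}$ and $\LPR{L_2'}{L'}$, and part (3) applied to $\CPR{L_0'}{W_0}{W_1}$, $\CPR{L_0'}{W_0}{W_2}$, $\LPR{L_0'}{L_1'}$, $\LPR{L_0'}{L_2'}$ yields $W$ with $\CPR{L_1'}{W_1}{W}$ and $\CPR{L_2'}{W_2}{W}$. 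Then $\Pair{L'}{}{W}$ is the required common reduct, since $\LPR{\Pair{L_1'}{}{W_1}}{\Pair{L'}{}{W}}$ and $\LPR{\Pair{L_2'}{}{W_2}}{\Pair{L'}{}{W}}$ both hold by \FigRef{lpr}{pair}.
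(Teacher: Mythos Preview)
Your proposal is correct and follows essentially the same strategy as the paper: the paper proves (1) by induction on one premise with cases on the other, (2) by induction on the reduction derivation, (3) by well-founded induction on proper subclosures of $\CLOSURE{L_0}{T_0}$ with cases on the four premises (noting, as you do, that reduction for environments surfaces in the $\delta$ and binder-pushing cases, and that \Thref{cpr_props}{lsubr_cpr_trans} together with \FigRef{lsubr}{beta} is needed exactly when $\beta$ is considered), and (4) by induction on $\Length{L_0}$ invoking (3). The only cosmetic discrepancy is that for (1) you induct on $\LSubR{L}{L_2}$ while the paper inducts on $\LSubR{L_1}{L}$; both orientations work.
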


\begin{proof}
\Thref{}{lsubr_trans} is proved by induction on its first premise
and by cases on its second premise.
\Thref{}{lsubr_cpr_trans} is proved by induction on its second premise.
\Thref{}{cpr_conf_lpr} is proved
by induction on the proper subclosures of $\CLOSURE{L_0}{T_0}$ (\SubRef{closures})
and by cases on its four premises.
Reduction for environments emerges when considering \FigRef{cpr}{\delta}
and when a binder in the ``spine'' of $T_0$
is pushed into $L_0$ in the cases of 
\FigRef{cpr}{\mathrm{bind}}, \FigRef{cpr}{\beta}, and \FigRef{cpr}{\theta}.
Moreover, \Thref{}{lsubr_cpr_trans} and \FigRef{lsubr}{\mathrm{beta}}
are invoked when \FigRef{cpr}{\beta} is considered.
\Thref{}{lpr_conf} is proved by induction on $\Length{L_0}$
and by cases on its two premises
with the help of \Thref{}{cpr_conf_lpr}. 
\end{proof}

\begin{theorem}[computation and conversion]\hfil
\ObjLabel{cprs_props}
\begin{enumerate}

\item\ObjLabel{cprs_conf}
(confluence of computation for terms with itself, Church-Rosser property)\\
If $\CPRS{L}{T_0}{T_1}$ and $\CPRS{L}{T_0}{T_2}$
then there exists $T$\\
such that $\CPRS{L}{T_1}{T}$ and $\CPRS{L}{T_2}{T}$.

\item\ObjLabel{lprs_conf}
(confluence of computation for environments with itself, Church-Rosser property)\\
If $\LPRS{L_0}{L_1}$ and $\LPRS{L_0}{L_2}$
then there exists $L$ such that
$\LPRS{L_1}{L}$ and $\LPRS{L_2}{L}$.

\item\ObjLabel{cpcs_inv_cprs}
(formulation of conversion as a pair of confluent computations)\\
If $\CPCS{L}{T_1}{T_2}$ then there exists $T$
such that $\CPRS{L}{T_1}{T}$ and $\CPRS{L}{T_2}{T}$.

\end{enumerate}
\end{theorem}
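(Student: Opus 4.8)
All three parts follow from the single-step diamond properties already established in \Thref{cpr_props}{} by a routine diamond-tiling argument, carried out with a fixed environment throughout; the genuine difficulty is entirely contained in \Thref{cpr_props}{cpr_conf_lpr} and \Thref{cpr_props}{lpr_conf}. First I would record the fixed-environment specialization of the former: since $\CPR{L}{T}{T}$ holds, and hence, by induction on $L$, so does $\LPR{L}{L}$, taking $L_0=L_1=L_2=L$ in \Thref{cpr_props}{cpr_conf_lpr} yields the plain diamond property of parallel reduction in a fixed environment --- from $\CPR{L}{T_0}{T_1}$ and $\CPR{L}{T_0}{T_2}$ one obtains a term $T$ with $\CPR{L}{T_1}{T}$ and $\CPR{L}{T_2}{T}$.

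For \Thref{}{cprs_conf} I would then prove a strip lemma --- if $\CPR{L}{T_0}{T_1}$ and $\CPRS{L}{T_0}{T_2}$ then there is $T$ with $\CPRS{L}{T_1}{T}$ and $\CPR{L}{T_2}{T}$ --- by induction on the derivation of $\CPRS{L}{T_0}{T_2}$, closing one further diamond at each application of \FigRef{cprs}{step}. Confluence of parallel computation then follows by a second induction, this time on the derivation of $\CPRS{L}{T_0}{T_1}$, each step invoking the strip lemma; since $\CPRS{L}{T_1}{T_2}$ is the reflexive and transitive closure of $\CPR{L}{T_1}{T_2}$, the diamond property so obtained is precisely the claimed Church-Rosser property. \Thref{}{lprs_conf} is obtained in exactly the same way, starting instead from the diamond property \Thref{cpr_props}{lpr_conf} of single-step reduction $\LPR{L_1}{L_2}$ for environments. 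For \Thref{}{cpcs_inv_cprs} I would induct on the length of the conversion $\CPCS{L}{T_1}{T_2}$, viewed as a zigzag of single parallel-reduction steps in $L$ oriented forward or backward: reflexivity of parallel computation handles the base case, a forward step extends the right-hand computation, and a backward step creates a peak that \Thref{}{cprs_conf} closes, the two resulting halves being recombined by transitivity of parallel computation.

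I do not expect a real obstacle in this theorem. The only point deserving attention is that the fixed-environment diamond used above is not primitive but the $L_0=L_1=L_2$ instance of the environment-coupled ``general form'' \Thref{cpr_props}{cpr_conf_lpr}; that coupling, with its auxiliary reductions $\LPR{L_0}{L_1}$ and $\LPR{L_0}{L_2}$, is exactly what makes the underlying induction on proper subclosures succeed in the cases where a binder migrates into the environment, and the legitimacy of the specialization rests on the reflexivity of $\LPR{L_1}{L_2}$. Everything beyond that is the textbook diamond-tiling construction.
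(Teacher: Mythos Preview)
Your proposal is correct and follows essentially the same approach as the paper: strip lemmas derived from \Thref{cpr_props}{cpr_conf_lpr} and \Thref{cpr_props}{lpr_conf}, then induction on the first premise for \Thref{}{cprs_conf} and \Thref{}{lprs_conf}, and induction on the conversion for \Thref{}{cpcs_inv_cprs}. The only minor deviation is that for \Thref{}{cpcs_inv_cprs} the paper invokes the strip lemma directly at each backward step, whereas you invoke the full \Thref{}{cprs_conf}; since a single step is also a computation, your version is a harmless over-application and the argument goes through unchanged.
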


\begin{proof}
\Thref{}{cprs_conf} and \Thref{}{lprs_conf}
are proved by induction on their first premise by invoking
the corresponding ``strip'' lemmas \cite{Brn92} from
\Thref{cpr_props}{cpr_conf_lpr} and \Thref{cpr_props}{lpr_conf} respectively. 
\Thref{}{cpcs_inv_cprs} is proved by induction on its premise
with the help of the ``strip'' lemma from \Thref{cpr_props}{cpr_conf_lpr}.
\end{proof}

The main result on reduction is Church-Rosser property,
also known as the confluence theorem and one of the so-called ``three problems''
in the Automath tradition.
The main result on conversion is
its formulation as a pair of confluent computations:
one direction is \Thref{cprs_props}{cpcs_inv_cprs},
the reverse is straightforward.
Using this formulation, 
\Thref{cpr_props}{cpr_conf_lpr} and \Thref{cprs_props}{cprs_conf}, 
give the generation lemma on abstraction,
a desired property mentioned by \cite{SPA2}.
This lemma states that $\CPCS{L}{\Abst{}{W_1}{T_1}}{\Abst{}{W_2}{T_2}}$
implies $\CPCS{L}{W_1}{W_2}$ and $\CPCS{\LDec{L}{}{W_1}}{T_1}{T_2}$.

\subsection{Results on Degree Assignment}
\SubLabel{da_props}

The relevant properties of degree assignment and of its refinement are listed next.

\medskip\begin{theorem}[degree assignment and its refinement]\hfil
\ObjLabel{da_props}
\begin{enumerate}

\item\ObjLabel{da_lstas}
(equivalence of degree assignment and iterated static type assignment, left to right)\\
If $\DA{h}{g}{L}{T}{d}$ then for each $n$ there exists $U$\\
such that $\LSTAS{h}{L}{T}{U}{n}$ and $\DA{h}{g}{L}{U}{d-n}$.

\item\ObjLabel{lstas_inv_da}
(equivalence of degree assignment and iterated static type assignment, right to left)\\
If $\LSTAS{h}{L}{T}{U}{n}$ then
for each $\Deg{h}{g}{}{}$ there exists $d$\\
such that $\DA{h}{g}{L}{T}{d}$ then $\DA{h}{g}{L}{U}{d-n}$.

\item\ObjLabel{lstas_inv_da_ge}
(equivalence of degree assignment and iterated static type assignment, variant)\\
If $\LSTAS{h}{L}{T}{U}{n}$ then
for every $n_0$ there exist $\Deg{h}{g}{}{}$ and $d \ge n_0$\\
such that $\DA{h}{g}{L}{T}{d}$ and $\DA{h}{g}{L}{U}{d-n}$.

\item\ObjLabel{lsubd_fwd_lsubr}
(inclusion of refinement)\\
If $\LSubD{h}{g}{L_1}{L_2}$ then $\LSubR{L_1}{L_2}$.

\item\ObjLabel{lsubd_da_trans}
(transitivity of degree assignment through refinement)\\
If $\LSubD{h}{g}{L_1}{L_2}$ and $\DA{h}{g}{L_2}{T}{d}$ then $\DA{h}{g}{L_1}{T}{d}$.

\item\ObjLabel{lsubd_da_conf}
(confluence of refinement and degree assignment)\\
If $\LSubD{h}{g}{L_1}{L_2}$ and $\DA{h}{g}{L_1}{T}{d}$ then $\DA{h}{g}{L_2}{T}{d}$.

\item\ObjLabel{lsubd_trans}
(transitivity of refinement)\\
If $\LSubD{h}{g}{L_1}{L}$ and $\LSubD{h}{g}{L}{L_2}$ then $\LSubD{h}{g}{L_1}{L_2}$. 

\end{enumerate}
\end{theorem}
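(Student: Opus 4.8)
The plan is to dispatch the seven items in roughly the order listed, since each leans on its predecessors. For \Thref{}{da_lstas} I would argue by induction on the derivation of $\DA{h}{g}{L}{T}{d}$, doing an inner case analysis (or a secondary induction) on the natural number $n$. The sort case uses the compatibility condition on $\Deg{h}{g}{}{}$ iterated $n$ times to keep $d$ and $\Iter{n}{h}{k}$ in step; the $\mathrm{ldec}$/$\mathrm{ldef}$ cases feed the IH in $K$ through a $\Drop{0}{i}{}{}{}$ and then relocate via \FigRef{lift}{}, matching \FigRef{lstas}{succ} and \FigRef{lstas}{ldef} respectively; the $\mathrm{bind}$ and $\mathrm{flat}$ cases are immediate from \FigRef{lstas}{bind}, \FigRef{lstas}{appl}, \FigRef{lstas}{cast}. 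The converse direction \Thref{}{lstas_inv_da} (and its variant \Thref{}{lstas_inv_da_ge}) goes by induction on the derivation of $\LSTAS{h}{L}{T}{U}{n}$, reconstructing a $\DA$-derivation for $T$ rule-by-rule; the variant additionally needs, given a target $n_0$, the freedom to \emph{choose} $\Deg{h}{g}{}{}$ so that the sort at the bottom of the spine lands at a degree $\ge n_0$, which is possible precisely because $\Next{h}{}$ is strictly monotone (so arbitrarily large degrees are realizable) — this is the one place where I expect a little care in threading the parameter choice through the induction.

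For the refinement properties I would proceed as follows. \Thref{}{lsubd_fwd_lsubr} is a direct induction on $\LSubD{h}{g}{L_1}{L_2}$: the $\mathrm{atom}$ and $\mathrm{pair}$ cases map onto the corresponding clauses of \FigRef{lsubr}{}, and the $\mathrm{beta}$ case of \FigRef{lsubd}{} maps onto \FigRef{lsubr}{beta} simply by forgetting the two $\DA$ side conditions. \Thref{}{lsubd_da_trans} and \Thref{}{lsubd_da_conf} are the two directions of transporting a degree across the refinement; both go by induction on $\LSubD{h}{g}{L_1}{L_2}$ with an inner case analysis on the last rule of the $\DA$ judgment. The delicate case is $\mathrm{beta}$ of \FigRef{lsubd}{}, where $L_1 = \LDef{L_1'}{}{(\Cast{W}{V})}$ and $L_2 = \LDec{L_2'}{}{W}$: a reference $\LRef{0}$ gets its degree in $L_1$ from $\DA{h}{g}{L_1}{\Cast{W}{V}}{\,?\,}$ (via \FigRef{da}{ldef} and \FigRef{da}{flat}, so from $\DA{h}{g}{L_1'}{V}{d+1}$) whereas in $L_2$ it gets it from $\DA{h}{g}{L_2'}{W}{d}$ via \FigRef{da}{ldec}; the two side conditions $\DA{h}{g}{L_1}{V}{d+1}$ and $\DA{h}{g}{L_2}{W}{d}$ built into \FigRef{lsubd}{beta} are exactly what make the degrees agree, and one uses uniqueness of degree assignment (provable from the syntax-directed rules of \FigRef{da}{}, analogous to \Thref{lstas_props}{lstas_mono}) to conclude that the $d$ appearing in the hypothesis matches the $d$ in the side condition. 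For $\LRef{i}$ with $i>0$ one strips the head and recurses; for all other term constructors one recurses structurally.

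Finally, \Thref{}{lsubd_trans} is proved by induction on the first premise $\LSubD{h}{g}{L_1}{L}$ and cases on the second premise $\LSubD{h}{g}{L}{L_2}$, exactly in the style of \Thref{cpr_props}{lsubr_trans}; the only nontrivial combination is when an $\mathrm{hcast}$/$\mathrm{beta}$ step on the left meets a $\mathrm{beta}$ step on the right, where one reassembles the $\DA$ side conditions for the composite using \Thref{}{lsubd_da_trans} to move the degree witnesses along the chain. I expect the main obstacle of the whole theorem to be the $\mathrm{beta}$ case in \Thref{}{lsubd_da_trans}/\Thref{}{lsubd_da_conf}: keeping straight which environment each $\DA$ sub-derivation lives in, correctly peeling a $\Cast$ and an $\LDef$ versus an $\LDec$, and invoking uniqueness of degrees to force the two candidate values of $d$ to coincide. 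Everything else is routine rule-chasing over syntax-directed inductive definitions.
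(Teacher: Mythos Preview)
Your plan for items (1)--(4) matches the paper: each goes by induction on its single premise, and your description of the cases is accurate.

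For items (5) and (6) you have the induction oriented the wrong way around. You propose induction on $\LSubD{h}{g}{L_1}{L_2}$ with an inner case analysis on the last rule of the $\DA$ derivation, but this does not close in the $\mathrm{bind}$ case: from $\DA{h}{g}{L_2}{\Bind{}{W}{T'}}{d}$ you get $\DA{h}{g}{\Pair{L_2}{}{W}}{T'}{d}$, which lives in the \emph{extended} environment, whereas the induction hypothesis coming from $\LSubD$ speaks only about the \emph{tails} $L_1',L_2'$. Your remark that one ``recurses structurally'' on the term here is really an induction on the $\DA$ derivation, not a case split. The paper does it the other way: induction on the second premise $\DA$, with cases on the refinement. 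Then the $\mathrm{bind}$ and $\mathrm{flat}$ cases are immediate (extend the refinement by \FigRef{lsubd}{pair} and apply the IH), and in the $\mathrm{lref}$ case one uses a Drop-through-refinement lemma to locate the corresponding entry on the $L_1$ side and then does cases on which $\LSubD$ rule produced that entry; the $\mathrm{beta}$ case is handled exactly as you describe, via the two $\DA$ side conditions and uniqueness of degrees.

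For item (7) your outline is right in spirit but two details are off. First, $\mathrm{hcast}$ is not a rule of $\LSubD$ (it belongs to the validity refinement of \FigRef{lsubsv}{}); the only rules here are $\mathrm{atom}$, $\mathrm{pair}$, $\mathrm{beta}$. Second, a $\mathrm{beta}$ step on the left cannot meet a $\mathrm{beta}$ step on the right: after $\mathrm{beta}$ the head of the middle environment is a $\lambda$-declaration, so the right premise must be $\mathrm{pair}$. The nontrivial combinations are $\mathrm{beta}/\mathrm{pair}$ and $\mathrm{pair}/\mathrm{beta}$, and to reassemble the $\DA$ side conditions you need \emph{both} \Thref{}{lsubd_da_trans} and \Thref{}{lsubd_da_conf}, since the degree witnesses must be transported in each direction along the chain; the paper invokes both.
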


\begin{proof}
\Thref{}{da_lstas}, \Thref{}{lstas_inv_da},
\Thref{}{lstas_inv_da_ge}, and \Thref{}{lsubd_fwd_lsubr}
are proved by induction on the premise.
\Thref{}{lsubd_da_trans} and \Thref{}{lsubd_da_conf}
are proved by induction on the second premise and by cases on the
first premise.
\Thref{}{lsubd_trans} is proved
by induction on its first premise and by cases on its second premise
by invoking \Thref{}{lsubd_da_trans} and \Thref{}{lsubd_da_conf}.
\end{proof}

\Thref{da_props}{da_lstas} and \Thref{da_props}{lstas_inv_da_ge}
are the main properties of degree assignment,
from which we derive the next \Thref{lstas_props}{lstas_inv_refl_pos}
(notice that in \cite{lambdadeltaJ1a} we were able to prove it just for $n = 0$).

\medskip\begin{theorem}[iterated static type assignment]\hfil
\ObjLabel{lstas_props}
\begin{enumerate}

\item\ObjLabel{lstas_mono}
(uniqueness of iterated static type assignment)\\
If $\LSTAS{h}{L}{T}{U_1}{n}$ and $\LSTAS{h}{L}{T}{U_2}{n}$ then $U_1 = U_2$.

\item\ObjLabel{lstas_inv_refl_pos}
(irreflexivity of static type assignment iterated at least once)\\
$\LSTAS{h}{L}{T}{T}{n+1}$ is contradictory.

\end{enumerate}
\end{theorem}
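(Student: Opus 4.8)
Here is how I would attack \Thref{lstas_props}{lstas_mono} and \Thref{lstas_props}{lstas_inv_refl_pos}.

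For \Thref{}{lstas_mono} the plan is to argue by induction on the height of the first derivation $\LSTAS{h}{L}{T}{U_1}{n}$, performing a case analysis on its last rule and inverting the second derivation $\LSTAS{h}{L}{T}{U_2}{n}$ accordingly. The rules of \FigRef{lstas}{} are syntax-oriented: the shape of $T$ together with the value of $n$ singles out the applicable rule, the only delicate point being $T=\LRef{i}$, where the alternative between \FigRef{lstas}{zero}/\FigRef{lstas}{succ} and \FigRef{lstas}{ldef} is decided by whether $\Drop{0}{i}{L}{\cdot}$ reaches a declaration or a definition --- two mutually exclusive situations, with the reached sub-environment uniquely determined since ``drop'' is functional. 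Hence the second derivation must end with the same rule, on the same sub-environment and sub-term; the induction hypothesis then equates the outcomes of the matching subderivations, and one concludes by the functionality of relocation (\FigRef{lift}{}) in the cases \FigRef{lstas}{succ} and \FigRef{lstas}{ldef}, and directly in the remaining cases. The base case $T=\Sort{k}$ is immediate, since $\Sort{(\Iter{n}{h}{k})}$ is a function of $h$, $k$ and $n$.

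For \Thref{}{lstas_inv_refl_pos} I would reduce the claim to degree assignment. Assuming $\LSTAS{h}{L}{T}{T}{n+1}$, apply \Thref{da_props}{lstas_inv_da_ge} with $n_0:=n+1$: it yields a sort degree parameter $\Deg{h}{g}{}{}$ and a degree $d\ge n+1$ with $\DA{h}{g}{L}{T}{d}$ and $\DA{h}{g}{L}{T}{d-(n+1)}$, the second because the target of the assignment $\LSTAS{h}{L}{T}{\cdot}{n+1}$ is $T$ itself. Since $d\ge n+1$ and $n+1\ge 1$, we have $d-(n+1)<d$; appealing to the uniqueness of degree assignment (a routine syntax-directed fact about \FigRef{da}{}, parallel to \Thref{}{lstas_mono}, using that $\Deg{h}{g}{}{}$ is a function), $d=d-(n+1)$, a contradiction. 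If one prefers to mirror the route indicated in the text, the assignment $\DA{h}{g}{L}{T}{d-(n+1)}$ can instead be recovered from $\DA{h}{g}{L}{T}{d}$ by \Thref{da_props}{da_lstas} (for $n+1$ iterations), identifying the resulting static type with $T$ via \Thref{}{lstas_mono}.

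I expect the $\LRef{i}$ cases of \Thref{}{lstas_mono} to be the most tedious part, precisely because they need the ``drop'' dichotomy together with the uniqueness of ``drop'' and of relocation to align the two derivations; everything else is straightforward structural bookkeeping. The proof of \Thref{}{lstas_inv_refl_pos} carries no real combinatorial content; its only subtle point is that the clause $d\ge n_0$ in \Thref{da_props}{lstas_inv_da_ge} is genuinely required, so that the truncated natural-number subtraction $d-(n+1)$ actually yields a strictly smaller degree instead of collapsing to $0$.
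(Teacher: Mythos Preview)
Your proposal is correct and follows essentially the same route as the paper: \Thref{}{lstas_mono} by induction on the first premise with cases on the second, and \Thref{}{lstas_inv_refl_pos} directly from \Thref{da_props}{lstas_inv_da_ge} together with the (routine) uniqueness of degree assignment. Your added detail on the $\LRef{i}$ dichotomy and on why the bound $d\ge n_0$ is needed is accurate and useful, but does not depart from the paper's strategy.
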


\begin{proof}
\Thref{}{lstas_mono} is proved
by induction on its first premise and
by cases on its second premise.
\Thref{}{lstas_inv_refl_pos}
is proved directly with the help of
\Thref{da_props}{lstas_inv_da_ge}.
\end{proof}

\subsection{Results on Extended Reduction}
\SubLabel{cpx_props}

The relevant properties of extended reduction are listed next.

\begin{theorem}[extended reduction]\hfil
\ObjLabel{cpx_props}
\begin{enumerate}

\item\ObjLabel{lsubr_cpx_trans}
(transitivity of extended reduction for terms through refinement)\\
If $\LSubR{L_1}{L_2}$ and $\CPX{h}{g}{L_2}{T_1}{T_2}$ then $\CPX{h}{g}{L_1}{T_1}{T_2}$.

\item\ObjLabel{cpr_cpx}
(inclusion of reduction, ``$r$-step'')\\
If $\CPR{L}{T_1}{T_2}$ then $\CPX{h}{g}{L}{T_1}{T_2}$.

\item\ObjLabel{sta_cpx}
(inclusion of static type assignment, ``$t$-step'')\\
If $\LSTAS{h}{L}{T_1}{T_2}{1}$ and $\DA{h}{g}{L}{T_1}{d+1}$ then $\CPX{h}{g}{L}{T_1}{T_2}$.

\item\ObjLabel{fqu_cpx_trans}
(commutation of direct subclosure with extended reduction for terms)\\
If $\FQU{L}{T_1}{K}{V_1}$ and $\CPX{h}{g}{K}{V_1}{V_2}$
then there exists $T_2$\\
such that $\CPX{h}{g}{L}{T_1}{T_2}$ and $\FQU{L}{T_2}{K}{V_2}$.

\item\ObjLabel{lpx_fqu_trans}
(commutation of extended reduction for environments with direct subclosure)\\
If $\LPX{h}{g}{L_1}{L_2}$ and $\FQU{L_2}{T_2}{K_2}{V}$
then there exist $K_1$ and $T$\\ such that
$\CPX{h}{g}{L_1}{T_2}{T}$ and $\FQU{L_1}{T}{K_1}{V}$ and $\LPX{h}{g}{K_1}{K_2}$.

\item\ObjLabel{lpx_cpx_trans}
(absorption of extended reduction for environments)\\
If $\LPX{h}{g}{L_1}{L_2}$ and $\CPX{h}{g}{L_2}{T_1}{T_2}$
then $\CPXS{h}{g}{L_1}{T_1}{T_2}$.

\item\ObjLabel{cpxs_fwd_beta}
(extended computation from a $\beta$-redex)\\
If $\CPXS{h}{g}{L}{\Appl{V}{\Abst{}{W}{T_1}}}{T_2}$ then
either $\TSTS{\Appl{V}{\Abst{}{W}{T_1}}}{T_2}$ or
$\CPXS{h}{g}{L}{\Abbr{}{(\Cast{W}{V})}{T_1}}{T_2}$.

\end{enumerate}
\end{theorem}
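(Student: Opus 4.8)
The plan is to prove, by induction on the length of the computation, the following strengthened statement: if $\CPXS{h}{g}{L}{\Appl{V}{\Abst{}{W}{T_1}}}{T_2}$, then \emph{either} $T_2$ is again a $\beta$-redex, say $T_2 = \Appl{V_2}{\Abst{}{W_2}{U_2}}$, and moreover $\CPXS{h}{g}{L}{\Abbr{}{(\Cast{W}{V})}{T_1}}{\Abbr{}{(\Cast{W_2}{V_2})}{U_2}}$, \emph{or else} $\CPXS{h}{g}{L}{\Abbr{}{(\Cast{W}{V})}{T_1}}{T_2}$. The theorem is then immediate: in the first alternative $\TSTS{\Appl{V}{\Abst{}{W}{T_1}}}{T_2}$ holds since both terms are applications, and the second alternative is already one of the two required conclusions.

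The combinatorial heart is a one-step commutation. Suppose $\CPX{h}{g}{L}{\Appl{V}{\Abst{}{W}{T_1}}}{S}$. By cases on the rule concluding this step, it must be \FigRef{cpx}{flat} or \FigRef{cpx}{\beta}: rule \FigRef{cpx}{\theta} is excluded because the argument of the application is an abstraction and not an abbreviation, and every other rule is excluded by the shape of the subject. Moreover the only rule that can conclude a reduction of an abstraction is \FigRef{cpx}{bind}, so in both admissible cases we obtain $\CPX{h}{g}{L}{V}{V'}$, $\CPX{h}{g}{L}{W}{W'}$ and $\CPX{h}{g}{\LDec{L}{}{W}}{T_1}{U'}$ for suitable $V', W', U'$, with $S = \Appl{V'}{\Abst{}{W'}{U'}}$ in the flat case and $S = \Abbr{}{(\Cast{W'}{V'})}{U'}$ in the $\beta$ case. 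From these data I build $\CPX{h}{g}{L}{\Abbr{}{(\Cast{W}{V})}{T_1}}{\Abbr{}{(\Cast{W'}{V'})}{U'}}$ as follows: by \FigRef{cpx}{bind} applied to the abbreviation it suffices to have $\CPX{h}{g}{L}{\Cast{W}{V}}{\Cast{W'}{V'}}$, which holds by \FigRef{cpx}{flat}, and $\CPX{h}{g}{\LDef{L}{}{(\Cast{W}{V})}}{T_1}{U'}$, which follows from $\CPX{h}{g}{\LDec{L}{}{W}}{T_1}{U'}$ by \Thref{cpx_props}{lsubr_cpx_trans} together with the refinement $\LSubR{\LDef{L}{}{(\Cast{W}{V})}}{\LDec{L}{}{W}}$, an instance of \FigRef{lsubr}{beta} (using reflexivity of $\LSubR$). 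Since $S$ equals $\Appl{V'}{\Abst{}{W'}{U'}}$ in the flat case and $\Abbr{}{(\Cast{W'}{V'})}{U'}$ in the $\beta$ case, this one-step commutation already settles the base of the induction.

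For the inductive step, write the computation as $\CPXS{h}{g}{L}{\Appl{V}{\Abst{}{W}{T_1}}}{T'}$ followed by $\CPX{h}{g}{L}{T'}{T_2}$, and apply the induction hypothesis to the prefix. If it lies in the second alternative, $\CPXS{h}{g}{L}{\Abbr{}{(\Cast{W}{V})}{T_1}}{T'}$, appending the last step keeps us there. Otherwise $T' = \Appl{V'}{\Abst{}{W'}{U'}}$ and $\CPXS{h}{g}{L}{\Abbr{}{(\Cast{W}{V})}{T_1}}{\Abbr{}{(\Cast{W'}{V'})}{U'}}$; running the one-step commutation on $\CPX{h}{g}{L}{T'}{T_2}$ — legitimate because $T'$ is itself a $\beta$-redex — produces a single step $\CPX{h}{g}{L}{\Abbr{}{(\Cast{W'}{V'})}{U'}}{T''}$, where $T'' = \Abbr{}{(\Cast{W_2}{V_2})}{U_2}$ with $T_2 = \Appl{V_2}{\Abst{}{W_2}{U_2}}$ in the flat subcase, and $T'' = T_2$ in the $\beta$ subcase; composing with the computation from the reductum yields precisely the disjunct required for $T_2$.

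I expect two points to be the source of difficulty. The first is getting the strengthening right: the induction must carry along that a reduct of application shape is still a genuine $\beta$-redex (its argument an abstraction), so that the one-step analysis can be reapplied, and that the reductum ``keeps pace'' with it by a parallel computation. The second is the transfer of the body reduction between the environments $\LDec{L}{}{W}$ and $\LDef{L}{}{(\Cast{W}{V})}$ through $\LSubR$ — the standard mechanism by which $\LD$ absorbs the fact that \FigRef{cpx}{\beta} relocates material between a term and its environment. The remaining case splits are routine inspection of the reduction rules.
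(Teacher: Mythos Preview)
Your proof of item~7 (\textit{cpxs\_fwd\_beta}) is correct and follows essentially the same approach as the paper: the key ingredients are precisely \Thref{cpx_props}{lsubr_cpx_trans} and \FigRef{lsubr}{beta}, used to transfer the body reduction from $\LDec{L}{}{W}$ to $\LDef{L}{}{(\Cast{W}{V})}$. Your strengthened induction hypothesis (recording that a reduct of application shape is still a $\beta$-redex, together with a parallel computation on the $\beta$-reductum) is exactly what is needed to make the induction close, and is the detail hidden behind the paper's terse ``proved directly''.
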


\begin{proof}
\Thref{}{lsubr_cpx_trans} is proved
by induction on its second premise and by cases on its first premise.
For the reference to a declaration, \FigRef{cpx}{\delta}, we have
$T_1 = \LRef{i}$, and $\Drop{0}{i}{L_2}{\LDec{K_2}{}{W_1}}$, and
$\CPX{h}{g}{K_2}{W_1}{W_2}$, and $\Lift{0}{i+1}{W_2}{T_2}$.  
It may be the case, not occurring with ordinary reduction, that
$\Drop{0}{i}{L_1}{\LDef{K_1}{}{(\Cast{W_1}{V_1})}}$ and $\LSubR{K_1}{K_2}$
for some $K_1$ and $V_1$ by \FigRef{lsubr}{\mathrm{beta}}. 
In that event the induction hypothesis yields $\CPX{h}{g}{K_1}{W_1}{W_2}$
and \FigRef{cpx}{e} gives $\CPX{h}{g}{K_1}{\Cast{W_1}{V_1}}{W_2}$
so \FigRef{cpx}{\delta} concludes $\CPX{h}{g}{L_1}{\LRef{i}}{T_2}$.
Here we see the purpose of $e$-reduction
and of the expected type $W_1$ in the $\beta$-reduced item
$\Abbr{}{(\Cast{W_1}{V_1})}{}$.
The untyped $\beta$-reduced item $\Abbr{}{V_1}{}$ of \cite{lambdadeltaJ1a}
shows here its weakness causing \Thref{}{cpxs_fwd_beta} to fail.
\Thref{}{cpr_cpx} is proved by induction on its premise.
\Thref{}{sta_cpx} is proved by induction on its first premise
and by cases on its second premise after replacing $\LSTAS{h}{L}{T_1}{T_2}{1}$
with $\LSTAS{h}{L}{T_1}{T_2}{n}$ and $n=1$.
\Thref{}{fqu_cpx_trans} is proved by cases on its first premise.
\Thref{}{lpx_fqu_trans} is proved by cases on its second premise
and then by cases on its first premise.
\Thref{}{lpx_cpx_trans} is proved
by induction on its second premise and by cases on its first premise.
\Thref{}{cpxs_fwd_beta} is proved
directly with the help of \Thref{}{lsubr_cpx_trans}
and \FigRef{lsubr}{\mathrm{beta}}. 
\end{proof}

The ``transitivity through refinement'',
\Thref{cpr_props}{lsubr_cpr_trans} and \Thref{cpx_props}{lsubr_cpx_trans},
is the crucial property that holds for ordinary reduction
and that extended reduction must preserve in order to guarantee
the ``smooth'' update of the strong normalization proof advocated in \SubRef{cpx_defs}.
In particular, extended reduction preserves \Thref{cpx_props}{cpxs_fwd_beta}, 
and thus preserves the saturation condition of \FigRef{gcr}{S3}
for the subset of strongly normalizing closures.
Another interesting property of extended reduction is
the ``square'' of \Thref{cpx_props}{fqu_cpx_trans},
which improves the ``pentagon'' of \Ruleref{cpr_fqu_conf}.
Notice that a transition $l$ makes the fifth ``side'' disappear.

Unfortunately, the ``pentagon'' remains in \Thref{cpx_props}{lpx_fqu_trans},
where the extended reduction for terms is needed in the case
$L_2 = \Pair{K_2}{}{V}$ and $T_2 = \LRef{0}$.

\Thref{cpx_props}{lpx_cpx_trans} shows that
extended computation for environments is generated by the next rules
that resemble \FigRef{lpx}{}. The same holds for ordinary computation.
\[
\Rule{}{\mathrm{atom}}{\LPXS{h}{g}{\Null}{\Null}} \quad
\Rule{\LPXS{h}{g}{L_1}{L_2}\quad\CPXS{h}{g}{L_1}{W_1}{W_2}}{\mathrm{pair}}
     {\LPXS{h}{g}{\Pair{L_1}{}{W_1}}{\Pair{L_2}{}{W_2}}}
\EqLabel{lpxs_alt}\]

\subsection{Results on Atomic Arity Assignment}
\SubLabel{aaa_props}

The properties of atomic arity assignment and of its refinement are listed next.

\begin{theorem}[arity assignment and its refinement]\hfil
\ObjLabel{aaa_props}
\begin{enumerate}

\item\ObjLabel{lsuba_fwd_lsubr}
(inclusion of refinement)\\
If $\LSubA{L_1}{L_2}$ then $\LSubR{L_1}{L_2}$.

\item\ObjLabel{lsuba_aaa_trans}
(transitivity of assignment through refinement)\\
If $\LSubA{L_1}{L_2}$ and $\AAA{L_2}{T}{A}$ then $\AAA{L_1}{T}{A}$.

\item\ObjLabel{lsuba_aaa_conf}
(confluence of refinement and assignment)\\
If $\LSubA{L_1}{L_2}$ and $\AAA{L_1}{T}{A}$ then $\AAA{L_2}{T}{A}$.

\item\ObjLabel{lsuba_trans}
(transitivity of refinement)\\
If $\LSubA{L_1}{L}$ and $\LSubA{L}{L_2}$ then $\LSubA{L_1}{L_2}$. 

\item\ObjLabel{aaa_mono}
(uniqueness of atomic arities)\\
If $\AAA{L}{T}{A_1}$ and $\AAA{L}{T}{A_2}$ then $A_1 = A_2$.

\item\ObjLabel{aaa_lstas}
(inclusion of assignment)\\
If $\AAA{L}{T}{A}$ then for each $h$ and $n$ then there exists $U$\\
such that $\LSTAS{h}{L}{T}{U}{n}$ and $\AAA{L}{U}{A}$.

\item\ObjLabel{cpx_lpx_aaa_conf}
(preservation of atomic arity through extended reduction, general form)\\
If $\AAA{L_1}{T_1}{A}$ and $\CPX{h}{g}{L_1}{T_1}{T_2}$
and $\LPX{h}{g}{L_1}{L_2}$ then $\AAA{L_2}{T_2}{A}$.

\end{enumerate}
\end{theorem}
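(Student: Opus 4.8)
The plan is to prove \Thref{aaa_props}{cpx_lpx_aaa_conf} by well-founded induction on the proper subclosures of $\CLOSURE{L_1}{T_1}$ (\SubRef{closures}), and by cases on the last rule deriving $\CPX{h}{g}{L_1}{T_1}{T_2}$. In each case one inverts the hypothesis $\AAA{L_1}{T_1}{A}$ according to the shape of $T_1$, applies the induction hypothesis to the constituents of $T_1$ (whose closures are proper subclosures of $\CLOSURE{L_1}{T_1}$ by the rules of \FigRef{fqu}{}), and reassembles an arity derivation for $T_2$ in $L_2$ with the rules of \FigRef{aaa}{}. When the last rule is \FigRef{cpx}{\mathrm{bind}}, \FigRef{cpx}{\beta}, \FigRef{cpx}{\theta} or \FigRef{cpx}{\zeta}, the induction hypothesis is invoked on an extended environment; the extension step is provided by \FigRef{lpx}{\mathrm{pair}} applied to the given $\LPX{h}{g}{L_1}{L_2}$ together with the appropriate reduction of the involved binder, which is among the premises of the rule (or is the reflexivity of $\CPX$ in the case of \FigRef{cpx}{\zeta}). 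The cases \FigRef{cpx}{s} and \FigRef{cpx}{\mathrm{atom}} for a sort are immediate (the arity is $\Full$), the cases \FigRef{cpx}{\epsilon} and \FigRef{cpx}{e} follow from the induction hypothesis after inverting \FigRef{aaa}{\mathrm{cast}}, and \FigRef{cpx}{\mathrm{flat}} is routine by inverting \FigRef{aaa}{\mathrm{appl}} or \FigRef{aaa}{\mathrm{cast}}.

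For the reference rules, namely \FigRef{cpx}{\mathrm{atom}} with $T_1 = \LRef{i}$ and \FigRef{cpx}{\delta}, one combines $\LPX{h}{g}{L_1}{L_2}$ with the drop $\Drop{0}{i}{L_1}{\Pair{K}{}{W}}$ coming from inverting \FigRef{aaa}{\mathrm{lref}}, by the commutation of extended parallel reduction for environments with drop (a routine lemma, provable by induction on the drop derivation): this yields $\Drop{0}{i}{L_2}{\Pair{K_2}{}{W_2}}$, $\LPX{h}{g}{K}{K_2}$, and a reduction of $W$ in $K$. Determinism of drop identifies the entry exposed by \FigRef{aaa}{\mathrm{lref}} with the one mentioned by \FigRef{cpx}{\delta}. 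Since $\CLOSURE{K}{W}$ is a proper subclosure of $\CLOSURE{L_1}{\LRef{i}}$ (compose \FigRef{fqu}{\mathrm{drop}} and \FigRef{fqu}{\mathrm{lref}}), the induction hypothesis applied to the relevant reduction of $W$ gives the arity of the corresponding entry of $L_2$; this settles \FigRef{cpx}{\mathrm{atom}} via \FigRef{aaa}{\mathrm{lref}}, while in the \FigRef{cpx}{\delta} case one further propagates that arity through $\Lift{0}{i+1}{W_2}{V_2}$ using the standard fact that relocation preserves atomic arity across a matching drop.

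The main obstacle is \FigRef{cpx}{\beta}, where $T_1 = \Appl{V_1}{\Abst{}{W_1}{T_1'}}$ reduces to $T_2 = \Abbr{}{(\Cast{W_2}{V_2})}{T_2'}$: the environment of the body turns from the declaration $\LDec{L_1}{}{W_1}$ into the $\beta$-definition $\LDef{L_2}{}{(\Cast{W_2}{V_2})}$, and these are not related by \FigRef{lpx}{}, so a plain application of the induction hypothesis does not reach the required environment. The plan is to invert \FigRef{aaa}{\mathrm{appl}} and \FigRef{aaa}{\mathrm{abst}} to obtain $\AAA{L_1}{V_1}{B}$, $\AAA{L_1}{W_1}{B}$, and $\AAA{\LDec{L_1}{}{W_1}}{T_1'}{A}$ with $\Impl{B}{A}$ the arity of the abstraction; to apply the induction hypothesis to $V_1$ and $W_1$ (proper subclosures) and to $T_1'$ in the extended environment $\LDec{L_1}{}{W_1}$ paired with $\LDec{L_2}{}{W_2}$, obtaining $\AAA{L_2}{V_2}{B}$, $\AAA{L_2}{W_2}{B}$ and $\AAA{\LDec{L_2}{}{W_2}}{T_2'}{A}$; to build $\AAA{L_2}{\Cast{W_2}{V_2}}{B}$ by \FigRef{aaa}{\mathrm{cast}} and hence $\LSubA{\LDef{L_2}{}{(\Cast{W_2}{V_2})}}{\LDec{L_2}{}{W_2}}$ by \FigRef{lsuba}{\mathrm{beta}} (using the immediate reflexivity of $\LSubA$); and finally to conclude $\AAA{\LDef{L_2}{}{(\Cast{W_2}{V_2})}}{T_2'}{A}$ by \Thref{aaa_props}{lsuba_aaa_trans}, whence $\AAA{L_2}{\Abbr{}{(\Cast{W_2}{V_2})}{T_2'}}{A}$ by \FigRef{aaa}{\mathrm{abbr}}.

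The cases \FigRef{cpx}{\theta} and \FigRef{cpx}{\zeta} go through along the same lines but more smoothly, since the binder pushed into the environment is a definition in both the redex and the reductum, so \FigRef{lpx}{\mathrm{pair}} alone suffices to move to the reduced environment; in these two cases one additionally uses both directions of the relocation lemma for atomic arity in order to handle the lifts $\Lift{0}{1}{V_2}{W_2}$ and $\Lift{0}{1}{T_2}{U_2}$ occurring in \FigRef{cpx}{\theta} and \FigRef{cpx}{\zeta} respectively. Apart from the $\LSubA$ detour in the $\beta$ case, the only non-structural ingredients are thus the commutation of \DefRef{lpx}{} with drop and the relocation lemmas for \FigRef{aaa}{}, both of which are routine.
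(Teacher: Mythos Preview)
Your proposal addresses only item~(7), but that is the substantial part; items~(1)--(6) are routine and the paper dispatches them in one line each. For~(7) your argument is correct and hits the same key point as the paper: in the $\beta$ case one cannot relate $\LDec{L_2}{}{W_2}$ and $\LDef{L_2}{}{(\Cast{W_2}{V_2})}$ by extended reduction for environments, so one builds the refinement instance of \FigRef{lsuba}{beta} and invokes \Thref{aaa_props}{lsuba_aaa_trans} to transport the arity of the body. That is exactly what the paper does.

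The one genuine difference is the induction scheme. The paper proceeds by structural induction on the derivation of $\AAA{L_1}{T_1}{A}$, then by cases on the reduction step and on $\LPX{h}{g}{L_1}{L_2}$; you instead use well-founded induction on the proper subclosures of $\CLOSURE{L_1}{T_1}$ and lead with the case analysis on the reduction. Both organizations work here and produce the same subgoals. The paper's choice is a bit more economical: inducting on the arity derivation already hands you $\AAA{K}{W}{B}$ in the $\LRef{i}$ case and the induction hypothesis for it, so the commutation of $\LPX{}{}{}{}$ with drop and the subclosure facts you invoke are absorbed into the structure of the induction rather than appearing as separate lemmas. Your route, on the other hand, makes explicit where the environment reduction is actually needed (the $\delta$, bind, $\beta$, $\theta$ cases), which the paper only remarks in passing.
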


\begin{proof}
\Thref{}{lsuba_fwd_lsubr} and \Thref{}{aaa_lstas}
are proved by induction on their premise.
\Thref{}{lsuba_aaa_trans} and \Thref{}{lsuba_aaa_conf}
are proved by induction on their second premise and by cases on their
first premise.
\Thref{}{lsuba_trans} is proved
by induction on its first premise and by cases on its second premise
with the help of \Thref{}{lsuba_aaa_trans} and \Thref{}{lsuba_aaa_conf}.
\Thref{}{aaa_mono} is proved
by induction on its first premise and by cases on its second premise.
\Thref{}{cpx_lpx_aaa_conf} is proved
by induction on its first premise by cases on its second premise
and then by cases on its third premise.
As for \Thref{cpr_props}{cpr_conf_lpr}, 
reduction for environments emerges when considering \FigRef{cpx}{\delta}
and when a binder in the ``spine'' of $T_1$
is pushed into $L_1$ in the cases of 
\FigRef{cpx}{\mathrm{bind}}, \FigRef{cpx}{\beta}, and \FigRef{cpx}{\theta}.
Moreover, \Thref{}{lsuba_aaa_trans} and \FigRef{lsuba}{\mathrm{beta}}
are invoked when \FigRef{cpx}{\beta} is considered.
\end{proof}

\Thref{aaa_props}{cpx_lpx_aaa_conf} (proposition 500 of \cite{lambdadeltaV2a})
states the ``subject reduction'' property of the arity assignment,
a prerequisite for the preservation \Thref{preserve_props}{}.

\subsection{Results on Reducibility Candidates}
\SubLabel{gcp_props}

The properties of reducibility candidates and of their refinement are listed next.

\begin{theorem}[reducibility candidates and their refinement]\hfil
\ObjLabel{gcp_props}
\begin{enumerate}

\item\ObjLabel{csx_gcr}
(the candidate of strongly normalizing closures for extended reduction)\\
For any $h$ and $\Deg{h}{g}{}{}$, the subset $\css{\CLOSURE{L}{T} \st \CSX{h}{g}{L}{T}}$\\
is a reducibility candidate for itself.

\item\ObjLabel{acr_gcr}
(the candidate associated to an atomic arity)\\
If $\R$ is a reducibility candidate for itself\\
then $\ACR{\R}{}{A}$ is a reducibility candidate for $\R$.

\item\ObjLabel{acr_aaa_csubc_lifts}
(reducibility theorem for extended reduction, general form)\\
If $\R$ is a reducibility candidate for itself then\\
$\LSubC{\R}{}{L_1}{L_2}$ and
$\Drops{\vect{c}}{L_2}{K_2}$ and
$\AAA{K_2}{T}{A}$ and
$\Lifts{\vect{c}}{T}{U}$
imply $\InGCR{L_1}{U}{\ACR{\R}{}{A}}$.

\item\ObjLabel{gcr_aaa}
(reducibility theorem for extended reduction)\\
If $\R$ is a reducibility candidate for itself then
$\AAA{L}{T}{A}$ implies $\InGCR{L}{T}{\R}$.

\item\ObjLabel{aaa_csx}
(strong normalization theorem for extended reduction)\\
If $\AAA{L}{T}{A}$ then $\CSX{h}{g}{L}{T}$.

\item\ObjLabel{lsubc_fwd_lsubr}
(inclusion of refinement)\\
If $\LSubC{\R}{}{L_1}{L_2}$ then $\LSubR{L_1}{L_2}$.

\item\ObjLabel{lsuba_lsubc}
(inverse inclusion of refinement)\\
If $\R$ is a reducibility candidate for itself then
$\LSubA{L_1}{L_2}$ implies $\LSubC{\R}{}{L_1}{L_2}$.

\end{enumerate}
\end{theorem}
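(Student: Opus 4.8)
The plan is to proceed by induction on the derivation of $\LSubA{L_1}{L_2}$ (\DefRef{lsuba}{}), distinguishing the three rules of \FigRef{lsuba}{}. The hypothesis that $\R$ is a reducibility candidate for itself is kept fixed throughout the induction, since it is needed to invoke the reducibility theorem in the nontrivial case.

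The base case \FigRef{lsuba}{\mathrm{atom}} is immediate: $\LSubC{\R}{}{\Null}{\Null}$ holds by \FigRef{lsubc}{\mathrm{atom}}. The case \FigRef{lsuba}{\mathrm{pair}} is equally routine: from the induction hypothesis $\LSubC{\R}{}{L_1}{L_2}$ and \FigRef{lsubc}{\mathrm{pair}} one obtains $\LSubC{\R}{}{\Pair{L_1}{}{W}}{\Pair{L_2}{}{W}}$.

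The work is concentrated in the case \FigRef{lsuba}{\mathrm{beta}}, where $L_1 = \LDef{K_1}{}{(\Cast{W}{V})}$, $L_2 = \LDec{K_2}{}{W}$, and the premises are $\LSubA{K_1}{K_2}$, $\AAA{K_1}{\Cast{W}{V}}{B}$, and $\AAA{K_2}{W}{B}$. I would aim at \FigRef{lsubc}{\mathrm{beta}}: its premise $\LSubC{\R}{}{K_1}{K_2}$ is exactly the induction hypothesis, and its premise $\AAA{K_2}{W}{B}$ is given, so it only remains to produce $\InGCR{K_1}{W}{\ACR{\R}{}{B}}$ and $\InGCR{K_1}{V}{\ACR{\R}{}{B}}$. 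First I would invert $\AAA{K_1}{\Cast{W}{V}}{B}$ through \FigRef{aaa}{\mathrm{cast}} (legitimate because the arity rules are syntax-oriented) to extract $\AAA{K_1}{V}{B}$, and then transport it to $\AAA{K_2}{V}{B}$ by \Thref{aaa_props}{lsuba_aaa_conf}. At this point I would apply the general reducibility theorem \Thref{gcp_props}{acr_aaa_csubc_lifts} twice, each time instantiated with the empty list of relocation pairs $\vect{c} = \Empty$, so that $\Drops{\Empty}{K_2}{K_2}$, $\Lifts{\Empty}{W}{W}$, and $\Lifts{\Empty}{V}{V}$ all hold trivially: from $\LSubC{\R}{}{K_1}{K_2}$ and $\AAA{K_2}{W}{B}$ it yields $\InGCR{K_1}{W}{\ACR{\R}{}{B}}$, and from $\LSubC{\R}{}{K_1}{K_2}$ and $\AAA{K_2}{V}{B}$ it yields $\InGCR{K_1}{V}{\ACR{\R}{}{B}}$. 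Feeding these two facts, together with the induction hypothesis and the given $\AAA{K_2}{W}{B}$, into \FigRef{lsubc}{\mathrm{beta}} closes the case.

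I do not anticipate a genuine obstacle. The only point deserving care is the choice of instance of \Thref{gcp_props}{acr_aaa_csubc_lifts}: specializing it to the empty list of relocation pairs leaves the environment $K_2$ and the terms $W$, $V$ unchanged, and that instance is precisely the bridge from an arity assignment over $L_2$ to membership in $\ACR{\R}{}{B}$ over $L_1$ that \FigRef{lsubc}{\mathrm{beta}} requires — a bridge available only because $\R$ is assumed to be a reducibility candidate for itself.
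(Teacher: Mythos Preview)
Your proposal for part~\Thref{gcp_props}{lsuba_lsubc} is correct and follows the same route as the paper, which says only that the result ``is proved by induction on its premise with the help of \Thref{gcp_props}{acr_aaa_csubc_lifts}''. Your handling of the \FigRef{lsuba}{\mathrm{beta}} case---inverting $\AAA{K_1}{\Cast{W}{V}}{B}$, transporting $\AAA{K_1}{V}{B}$ to $K_2$ via \Thref{aaa_props}{lsuba_aaa_conf}, and then invoking \Thref{gcp_props}{acr_aaa_csubc_lifts} with the empty relocation list---is exactly the intended expansion of that sketch.
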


\begin{proof}
\Thref{}{csx_gcr} is proved directly
by invoking \Thref{cpx_props}{cpxs_fwd_beta}
and similar propositions (one for each extended redex).
\Thref{}{acr_gcr} is proved by induction on $A$.
\Thref{}{acr_aaa_csubc_lifts} is proved
by induction on $\AAA{K_2}{T}{A}$ and by cases on the other premises
by invoking \Thref{}{acr_gcr}.
Multiple relocation emerges from \FigRef{cfun}{\mathrm{cfun}},
while the refinement emerges since \FigRef{lsubc}{\mathrm{beta}}
is needed when $T$ is a $\lambda$-abstraction.
\Thref{aaa_props}{aaa_mono} is invoked when $T$ is a reference to a declaration
in the case of \FigRef{lsubc}{\mathrm{beta}}.
\Thref{}{gcr_aaa} is a corollary of \Thref{}{acr_aaa_csubc_lifts}
and of \FigRef{gcr}{\mathrm{S1}}.
\Thref{}{aaa_csx} is a corollary of \Thref{}{gcr_aaa}
and of \Thref{}{csx_gcr}.
\Thref{}{lsubc_fwd_lsubr} is proved by induction on its premise.
\Thref{}{lsuba_lsubc} is proved by induction on its premise
with the help of \Thref{}{acr_aaa_csubc_lifts}.
\end{proof}

\Thref{gcp_props}{csx_gcr} is the most relevant property of
strongly normalizing terms.
Moreover the relation $\CSX{h}{g}{L}{T}$ is generated by the next rule
resembling \FigRef{cnx}{csx}.
\[
\Rule{\MAll{T_2} (\CPXS{h}{g}{L}{T_1}{T_2}) \MImp (T_1 \neq T_2) \MImp (\CSX{h}{g}{L}{T_2})}
     {\mathrm{csx}}{\CSX{h}{g}{L}{T_1}}
\EqLabel{csx_alt}\]

\subsection{Results on Lazy Equivalence}
\SubLabel{lleq_props}

The relevant properties of pointwise union and lazy equivalence are listed next.

We give alternative definitions of lazy equivalence.
The nonrecursive definition \EqRef{lleq_alt}
is more appropriate for the proofs we shall present.
A nonrecursive definition of pointwise union
in the style of \EqRef{lleq_alt} is available as well.
It is not easy to read, though.
\begin{small}\[
\Rule{\begin{Premise}{.}{.}\\\\
      \Length{L_1} = \Length{L_2}
      \end{Premise}\quad
      \begin{Premise}{(}{)}
      \MAll{K_1,K_2,W_1,W_2,i}\\
      l \le i \MImp \FreeS{i}{l}{L_1}{T} \MImp
      \Drop{0}{i}{L_1}{\Pair{K_1}{1}{W_1}} \MImp
      \Drop{0}{i}{L_2}{\Pair{K_2}{2}{W_2}} \MImp \\
      \BIND{1} = \BIND{2} \MAnd W_1 = W_2 
      \end{Premise}
     }{\mathrm{lleq}}
     {\LLEq{l}{T}{L_1}{L_2}}
\EqLabel{lleq_alt}\]\end{small}
\begin{small}\[
\Rule{\begin{Premise}{.}{.}\\\\
      \Length{L_1} = \Length{L_2}
      \end{Premise}\quad
      \begin{Premise}{(}{)}
      \MAll{K_1,K_2,W_1,W_2,i}\\
      l \le i \MImp (\MAll{U} \NLift{i}{1}{U}{T}) \MImp
      \Drop{0}{i}{L_1}{\Pair{K_1}{1}{W_1}} \MImp
      \Drop{0}{i}{L_2}{\Pair{K_2}{2}{W_2}} \MImp \\
      \BIND{1} = \BIND{2} \MAnd W_1 = W_2 \MAnd \LLEq{0}{W_1}{K_1}{K_2}
      \end{Premise}
     }{\mathrm{lleq}}
     {\LLEq{l}{T}{L_1}{L_2}}
\EqLabel{lleq_alt_rec}\]\end{small}

\begin{theorem}[pointwise union]\hfil
\ObjLabel{llor_props}
\begin{enumerate}

\item\ObjLabel{llor_tail_frees}
(construction lemma for tail binder, positive case)\\
If $\FreeS{\Length{L_1}}{l}{\Bind{1}{W_1}{L_1}}{U}$
and $l \le \Length{L_1}$ then\\ 
$\LLOr{l}{U}{L_1}{L_2}{L}$ implies
$\LLOr{l}{U}{(\Bind{1}{W_1}{L_1})}{(\Bind{2}{W_2}{L_2})}{\Bind{2}{W_2}{L}}$.

\item\ObjLabel{llor_tail_cofrees}
(construction lemma for tail binder, negative case)\\
If $\NFreeS{\Length{L_1}}{l}{\Bind{1}{W_1}{L_1}}{U}$ then\\ 
$\LLOr{l}{U}{L_1}{L_2}{L}$ implies
$\LLOr{l}{U}{(\Bind{1}{W_1}{L_1})}{(\Bind{2}{W_2}{L_2})}{\Bind{1}{W_1}{L}}$.

\item\ObjLabel{llor_total}
(existence lemma)\\
If $\Length{L_1} = \Length{L_2}$ then for each $T$ and $l$
then there exists $L$ such that $\LLOr{l}{T}{L_1}{L_2}{L}$.

\end{enumerate}
\end{theorem}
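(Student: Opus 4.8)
Parts \Thref{}{llor_tail_frees} and \Thref{}{llor_tail_cofrees} record the effect on a pointwise union of adjoining one outermost (tail) binder to both environments, and they match the construction rules \FigRef{llor}{dx} and \FigRef{llor}{sn} read in the forward direction: given $\LLOr{l}{U}{L_1}{L_2}{L}$, the side conditions $l \le \Length{L_1}$ and $\FreeS{\Length{L_1}}{l}{\Bind{1}{W_1}{L_1}}{U}$ of \Thref{}{llor_tail_frees} are precisely those of \FigRef{llor}{dx}, and the side condition $\NFreeS{\Length{L_1}}{l}{\Bind{1}{W_1}{L_1}}{U}$ of \Thref{}{llor_tail_cofrees} is precisely that of \FigRef{llor}{sn}. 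So each reduces to a single application of the corresponding rule; were \DefRef{llor}{} instead set up by recursion on the list structure of $L_1$, a short induction would be needed to relate that recursion to extension at the tail binder, but nothing deeper.

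The content lies in \Thref{}{llor_total}, which I would prove by induction on $\Length{L_1}$, equal to $\Length{L_2}$ by hypothesis. If the common length is $0$, then $L_1 = L_2 = \Null$ and $L = \Null$ works by \FigRef{llor}{atom}. If it is $n+1$, I would decompose both environments at their outermost entry, $L_1 = \Bind{1}{W_1}{M_1}$ and $L_2 = \Bind{2}{W_2}{M_2}$ with $\Length{M_1} = \Length{M_2} = n$, and apply the induction hypothesis to $M_1, M_2, T, l$ to obtain $M$ with $\LLOr{l}{T}{M_1}{M_2}{M}$. It then remains to fix the outermost entry of the union: splitting on whether $\FreeS{n}{l}{\Bind{1}{W_1}{M_1}}{T}$ holds, I would close the positive case by \Thref{}{llor_tail_frees} with $L = \Bind{2}{W_2}{M}$ and the negative case by \Thref{}{llor_tail_cofrees} with $L = \Bind{1}{W_1}{M}$.

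The main obstacle will be making that split both sound and exhaustive. Soundness needs the decidability of hereditary freeness $\FreeS{i}{l}{L}{T}$, a routine structural fact about \DefRef{frees}{}. Exhaustiveness is the delicate point: \FigRef{llor}{dx} carries the extra guard $l \le \Length{M_1}$, which is absent from the hypotheses of \Thref{}{llor_total}, so in the positive branch I must recover $l \le n$ from the freeness assumption alone — this rests on the forward property of \DefRef{frees}{} that $\FreeS{i}{l}{L}{T}$ forces $l \le i$. Without it, the configuration $l > n$ together with the depth-$n$ variable hereditarily free would be covered by neither \FigRef{llor}{sn} nor \FigRef{llor}{dx} and the induction would stall; so the heart of the proof is precisely this interaction between the level bound built into \DefRef{frees}{} and the side condition of \FigRef{llor}{dx}.
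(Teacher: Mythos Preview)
Your outline matches the paper's: parts~1 and~2 come from the construction rules, and part~3 is by induction on $\Length{L_1}$ invoking them. One nuance on parts~1 and~2: the paper proves them ``by cases on their last premise'' rather than by a bare rule application. Since \FigRef{llor}{dx} and \FigRef{llor}{sn} are literally the statements of parts~1 and~2, this (and the file location \texttt{llor\_alt}) indicates that the formalised definition of pointwise union is not the tail-recursive one displayed in \FigRef{llor}{} but the nonrecursive alternative the text alludes to, with the tail rules then derived. Your hedge that a short argument is needed ``were the definition set up differently'' is therefore the operative case, not the parenthetical one.

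There is, however, a genuine gap in your treatment of part~3. You assert that $\FreeS{i}{l}{L}{T}$ forces $l \le i$ and rely on this to recover $l \le n$ in the positive branch of the case split. But rule \FigRef{frees}{eq} imposes no relation between $l$ and $i$: for instance $\FreeS{0}{5}{L}{\LRef{0}}$ holds by that rule alone. So in the configuration $l > n$ with variable $n$ directly free in $T$, neither part~1 (which needs $l \le n$) nor part~2 (which needs the negation of freeness) applies, and the induction stalls exactly as you feared. You have located the crux correctly but supplied the wrong forward property of \DefRef{frees}{}; only the \FigRef{frees}{be} clause yields $l \le i$. The resolution lies in the definition actually used in the formalisation: the informal description preceding \DefRef{llor}{} takes the $i$-th entry from $L_1$ whenever $i < l$, irrespective of freeness, and this extra disjunct---absent from the displayed \FigRef{llor}{sn}---is what closes the case.
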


\begin{proof}
\Thref{}{llor_tail_frees} and \Thref{}{llor_tail_cofrees}
are proved by cases on their last premise.
\Thref{}{llor_total} is proved
by induction on $\Length{L_1}$ with the help of
\Thref{}{llor_tail_frees} and \Thref{}{llor_tail_cofrees}.
\end{proof}

\Thref{llor_props}{llor_total} (proposition 1400 of \cite{lambdadeltaV2a})
needs tail binders (\DefRef{append}{}). 

\begin{theorem}[lazy equivalence]\hfil
\ObjLabel{lleq_props}
\begin{enumerate}

\item\ObjLabel{llpx_sn_llor_fwd_sn}
(left operand lemma)\\
If $\LLEq{l}{T}{L_1}{L_2}$ and $\LPX{h}{g}{L_2}{K_2}$
and $\LLOr{l}{T}{L_1}{K_2}{K_1}$
then $\LPX{h}{g}{L_1}{K_1}$.

\item\ObjLabel{llpx_sn_llor_dx}
(right operand lemma)\\
If $\LLEq{l}{T}{L_1}{L_2}$ and $\LPX{h}{g}{L_2}{K_2}$
and $\LLOr{l}{T}{L_1}{K_2}{K_1}$
then $\LLEq{l}{T}{K_2}{K_1}$.

\item\ObjLabel{lleq_lreq_trans}
(transitivity with ranged equivalence)\\
If $\LLEq{l}{T}{L_1}{L}$ and $(\MAll{m} \LREq{l}{m}{L}{L_2})$
then $\LLEq{l}{T}{L_1}{L_2}$.

\item\ObjLabel{lleq_fqu_trans}
(transitivity with direct subclosure)\\
If $\LLEq{0}{T}{L_1}{L_2}$ and $\FQU{L_2}{T}{K_2}{U}$
then there exists $K_1$\\
such that $\FQU{L_1}{T}{K_1}{U}$ and $\LLEq{0}{U}{K_1}{K_2}$.

\item\ObjLabel{lleq_cpx_trans}
(transitivity with extended reduction for terms)\\
If $\LLEq{0}{T_1}{L_1}{L_2}$ and $\CPX{h}{g}{L_2}{T_1}{T_2}$
then $\CPX{h}{g}{L_1}{T_1}{T_2}$.

\item\ObjLabel{lleq_lpx_trans}
(transitivity with extended reduction for environments)\\
If $\LLEq{l}{T}{L_1}{L_2}$ and $\LPX{h}{g}{L_2}{K_2}$
then there exists $K_1$\\
such that $\LPX{h}{g}{L_1}{K_1}$ and $\LLEq{l}{T}{K_1}{K_2}$.

\item\ObjLabel{cpx_lleq_conf_sn}
(confluence with extended reduction for terms)\\
If $\CPX{h}{g}{L_1}{T_1}{T_2}$ then 
$\LLEq{0}{T_1}{L_1}{L_2}$ implies $\LLEq{0}{T_2}{L_1}{L_2}$.

\end{enumerate}
\end{theorem}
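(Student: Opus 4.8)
The plan is to work throughout with the nonrecursive characterization \EqRef{lleq_alt}, which reads $\LLEq{l}{T}{L_1}{L_2}$ as ``equal lengths, together with coincidence of the $i$-th entries of $L_1$ and $L_2$ (same binder kind and same content) for every $i \ge l$ that is hereditarily free in $\CLOSURE{L_1}{T}$''. Under this reading every clause becomes a statement about the set of hereditarily free variables and about matching entries, which is exactly what reduction, relocation and subclosures act on. The clauses are interdependent, so I would prove them in the order $\{5,7\}$, then $\{1,2\}$, then $6$, keeping $3$ and $4$ aside. Two technical facts will recur. The first is a \emph{bridging lemma}: from $\LLEq{l+1}{T}{\Pair{L_1}{}{W}}{\Pair{L_2}{}{W}}$ and $\LLEq{l}{W}{L_1}{L_2}$ one obtains $\LLEq{l}{T}{\Pair{L_1}{}{W}}{\Pair{L_2}{}{W}}$, i.e. a level may be lowered across a shared top binder once the binder itself is equivalent. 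The second is a \emph{monotonicity} of hereditarily free variables under reduction: if $\CPX{h}{g}{L}{T_1}{T_2}$ then $\FreeS{i}{0}{L}{T_2}$ implies $\FreeS{i}{0}{L}{T_1}$, which is the formal content of the remark (\DefRef{frees}{}) that a hereditarily free variable is one free in a $\delta l$-reduct.

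For \Thref{}{lleq_cpx_trans} I would induct on the derivation of $\CPX{h}{g}{L_2}{T_1}{T_2}$. In the environment-lookup rule \FigRef{cpx}{\delta} (which subsumes the $\delta$-expansion and the $l$-transition) we have $T_1 = \LRef{i}$ and the reduction consults the $i$-th entry of $L_2$; inverting $\LLEq{0}{\LRef{i}}{L_1}{L_2}$ by \FigRef{lleq}{\mathrm{lref}} supplies the same entry content $W_1$ at depth $i$ in $L_1$ (with the same binder kind) and lazily equivalent tails $\LLEq{0}{W_1}{K_1}{K_2}$, so the induction hypothesis applied to the smaller subreduction $\CPX{h}{g}{K_2}{W_1}{W_2}$ yields $\CPX{h}{g}{K_1}{W_1}{W_2}$, and the rule reassembles in $L_1$. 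The context rules recurse after inversion of \FigRef{lleq}{\mathrm{bind}} and \FigRef{lleq}{\mathrm{flat}}; descending under a binder produces a level-$1$ body equivalence that the bridging lemma upgrades to the level-$0$ equivalence the hypothesis needs, using the companion equivalence of the binder. For \Thref{}{cpx_lleq_conf_sn} I would avoid induction on the level (the statement is genuinely specific to level $0$) and instead invoke monotonicity: since every variable hereditarily free in $\CLOSURE{L_1}{T_2}$ is already hereditarily free in $\CLOSURE{L_1}{T_1}$, the entry coincidences guaranteed by $\LLEq{0}{T_1}{L_1}{L_2}$ transfer verbatim to $T_2$ through \EqRef{lleq_alt}.

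The operand lemmas go by induction on $\Length{L_1}$, peeling off the tail binder with \Thref{llor_props}{llor_tail_frees} and \Thref{llor_props}{llor_tail_cofrees}. For \Thref{}{llpx_sn_llor_fwd_sn}, a non-free entry is copied from $L_1$ so the matching step of $\LPX{h}{g}{L_1}{K_1}$ is reflexive, whereas a hereditarily free entry is copied from $K_2$, and the reduction witnessing that entry in $\LPX{h}{g}{L_2}{K_2}$ transports from the $L_2$-tail to the $L_1$-tail by \Thref{}{lleq_cpx_trans}, the two tails agreeing there by lazy equivalence. \Thref{}{llpx_sn_llor_dx} is then immediate: on every hereditarily free entry both $K_1$ and $K_2$ carry the $K_2$-content, so they are lazily equivalent with respect to $T$. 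Finally \Thref{}{lleq_lpx_trans} assembles these: \Thref{llor_props}{llor_total} produces $K_1 = \LLOr{l}{T}{L_1}{K_2}{}$, \Thref{}{llpx_sn_llor_fwd_sn} gives $\LPX{h}{g}{L_1}{K_1}$, and \Thref{}{llpx_sn_llor_dx} with symmetry of lazy equivalence gives $\LLEq{l}{T}{K_1}{K_2}$.

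The last two clauses are more self-contained. \Thref{}{lleq_lreq_trans} is direct through \EqRef{lleq_alt}: the hypothesis $\MAll{m}\LREq{l}{m}{L}{L_2}$ forces $L$ and $L_2$ to have equal length and identical $i$-th entries for all $i \ge l$, which composed with the entry coincidences of $\LLEq{l}{T}{L_1}{L}$ yields those of $\LLEq{l}{T}{L_1}{L_2}$. \Thref{}{lleq_fqu_trans} proceeds by cases on $\FQU{L_2}{T}{K_2}{U}$: the structural cases \FigRef{fqu}{\mathrm{pair}}, \FigRef{fqu}{\mathrm{flat}}, \FigRef{fqu}{\mathrm{bind}} take $K_1$ to be $L_1$ or $\Pair{L_1}{}{W}$ and conclude by inverting the appropriate rule of lazy equivalence (the bind case again lowering the level with the bridging lemma), while the environment cases \FigRef{fqu}{\mathrm{lref}} and \FigRef{fqu}{\mathrm{drop}} read off $K_1$ by dropping the same prefix from $L_1$, with $\LLEq{0}{U}{K_1}{K_2}$ coming from the recursive premise of \FigRef{lleq}{\mathrm{lref}} or from the compatibility of hereditarily free variables with relocation. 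I expect the bookkeeping of levels and de Bruijn indices to be the real obstacle: establishing the bridging lemma cleanly and proving that relocation and environment drop never expose an entry that was not already hereditarily free (the same monotonicity underlying \Thref{}{cpx_lleq_conf_sn}) are the silent prerequisites on which almost every case rests.
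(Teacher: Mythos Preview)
Your proposal is sound and lands on the same assembly for item~(6) that the paper uses (take $K_1$ as the pointwise union and read off the two conclusions from the operand lemmas), but the proofs of the individual clauses diverge from the paper in two notable places.

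For item~(7) the paper does not invoke a free-variable monotonicity lemma. It proceeds by induction on the derivation of $\CPX{h}{g}{L_1}{T_1}{T_2}$ and by cases on the lazy-equivalence premise, and in the binder-producing cases (\FigRef{cpx}{\mathrm{bind}}, \FigRef{cpx}{\beta}, \FigRef{cpx}{\theta}) it must replace the top entry $W_1$ by its reduct $W_2$ inside a level-$1$ equivalence; this is exactly where item~(3) is applied, and the paper remarks that this is ``the purpose of ranged equivalence''. Your route through \EqRef{lleq_alt} and the implication $\FreeS{i}{0}{L_1}{T_2}\Rightarrow\FreeS{i}{0}{L_1}{T_1}$ is a legitimate alternative, but note that establishing that implication is itself an induction on $\CPX{h}{g}{L_1}{T_1}{T_2}$ of essentially the same shape, so you have not avoided the work, only relocated it; you also lose the explanatory link between items~(3) and~(7) that the paper wants to highlight.

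For items~(1) and~(2) the paper does not go through item~(5) or through an induction on $\Length{L_1}$: it argues ``directly'' via \EqRef{lleq_alt} together with the nonrecursive description of pointwise union mentioned just before that rule. Your length induction peels $L_1$ from the tail (as the $\LLOr{}{}{}{}{}$ constructors do), while $\LPX{h}{g}{}{}$ is generated from the head, so to make your argument go through you would still need an entrywise characterization of $\LPX{h}{g}{}{}$; once you have that, the paper's direct argument via the two nonrecursive descriptions is shorter and avoids the extra dependency on item~(5). Finally, for item~(3) the paper proceeds by induction on the first premise rather than through \EqRef{lleq_alt}; both work, but the inductive proof stays closer to the recursive definition of lazy equivalence.
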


\begin{proof}
\Thref{}{llpx_sn_llor_fwd_sn} and \Thref{}{llpx_sn_llor_dx}
are proved directly by accessing to lazy equivalence through \Ruleref{lleq_alt}.
\Thref{}{lleq_lreq_trans} is proved
by induction on its first premise.
\Thref{}{lleq_fqu_trans} and \Thref{}{lleq_cpx_trans}
are proved by induction on their second premise and by cases on their first premise.
\Thref{}{lleq_lpx_trans} follows from
\Thref{}{llpx_sn_llor_fwd_sn} and \Thref{}{llpx_sn_llor_dx}
by taking $K_1 = \LLOr{l}{T}{L_1}{K_2}{}$,
which results from \Thref{llor_props}{llor_total}.
Here we see the purpose of pointwise union.
\Thref{}{cpx_lleq_conf_sn} is proved
by induction on its first premise and by cases on its second premise
with the help of \Thref{}{lleq_lreq_trans} when
\FigRef{cpx}{\mathrm{bind}}, \FigRef{cpx}{\beta}, and \FigRef{cpx}{\theta}
are considered. 
Here we see the purpose of ranged equivalence.
\end{proof}

The shape of the second premise in \Thref{lleq_props}{lleq_lreq_trans}
is due the implicit instantiation of $m$ with $\infty$ in \DefRef{lleq}{}
(lazy equivalence) as noted at the end of \SubRef{lleq_defs}.
\Thref{lleq_props}{lleq_lpx_trans} and \Thref{lleq_props}{cpx_lleq_conf_sn}
(proposition 1000 of \cite{lambdadeltaV2a})
are the most interesting properties of lazy equivalence
with respect to extended reduction.
Their proofs were the most demanding of this set.

\begin{theorem}[strongly normalizing environments]\hfil
\ObjLabel{lsx_props}
\begin{enumerate}

\item\ObjLabel{lsx_cpx_trans_lcosx}
(transitivity of strong normalization for environments through extended reduction)\\
If $\LCoSX{h}{g}{l}{L}$ and $\CPX{h}{g}{L}{T_1}{T_2}$,
then $\LSX{g}{g}{l}{T_1}{L}$ implies $\LSX{g}{g}{l}{T_2}{L}$.

\item\ObjLabel{lsx_lref_be_lpxs}
(construction lemma for variable reference, general form)\\
If $l \le i$ and $\CSX{h}{g}{K_1}{W}$ and $\LPXS{h}{g}{K_1}{K_2}$,\\
then $\Drop{0}{i}{L_2}{\Pair{K_2}{}{W}}$ and $\LSX{h}{g}{0}{W}{K_2}$
imply $\LSX{h}{g}{l}{\LRef{i}}{L_2}$.

\item\ObjLabel{csx_lsx}
(strong normalization for terms implies strong normalization for environments)\\
If $\CSX{h}{g}{L}{T}$ then $\LSX{h}{g}{l}{T}{L}$ for every $l$.

\end{enumerate}
\end{theorem}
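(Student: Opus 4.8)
The plan is to prove the statement by well-founded induction on the derivation of $\CSX{h}{g}{L}{T}$, generalizing over $l$. So I assume $\CSX{h}{g}{L}{T}$ with the inductive hypothesis that for every $T'$ with $\CPX{h}{g}{L}{T}{T'}$ and $T \neq T'$ we already have $\LSX{h}{g}{l'}{T'}{L}$ for every $l'$, and I must establish $\LSX{h}{g}{l}{T}{L}$. By \FigRef{lsx}{lsx} it suffices to take $L_2$ with $\LPX{h}{g}{L}{L_2}$ and $\NLLEq{l}{T}{L}{L_2}$, and to prove $\LSX{h}{g}{l}{T}{L_2}$. The negated lazy equivalence tells us that $L$ and $L_2$ differ in some entry hereditarily referred by $T$ at level $l$; the idea is that any such differing step can be "simulated" on the term side, because a reduction in a relevant entry of the environment forces a genuine reduction of $T$ itself.

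The key step is therefore a lemma of the form: if $\LPX{h}{g}{L}{L_2}$ and $\NLLEq{l}{T}{L}{L_2}$, then there exists $T'$ with $\CPX{h}{g}{L}{T}{T'}$ and $T \neq T'$. The plan is to obtain this by induction on the structure of $T$ (or on the derivation of $\NLLEq{l}{T}{L}{L_2}$, reading \FigRef{lleq}{} contrapositively), following the hereditary-reference pattern: if $T = \LRef{i}$ with $l \le i$ and the $i$-th entries of $L$ and $L_2$ differ, then a $\delta$- or $l$-step on $\LRef{i}$ produces a different term (using \FigRef{cpx}{\delta}); the recursion into the head entry $W$ at level $0$ is handled by the inductive hypothesis of this lemma together with relocation (\Thref{cpx_props}{fqu_cpx_trans}-style transport along \FigRef{fqu}{lref} and \FigRef{fqu}{drop}). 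For $T$ a binder or a flat term, one splits on which operand carries the inequivalence, recurses, and reassembles the step with \FigRef{cpx}{bind} or \FigRef{cpx}{flat}; in the binder case one uses that $\LPX{h}{g}{L}{L_2}$ extends to $\LPX{h}{g}{\Pair{L}{}{W}}{\Pair{L_2}{}{W}}$ by \FigRef{lpx}{pair} with a reflexive step on $W$.

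Granting the lemma, I get $T'$ with $\CPX{h}{g}{L}{T}{T'}$ and $T \neq T'$, hence by the inductive hypothesis $\LSX{h}{g}{l}{T'}{L}$ for every $l$. It remains to move from $L$ to $L_2$: since $\LPX{h}{g}{L}{L_2}$, I need $\LSX{h}{g}{l}{T}{L_2}$, and here I would invoke the confluence/transitivity machinery relating extended reduction for terms and for environments — namely \Thref{lleq_props}{lleq_lpx_trans} and \Thref{lleq_props}{cpx_lleq_conf_sn}, which let me transport a strong-normalization-for-environments judgment along a compatible pair consisting of a term step $\CPX{h}{g}{L}{T}{T'}$ and an environment step $\LPX{h}{g}{L}{L_2}$, together with \Thref{cpx_props}{lpx_cpx_trans} (absorption of extended reduction for environments) to keep everything inside a single environment when needed. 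Concretely the recursion bottoms out because each environment step that matters is mirrored by a non-trivial term step, and term steps are well-founded by $\CSX{h}{g}{L}{T}$.

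\emph{Main obstacle.} The hard part will be the simulation lemma, and specifically the variable case with hereditary recursion into $W$: one must relocate the differing sub-entry back out through $\Drop{0}{i}{\cdot}{\cdot}$ and then reconstruct a $\delta$-step on $\LRef{i}$ whose reduct genuinely differs from $\LRef{i}$, which relies on the irreflexivity facts about extended reduction (that \DefRef{cpx}{} forbids single-step cycles, cf.\ the remarks before \DefRef{cnx}{}) and on uniqueness of the relevant data. Managing the level parameter $l$ under binders — keeping the "$l \le i$" bookkeeping of \FigRef{lleq}{lref} and \FigRef{frees}{} synchronized with the de Bruijn shifts — is the delicate routine part; I expect the proof in the digital development proceeds by a direct argument accessing lazy equivalence through \Ruleref{lleq_alt} rather than the recursive \FigRef{lleq}{}, exactly as \Thref{lleq_props}{llpx_sn_llor_fwd_sn} does.
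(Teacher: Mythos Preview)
Your plan for \Thref{lsx_props}{csx_lsx} has a genuine gap in the closing step. Suppose your simulation lemma does give you $T'$ with $\CPX{h}{g}{L}{T}{T'}$ and $T\neq T'$; the induction hypothesis on $\CSX{h}{g}{L}{T}$ then yields $\LSX{h}{g}{l}{T'}{L}$. But the goal at that point is $\LSX{h}{g}{l}{T}{L_2}$: the \emph{original} term in the \emph{reduced} environment. You would have to move backward along the term step (from $T'$ to $T$) and forward along the environment step (from $L$ to $L_2$) simultaneously. Neither \Thref{lleq_props}{lleq_lpx_trans} nor \Thref{lleq_props}{cpx_lleq_conf_sn} does this: the first commutes an environment step past a lazy equivalence, the second pushes a lazy equivalence along a term step; neither transports an $\LSX{}{}{}{}{}$ judgment in the direction you need. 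Part~(1) of this very theorem goes the wrong way (from $T_1$ to its reduct $T_2$), and there is no ``backward'' version. So the induction on $\CSX{}{}{}{}$ alone does not terminate the argument: each unfolding of $\LSX{}{}{}{}{}$ changes $L$ while keeping $T$ fixed, and your hypothesis only speaks about changes to $T$ while keeping $L$ fixed.

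The paper takes a different route: it argues by induction on the proper subclosures of $\CLOSURE{L}{T}$ and assembles $\LSX{h}{g}{l}{T}{L}$ from construction lemmas, one per term former. The delicate case is the closed variable reference, handled by part~(2) (\Thref{lsx_props}{lsx_lref_be_lpxs}), which itself performs a \emph{nested} induction: an outer induction on $\CSX{h}{g}{K_1}{W}$ via \Ruleref{csx_alt} and an inner induction on $\LSX{h}{g}{0}{W}{K_2}$, with the extra premise $\LPXS{h}{g}{K_1}{K_2}$ tracking the accumulated environment reductions so that the outer hypothesis stays applicable. This generalization over an $\LPXS{}{}{}{}$ history is exactly the missing ingredient that lets one absorb an environment step while keeping a usable $\CSX{}{}{}{}$ hypothesis; your single induction on $\CSX{h}{g}{L}{T}$ has no analogue of it.
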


\begin{proof}
\Thref{}{lsx_cpx_trans_lcosx} is proved
by induction on its second premise and by cases on its third premise.
Strongly co-normalizing environments (\DefRef{lcosx}{}) emerge when 
$T_1 = \LRef{i}$ with $i < l$ and \FigRef{cpx}{\delta} is considered.
Every construction lemma is needed except for 
\Thref{}{lsx_lref_be_lpxs}, which is proved
by induction on $\CSX{h}{g}{K_1}{W}$ using \Ruleref{csx_alt} and
then by induction on $\LSX{h}{g}{0}{W}{K_2}$
with the help of \Thref{}{lsx_cpx_trans_lcosx}
and of \Thref{cpx_props}{lpx_cpx_trans}.
\Thref{}{csx_lsx} is proved by induction
on the proper subclosures of $\CLOSURE{L}{T}$
with the help of every construction lemma
including \Thref{}{lsx_lref_be_lpxs}.
\end{proof}

\Thref{lsx_props}{csx_lsx} is the most relevant property of
strongly normalizing environments.

Notice that $\LSX{h}{g}{l}{T}{L}$ is generated by the next rule
resembling \FigRef{lsx}{lsx}.
\[
\Rule{\MAll{L_2} (\LPXS{h}{g}{L_1}{L_2}) \MImp (\NLLEq{l}{T}{L_1}{L_2}) \MImp (\LSX{h}{g}{l}{T}{L_2})}
     {\mathrm{lsx}}{\LSX{h}{g}{l}{T}{L_1}}
\EqLabel{lsx_alt}\]

\subsection{Results on Very Big Trees}
\SubLabel{fpb_props}

The properties of $qrst$-computations and strong $qrst$-normalization
are listed next.

\begin{theorem}[$qrst$-computations]\hfil
\ObjLabel{fpb_props}
\begin{enumerate}

\item\ObjLabel{fpbs_inv_alt}
(decomposition property for $qrst$-computation)\\
If $\FPBS{h}{g}{L_1}{T_1}{L_2}{T_2}$ then
there exist $L_0$, $L$, and $T$ such that\\
$\CPXS{h}{g}{L_1}{T_1}{T}$ and $\FQUS{L_1}{T}{L_0}{T_2}$ and
$\LPXS{h}{g}{L_0}{L}$ and $\LLEq{0}{T_2}{L}{L_2}$.

\item\ObjLabel{fpb_fpbq_alt}
(formulation of proper $rst$-reduction with $q$-equivalence, left to right)\\
If $\FPB{h}{g}{L_1}{T_1}{L_2}{T_2}$ then
$\FPBQ{h}{g}{L_1}{T_1}{L_2}{T_2}$ and $\NFLEq{0}{L_1}{T_1}{L_2}{T_2}$.

\item\ObjLabel{fpbq_inv_fpb_alt}
(formulation of proper $rst$-reduction with $q$-equivalence, right to left)\\
If $\FPBQ{h}{g}{L_1}{T_1}{L_2}{T_2}$ and $\NFLEq{0}{L_1}{T_1}{L_2}{T_2}$
then $\FPB{h}{g}{L_1}{T_1}{L_2}{T_2}$.

\item\ObjLabel{lleq_fpb_trans}
(transitivity of proper $rst$-reduction through lazy equivalence)\\
If $\LLEq{0}{T}{K_1}{K_2}$ and $\FPB{h}{g}{K_2}{T}{L_2}{U}$ then\\
there exists $L_1$ such that
$\FPB{h}{g}{K_1}{T}{L_1}{U}$ and $\LLEq{0}{U}{L_1}{L_2}$.

\item\ObjLabel{fpbq_fpbg_trans}
(transitivity of proper $qrst$-computation through $qrst$-reduction, left case)\\
If $\FPBQ{h}{g}{L_1}{T_1}{L}{T}$ and $\FPBG{h}{g}{L}{T}{L_2}{T_2}$
then $\FPBG{h}{g}{L_1}{T_1}{L_2}{T_2}$.

\item\ObjLabel{fpbs_fpbg_trans}
(transitivity of proper $qrst$-computation through $qrst$-computation, left case)\\
If $\FPBS{h}{g}{L_1}{T_1}{L}{T}$ and $\FPBG{h}{g}{L}{T}{L_2}{T_2}$
then $\FPBG{h}{g}{L_1}{T_1}{L_2}{T_2}$.

\end{enumerate}
\end{theorem}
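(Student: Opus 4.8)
The plan is to prove the items roughly in the stated order, since each leans on its predecessors. For \Thref{}{fpbs_inv_alt} I would induct on the derivation of its premise from rules \FigRef{fpbs}{inj} and \FigRef{fpbs}{step}. The \FigRef{fpbs}{inj} case is a single $qrst$-reduction step, dispatched by cases on \FigRef{fpbs}{cpx}, \FigRef{fpbs}{lpx}, \FigRef{fpbs}{fquq}, \FigRef{fpbs}{lleq}. In the \FigRef{fpbs}{step} case we append one $qrst$-reduction step on the right of a computation already put into the normal shape of the conclusion --- a term-computation, then a chain of subclosure steps, then an environment-computation, then a lazy equivalence --- and the real content is to migrate the new step leftwards past the blocks to which it does not belong, using the commutation results of \SubRef{cpx_props} and \SubRef{lleq_props}: \Thref{cpx_props}{fqu_cpx_trans} and \Thref{cpx_props}{lpx_fqu_trans} to slide a subclosure step across a term- or an environment-computation, \Thref{cpx_props}{lpx_cpx_trans} to absorb environment reduction into term reduction, and \Thref{lleq_props}{lleq_fqu_trans}, \Thref{lleq_props}{lleq_cpx_trans}, \Thref{lleq_props}{lleq_lpx_trans}, \Thref{lleq_props}{cpx_lleq_conf_sn} to push the trailing lazy equivalence through the three preceding blocks. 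I expect this decomposition lemma to be the main obstacle of the whole theorem.

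For \Thref{}{fpb_fpbq_alt} and \Thref{}{fpbq_inv_fpb_alt} I would argue by cases on the proper $rst$-step, resp.\ on the $qrst$-reduction step. Each proper $rst$-rule of \FigRef{fpbg}{} is visibly an instance of a $qrst$-reduction rule, so the content is the side condition $\NFLEq{0}{L_1}{T_1}{L_2}{T_2}$: a subclosure step (\FigRef{fpbg}{fqu}) always yields a distinct closure, with the term or the environment length changing (cf.\ the remark closing \SubRef{closures}), which rules out $q$-equivalence; and the side conditions $T_1 \neq T_2$ and $\NLLEq{0}{T}{L_1}{L_2}$ of \FigRef{fpbg}{cpx} and \FigRef{fpbg}{lpx} forbid it directly, since lazy equivalence is reflexive. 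Conversely, given $\NFLEq{0}{L_1}{T_1}{L_2}{T_2}$: the reflexive alternative in \FigRef{fpbs}{fquq} and the rule \FigRef{fpbs}{lleq} are impossible (again by reflexivity of lazy equivalence), whereas for \FigRef{fpbs}{cpx} and \FigRef{fpbs}{lpx} the hypothesis collapses to exactly $T_1 \neq T_2$, resp.\ $\NLLEq{0}{T}{L_1}{L_2}$, which is the side condition demanded by \FigRef{fpbg}{cpx}, resp.\ \FigRef{fpbg}{lpx}.

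For \Thref{}{lleq_fpb_trans} I would go by cases on the proper step $\FPB{h}{g}{K_2}{T}{L_2}{U}$: the \FigRef{fpbg}{fqu} case is \Thref{lleq_props}{lleq_fqu_trans} almost verbatim; the \FigRef{fpbg}{cpx} case combines \Thref{lleq_props}{lleq_cpx_trans}, to transport the reduction onto $K_1$, with \Thref{lleq_props}{cpx_lleq_conf_sn}, to recover the residual equivalence on $U$; the \FigRef{fpbg}{lpx} case uses \Thref{lleq_props}{lleq_lpx_trans} together with symmetry and transitivity of lazy equivalence to carry the negated equivalence across. For \Thref{}{fpbq_fpbg_trans} I would case-split on whether the leading step $\FPBQ{h}{g}{L_1}{T_1}{L}{T}$ is a $q$-equivalence or a proper step. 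In the proper case \Thref{}{fpbq_inv_fpb_alt} turns it into $\FPB{h}{g}{L_1}{T_1}{L}{T}$, which becomes the head of $\FPBG{h}{g}{L_1}{T_1}{L_2}{T_2}$ via \FigRef{fpbg}{fpbg} once we observe that $\FPBG{h}{g}{L}{T}{L_2}{T_2}$ is in particular a $qrst$-computation (by \Thref{}{fpb_fpbq_alt}, \FigRef{fpbs}{inj}, and transitivity of $qrst$-computation). In the $q$-equivalence case $T_1 = T$ and $\LLEq{0}{T}{L_1}{L}$; unfolding $\FPBG{h}{g}{L}{T}{L_2}{T_2}$ through \FigRef{fpbg}{fpbg} as a proper step from $\CLOSURE{L}{T}$ to some $\CLOSURE{L'}{U}$ followed by a $qrst$-computation to $\CLOSURE{L_2}{T_2}$, and applying \Thref{}{lleq_fpb_trans} to that leading proper step, yields $\FPB{h}{g}{L_1}{T}{L_1'}{U}$ with a residual $\LLEq{0}{U}{L_1'}{L'}$ which re-enters the trailing computation as an \FigRef{fpbs}{lleq} step. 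Finally, \Thref{}{fpbs_fpbg_trans} is a routine induction on the derivation of its first premise: the \FigRef{fpbs}{inj} case is one application of \Thref{}{fpbq_fpbg_trans}, and the \FigRef{fpbs}{step} case decomposes the first premise as a $qrst$-computation followed by one $qrst$-reduction step, applies \Thref{}{fpbq_fpbg_trans} to that last step and $\FPBG{h}{g}{L}{T}{L_2}{T_2}$, and then the induction hypothesis to the $qrst$-computation prefix.
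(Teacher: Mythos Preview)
Your proposal is correct and follows essentially the same approach as the paper: induction on the premise for \Thref{}{fpbs_inv_alt} using the listed commutation lemmas, case analysis on the relevant premise for \Thref{}{fpb_fpbq_alt}, \Thref{}{fpbq_inv_fpb_alt}, and \Thref{}{lleq_fpb_trans}, deriving \Thref{}{fpbq_fpbg_trans} from \Thref{}{fpbq_inv_fpb_alt} and \Thref{}{lleq_fpb_trans}, and a routine induction on the first premise for \Thref{}{fpbs_fpbg_trans}. The only cosmetic difference is that the paper does not list \Thref{lleq_props}{cpx_lleq_conf_sn} among the lemmas used for \Thref{}{fpbs_inv_alt}, reserving it for \Thref{}{lleq_fpb_trans}.
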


\begin{proof}
\Thref{}{fpbs_inv_alt} is proved by induction on its premise
rearranging the $qrst$-steps with 
\Thref{cpx_props}{fqu_cpx_trans}, \Thref{cpx_props}{lpx_fqu_trans},
\Thref{cpx_props}{lpx_cpx_trans}, \Thref{lleq_props}{lleq_fqu_trans},
\Thref{lleq_props}{lleq_cpx_trans}, and \Thref{lleq_props}{lleq_lpx_trans}.
\Thref{}{fpb_fpbq_alt} is proved by cases on its premise.
\Thref{}{fpbq_inv_fpb_alt} is proved by cases on its first premise.
\Thref{}{lleq_fpb_trans} is proved cases on its second premise
with the help of
\Thref{lleq_props}{lleq_fqu_trans}, \Thref{lleq_props}{lleq_cpx_trans},
\Thref{lleq_props}{lleq_lpx_trans}, and \Thref{lleq_props}{cpx_lleq_conf_sn}.
\Thref{}{fpbq_fpbg_trans} is a corollary of
\Thref{}{fpbq_inv_fpb_alt} and \Thref{}{lleq_fpb_trans}.
\Thref{}{fpbs_fpbg_trans} is proved by induction on its first premise
with the help of \Thref{}{fpbq_fpbg_trans}.
\end{proof}

Notice that the reverse of \Thref{fpb_props}{fpbs_inv_alt} is straightforward.
Also notice that \Thref{fpb_props}{fpbs_fpbg_trans} implies the transitivity
of proper $qrst$-computation.
The ``right case'' of the transitivity, that is: 
$\FPBG{h}{g}{L_1}{T_1}{L}{T} \MAnd \FPBS{h}{g}{L}{T}{L_2}{T_2} \MImp
\FPBG{h}{g}{L_1}{T_1}{L_2}{T_2}$, comes immediately from the
transitivity of $qrst$-computation
(defined as a transitive closure in \SubRef{fpb_defs}).
Another important corollary of \Thref{fpb_props}{fpbs_fpbg_trans} is that
the relation $\FSB{h}{g}{L}{T}$ is generated by the next rule:
\[
\Rule{\MAll{L_2,T_2} \FPBG{h}{g}{L_1}{T_1}{L_2}{T_2} \MImp \FSB{h}{g}{L_2}{T_2}}
     {\mathrm{fsb}}{\FSB{h}{g}{L_1}{T_1}}
\EqLabel{fsb_alt}\]
The induction principle for $\FSB{h}{g}{L}{T}$ derived from this rule,
gives a very strong induction hypothesis that takes advantage of the generality
of proper $qrst$-computation (\DefRef{fpbg}{}).
We need such a strength to prove the preservation \Thref{preserve_props}{}.

\begin{theorem}[strongly $qrst$-normalizing closures]\hfil
\ObjLabel{fsb_props}
\begin{enumerate}

\item\ObjLabel{csx_fsb_fpbs}
(strong normalization implies strong $qrst$-normalization, general form)\\
If $\CSX{h}{g}{L_1}{T_1}$ and $\FPBS{h}{g}{L_1}{T_1}{L_2}{T_2}$
then $\FSB{h}{g}{L_2}{T_2}$.

\item\ObjLabel{aaa_fsb}
(very big tree theorem)\\
If $\AAA{L}{T}{A}$ then $\FSB{h}{g}{L}{T}$
for each $\Next{h}{}$ and $\Deg{h}{g}{}{}$.

\end{enumerate}
\end{theorem}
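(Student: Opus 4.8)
Part~(2) follows at once: \Thref{gcp_props}{aaa_csx} turns $\AAA{L}{T}{A}$ into $\CSX{h}{g}{L}{T}$, and part~(1), applied to the reflexive $qrst$-computation $\FPBS{h}{g}{L}{T}{L}{T}$, yields $\FSB{h}{g}{L}{T}$ for every choice of $\Next{h}{}$ and $\Deg{h}{g}{}{}$. So the substance lies in part~(1), which I would tackle by a nested well-founded induction combining three well-founded structures: a principal induction on the hypothesis $\CSX{h}{g}{L_1}{T_1}$ (strong normalization for extended reduction, which already absorbs the $t$-steps), an auxiliary induction on $\LSX{h}{g}{0}{T_1}{L_1}$ --- available by \Thref{lsx_props}{csx_lsx} --- for the steps that reduce the environment, and an auxiliary induction on the proper subclosure relation of $\CLOSURE{L_1}{T_1}$, which is well founded (\SubRef{closures}). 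The $\FPBS$-prefix in the statement is deliberate: it lets the induction hypothesis speak of \emph{every} closure reachable by a $qrst$-computation from a strongly normalizing one, which is what allows the auxiliary inductions to be re-established after each principal step.

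A preliminary observation is that strong normalization for extended reduction is inherited along a $qrst$-computation --- trivially along a term-reduction step; along a subclosure step because reducts of a subterm (or of a subenvironment entry) lift to reducts of the whole closure while relocation commutes with reduction, so a subclosure of a strongly normalizing closure is strongly normalizing; along an environment-reduction step by \Thref{cpx_props}{lpx_cpx_trans}; and along a lazy-equivalence step by \Thref{lleq_props}{lleq_cpx_trans}. In particular $\CSX{h}{g}{L_2}{T_2}$ holds, and with it $\LSX{h}{g}{0}{T_2}{L_2}$ by \Thref{lsx_props}{csx_lsx}.

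The core of the argument is to establish $\FSB{h}{g}{L_2}{T_2}$ by unfolding \Ruleref{fsb_alt}: given a proper $qrst$-computation $\FPBG{h}{g}{L_2}{T_2}{L_3}{T_3}$, I would derive $\FSB{h}{g}{L_3}{T_3}$. Prefixing the given $\FPBS{h}{g}{L_1}{T_1}{L_2}{T_2}$ and using \Thref{fpb_props}{fpbs_fpbg_trans} gives $\FPBG{h}{g}{L_1}{T_1}{L_3}{T_3}$; I then split off its leading proper $rst$-step and put the rest into the clustered normal form of \Thref{fpb_props}{fpbs_inv_alt}, namely a computation on the term, a chain of subclosure steps, a computation on the environment, and a final lazy-equivalence step. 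Casing on the leading step (\DefRef{fpbg}{}): a proper reduction step $\CPX{h}{g}{L_1}{T_1}{T'}$ with $T_1 \neq T'$ strictly decreases the principal measure, so the principal induction hypothesis applies at $\CLOSURE{L_1}{T'}$; a subclosure step $\FQU{L_1}{T_1}{K}{V}$ strictly decreases the subclosure rank while leaving strong normalization --- hence strong normalization of the environment, by \Thref{lsx_props}{csx_lsx} --- available at $\CLOSURE{K}{V}$, so the subclosure induction hypothesis applies; an environment step $\LPX{h}{g}{L_1}{L'}$ with $\NLLEq{0}{T_1}{L_1}{L'}$ strictly decreases the environment-normalization measure, so its auxiliary induction hypothesis applies at $\CLOSURE{L'}{T_1}$. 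The trailing lazy-equivalence step and the reflexive fragments of all the computations are absorbed along the way with \Thref{lleq_props}{lleq_fqu_trans}, \Thref{lleq_props}{lleq_cpx_trans}, \Thref{lleq_props}{lleq_lpx_trans}, \Thref{lleq_props}{cpx_lleq_conf_sn}, \Thref{fpb_props}{lleq_fpb_trans}, and \Thref{fpb_props}{fpbq_fpbg_trans}; this is where the new $q$-step gets discharged.

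The main obstacle I anticipate is the bookkeeping of the nested induction: one must check that each kind of leading step lands on a closure where precisely the intended sub-induction restarts while the other two measures do not increase --- notably that strong normalization for extended reduction and the environment-normalization measure do not grow when passing to a direct subclosure or along a single step of environment reduction, which rests on \Thref{cpx_props}{lpx_cpx_trans} and \Thref{cpx_props}{fqu_cpx_trans} and on the delicate blend of subterm extraction with relocation in \FigRef{fqu}{drop}. A secondary difficulty is making the clustering lemma \Thref{fpb_props}{fpbs_inv_alt} cooperate with the lazy-equivalence transitivity lemmas so that the $q$-steps neither block the commutations nor feed any of the three measures.
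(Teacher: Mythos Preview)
Your derivation of part~(2) from part~(1) and \Thref{gcp_props}{aaa_csx} matches the paper exactly. For part~(1) you also identify the right three well-founded structures---$\CSX{h}{g}{L_1}{T_1}$, the environment normalization obtained via \Thref{lsx_props}{csx_lsx}, and proper subclosures---and the right decomposition lemma \Thref{fpb_props}{fpbs_inv_alt}. So the ingredients coincide with the paper's.

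Where you diverge is the anchor of the subclosure induction: you place it at $\CLOSURE{L_1}{T_1}$ and then pull the proper step back there via \Thref{fpb_props}{fpbs_fpbg_trans}, whereas the paper places it at $\CLOSURE{L_2}{T_2}$ and unfolds \FigRef{fsb}{fsb} directly. The paper's choice is the one that makes the nesting close cleanly: the inner induction variable $\CLOSURE{L_2}{T_2}$ is independent of the outer $\CSX$ variable $\CLOSURE{L_1}{T_1}$, so in the $\mathrm{fqu}$ case the target $\CLOSURE{L_3}{T_3}$ is a proper subclosure of $\CLOSURE{L_2}{T_2}$ and the inner hypothesis applies with $\CLOSURE{L_1}{T_1}$---hence the $\CSX$ measure---left untouched (the $\FPBS$ prefix is simply extended by one step). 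In your arrangement all three inductions live on the same closure $\CLOSURE{L_1}{T_1}$, and an $\mathrm{fqu}$ leading step to $\CLOSURE{K}{V}$ changes all three at once: the $\CSX$- and $\LSX$-ranks at $\CLOSURE{K}{V}$ bear no order relation to those at $\CLOSURE{L_1}{T_1}$, so a plain lexicographic or nested scheme ``$\CSX > \LSX > \text{subclosure}$'' does not terminate as stated. You can repair this by generalizing the inner statement (so that the innermost hypothesis re-establishes $\CSX$ and $\LSX$ afresh at the subclosure rather than inheriting them), but that is exactly the decoupling that the paper achieves more directly by putting the subclosure induction on $\CLOSURE{L_2}{T_2}$.
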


\begin{proof}
\Thref{}{csx_fsb_fpbs} is proved
by induction on its first premise and then
by induction on the proper
subclosures of $\CLOSURE{L_2}{T_2}$
by invoking \Thref{lsx_props}{csx_lsx}
and the the reverse of \Thref{fpb_props}{fpbs_inv_alt}.
\Thref{}{aaa_fsb} is a corollary of
\Thref{}{csx_fsb_fpbs} and \Thref{gcp_props}{aaa_csx}.
\end{proof}

\subsection{Results on Stratified Validity}
\SubLabel{snv_props}

The relevant properties of stratified validity and of its refinement are listed next.

\begin{theorem}[stratified validity and its refinement]\hfil
\ObjLabel{snv_props}
\begin{enumerate}

\item\ObjLabel{snv_fwd_aaa}
(inclusion of validity)\\
If $\SNV{h}{g}{L}{T}$ then there exists $A$ such that $\AAA{L}{T}{A}$.

\item\ObjLabel{snv_fwd_fsb}
(validity implies strong qrst-normalization)\\
If $\SNV{h}{g}{L}{T}$ then $\FSB{h}{g}{L}{T}$.

\item\ObjLabel{lsubsv_fwd_lsubd}
(first inclusion of refinement)\\
If $\LSubSV{h}{g}{L_1}{L_2}$ then $\LSubD{h}{g}{L_1}{L_2}$.

\item\ObjLabel{lsubsv_fwd_lsuba}
(second inclusion of refinement)\\
If $\LSubSV{h}{g}{L_1}{L_2}$ then $\LSubA{L_1}{L_2}$.

\item\ObjLabel{lsubsv_lstas_trans}
(transitivity of degree-guarded iterated static type assignment through refinement)\\
If $n \le d$ and $\DA{h}{g}{L_2}{T}{d}$
then $\LSubSV{h}{g}{L_1}{L_2}$ and $\LSTAS{h}{L_2}{T}{U_2}{n}$\\
imply $\LSTAS{h}{L_1}{T}{U_1}{n}$ and $\CPCS{L_1}{U_1}{U_2}$ for some $U_1$.

\item\ObjLabel{lsubsv_scpds_trans}
(transitivity of stratified decomposed computation through refinement)\\
If $\LSubSV{h}{g}{L_1}{L_2}$ and $\SCPDS{h}{g}{L_2}{T_1}{T_2}{n}$\\
then $\SCPDS{h}{g}{L_1}{T_1}{T}{n}$ and $\CPRS{L_1}{T_2}{T}$ for some $T$.

\item\ObjLabel{lsubsv_snv_trans}
(transitivity of validity through refinement)\\
If $\LSubSV{h}{g}{L_1}{L_2}$ and $\SNV{h}{g}{L_2}{T}$
then $\SNV{h}{g}{L_1}{T}$.

\end{enumerate}
\end{theorem}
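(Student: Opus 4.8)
The plan is to prove the seven items in the listed order, each building on the earlier ones and on the established results of \SubRef{aaa_props}, \SubRef{gcp_props}, and \SubRef{fpb_props}. First I would prove \Thref{}{snv_fwd_aaa} by induction on $\SNV{h}{g}{L}{T}$; all cases are immediate except \FigRef{snv}{\mathrm{appl}}, where $\SCPDS{h}{g}{L}{T}{\Abst{}{W_0}{U_0}}{n}$ lets me carry the arity of $T$ (from the induction hypothesis) along its $n$-iterated static type and the trailing computation by \Thref{aaa_props}{aaa_lstas}, \Thref{cpx_props}{cpr_cpx}, and \Thref{aaa_props}{cpx_lpx_aaa_conf}, so the arity of $T$ equals that of the abstraction $\Abst{}{W_0}{U_0}$ and is therefore of arrow shape $\Impl{B}{A}$; likewise $\SCPDS{h}{g}{L}{V}{W_0}{1}$ makes $V$ share the arity $B$ of $W_0$, and \FigRef{aaa}{\mathrm{appl}} assigns arity $A$ to $\Appl{V}{T}$. \Thref{}{snv_fwd_fsb} is then a direct corollary of \Thref{}{snv_fwd_aaa} and the very big tree \Thref{fsb_props}{aaa_fsb}.

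Next I would dispose of the refinement inclusions \Thref{}{lsubsv_fwd_lsubd} and \Thref{}{lsubsv_fwd_lsuba} by induction on $\LSubSV{h}{g}{L_1}{L_2}$, the case \FigRef{lsubsv}{\mathrm{beta}} being the only one of substance. For \Thref{}{lsubsv_fwd_lsubd} the $\DA$-premises carried by \FigRef{lsubsv}{\mathrm{beta}} are exactly those required by \FigRef{lsubd}{\mathrm{beta}}. For \Thref{}{lsubsv_fwd_lsuba} I would first record a small fact used again below: $\SHNV{h}{g}{L}{\Cast{W}{V}}{d}$ implies $\SNV{h}{g}{L}{\Cast{W}{V}}$, since instantiating its conversion premise at $n=0$ supplies precisely the common term demanded by \FigRef{snv}{\mathrm{cast}}; hence $\Cast{W}{V}$ has an arity $B$ in $L_1$ by \Thref{}{snv_fwd_aaa}, so $\AAA{L_1}{W}{B}$ by \FigRef{aaa}{\mathrm{cast}} and $\AAA{L_2}{W}{B}$ by the induction hypothesis and \Thref{aaa_props}{lsuba_aaa_conf}, which are the premises of \FigRef{lsuba}{\mathrm{beta}}.

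The core of the section is \Thref{}{lsubsv_lstas_trans}, which I would prove by induction on $\LSTAS{h}{L_2}{T}{U_2}{n}$, after transferring $\DA{h}{g}{L_2}{T}{d}$ to $L_1$ by \Thref{}{lsubsv_fwd_lsubd} and \Thref{da_props}{lsubd_da_trans}. Every case but $T=\LRef{i}$ recurses routinely (the bind case through \FigRef{lsubsv}{\mathrm{pair}}). The hard case is $T=\LRef{i}$ where \FigRef{lsubsv}{\mathrm{beta}} has turned the $i$-th declaration $\LDec{K_2}{}{W}$ of $L_2$ into the definition $\LDef{K_1}{}{(\Cast{W}{V})}$ of $L_1$: then $L_2$ produces $U_2$ by \FigRef{lstas}{\mathrm{succ}} (or \FigRef{lstas}{\mathrm{zero}}) from the $(n-1)$-iterated static type of $W$ in $K_2$, whereas in $L_1$ the rules \FigRef{lstas}{\mathrm{ldef}} and \FigRef{lstas}{\mathrm{cast}} produce $U_1$ from the $n$-iterated static type of $V$ in $K_1$, and these must be reconciled up to conversion. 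This reconciliation is precisely what the conversion premise of $\SHNV$ in \FigRef{lsubsv}{\mathrm{beta}} was designed to supply: for $n \ge 1$ its $\SCPES$-component at iteration $n-1$ --- admissible because $n \le d$ forces $n-1$ below the degree of $W$ in $K_1$ --- makes the $(n-1)$-iterated static type of $W$ and the $n$-iterated static type of $V$ converge in $K_1$; combining this with the induction hypothesis applied to the subderivation for $W$, with uniqueness of iterated static types (\Thref{lstas_props}{lstas_mono}), and with stability of $\CPCS$ under relocation-with-drop yields $\CPCS{L_1}{U_1}{U_2}$, while for $n=0$ the same conclusion follows from the fact that a $0$-iterated static type is reached by $\delta$-expansions and $\epsilon$-contractions. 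I expect this $\LRef{i}$ case to be the genuine obstacle: it is the exact spot where the extended applicability condition, embodied in the conversion premise of $\SHNV$, has to be threaded through an arbitrary iteration counter with all the bookkeeping of degrees, relocations, and conversions lining up.

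Finally, \Thref{}{lsubsv_scpds_trans} would follow by unfolding $\SCPDS{h}{g}{L_2}{T_1}{T_2}{n}$ into a degree, an $n$-iterated static type $T'$ of $T_1$, and a computation $\CPRS{L_2}{T'}{T_2}$: \Thref{}{lsubsv_lstas_trans} gives an $n$-iterated static type $U_1$ of $T_1$ in $L_1$ with $\CPCS{L_1}{U_1}{T'}$, \Thref{}{lsubsv_fwd_lsubd} with \Thref{da_props}{lsubd_da_trans} transfers the degree, and \Thref{}{lsubsv_fwd_lsubd} with \Thref{da_props}{lsubd_fwd_lsubr} and \Thref{cpr_props}{lsubr_cpr_trans} (lifted to computations) transfers the reduction to $\CPRS{L_1}{T'}{T_2}$; then \Thref{cprs_props}{cpcs_inv_cprs} and \Thref{cprs_props}{cprs_conf} produce a common reduct $T$ of $U_1$ and $T_2$ in $L_1$ that witnesses the claim. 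I would close with \Thref{}{lsubsv_snv_trans} by induction on $\SNV{h}{g}{L_2}{T}$, using a routine commutation lemma between $\LSubSV$ and $\Drop$: in \FigRef{snv}{\mathrm{lref}} the $i$-th entry of $L_1$ is either the same $W$ (handled by the induction hypothesis) or a $\beta$-definition $\Cast{W}{V}$ whose validity is the recorded fact above; \FigRef{snv}{\mathrm{bind}} uses \FigRef{lsubsv}{\mathrm{pair}}; and \FigRef{snv}{\mathrm{cast}}, \FigRef{snv}{\mathrm{appl}} combine the induction hypothesis with \Thref{}{lsubsv_scpds_trans} and then invoke \Thref{cprs_props}{cprs_conf} to coalesce the two transferred computations into a single shared target (noting for \FigRef{snv}{\mathrm{appl}} that reduction keeps an abstraction an abstraction).
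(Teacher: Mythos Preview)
Your plan matches the paper's proof in structure and in the key invocations: induction on the validity derivation for \Thref{}{snv_fwd_aaa} using preservation of arity through extended reduction together with \Thref{aaa_props}{aaa_mono}; \Thref{}{snv_fwd_fsb} as a corollary of \Thref{fsb_props}{aaa_fsb}; straightforward inductions for the two refinement inclusions; induction on the $\LSTAS{}{}{}{}{}$ derivation for \Thref{}{lsubsv_lstas_trans} with the $\LRef{i}$/\FigRef{lsubsv}{\mathrm{beta}} case handled exactly by unfolding the $\SHNV$ conversion premise; \Thref{}{lsubsv_scpds_trans} as a corollary; and \Thref{}{lsubsv_snv_trans} by induction on validity using \Thref{}{lsubsv_scpds_trans} and \Thref{cprs_props}{cprs_conf} in the \textsf{appl} and \textsf{cast} cases.

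Two small remarks. First, in \Thref{}{snv_fwd_aaa} the \textsf{cast} case is not quite ``immediate'': the two arities produced by the induction hypothesis for $U$ and $T$ must be shown equal, which needs the same transport-through-$\SCPDS$ argument (via \Thref{aaa_props}{cpx_lpx_aaa_conf} and \Thref{aaa_props}{aaa_mono}) that you spell out for \textsf{appl}. Second, your detour through the ``small fact'' $\SHNV\Rightarrow\SNV$ for \Thref{}{lsubsv_fwd_lsuba} is correct but unnecessary: $\SHNV$ already carries $\SNV{h}{g}{L_1}{W}$ and $\SNV{h}{g}{L_1}{V}$ as premises, and the paper instead equates their arities directly by transporting through the $n=0$ instance of the $\SCPES$ premise and applying \Thref{aaa_props}{aaa_mono}, which is why it also cites \Thref{aaa_props}{cpx_lpx_aaa_conf} here. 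Neither point affects correctness.
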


\begin{proof}
\Thref{}{snv_fwd_aaa} is proved by induction on its premise
by invoking \Thref{aaa_props}{aaa_mono}
and \Thref{aaa_props}{cpx_lpx_aaa_conf}.
Here we see that
preservation of validity requires preservation of atomic arity.
\Thref{}{snv_fwd_fsb} is a corollary of
\Thref{}{snv_fwd_aaa} and of \Thref{fsb_props}{aaa_fsb}.
\Thref{}{lsubsv_fwd_lsubd} is proved by induction on its premise.
\Thref{}{lsubsv_fwd_lsuba} is proved by induction on its premise
by invoking \Thref{}{snv_fwd_aaa}, \Thref{aaa_props}{lsuba_aaa_conf},
\Thref{aaa_props}{aaa_mono}, and \Thref{aaa_props}{cpx_lpx_aaa_conf}
when \FigRef{lsubsv}{\mathrm{beta}} is considered.
\Thref{}{lsubsv_lstas_trans} is proved by induction on last premise,
by cases on its second premise, and then by cases on its third premise.
\Thref{da_props}{da_lstas} is invoked among other propositions
when \FigRef{lstas}{\mathrm{zero}} and \FigRef{lstas}{\mathrm{succ}}
are considered in the case of \FigRef{lsubsv}{\mathrm{beta}}.
\Thref{}{lsubsv_scpds_trans} is a corollary of
\Thref{}{lsubsv_fwd_lsubd}, \Thref{}{lsubsv_lstas_trans},
\Thref{cpr_props}{lsubr_cpr_trans}, \Thref{cprs_props}{cpcs_inv_cprs},
and \Thref{da_props}{lsubd_da_trans}.
\Thref{}{lsubsv_snv_trans} is proved
by induction on its second premise and by cases on its first premise,
by invoking \Thref{}{lsubsv_scpds_trans} and
\Thref{cprs_props}{cprs_conf}
when \FigRef{snv}{\mathrm{appl}} and \FigRef{snv}{\mathrm{cast}}
are considered.
\end{proof}

We introduce some abbreviations in the style of \cite{SPA2}
to state the preservation theorem.
With respect to \cite{SPA2},
our $\PVR{}{}{}{}$ is connected to his $\mathrm{CL}$, and 
our $\PT{}{}{}{}$ is connected to his $\mathrm{P}^*\mathrm{T}$.

\begin{definition}[preservation properties]
\ObjLabel{IH}
\FigRef{IH}{} defines four properties of the closure $\CLOSURE{L_1}{T_1}$
with respect to $\Next{h}{}$ and $\Deg{h}{g}{}{}$.
They are:
preservation of degree by reduction ($\PD{}{}{}{}$),
preservation of validity by reduction ($\PVR{}{}{}{}$),
preservation of validity by static type ($\PVT{}{}{}{}$),
and preservation of static type by reduction ($\PT{}{}{}{}$).
\qed
\end{definition}

\begin{Figure}{Preservation properties}{IH}
\begin{small}\begin{tabular}{lll}
$\PD{h}{g}{L_1}{T_1}$&is&
$(\SNV{h}{g}{L_1}{T_1}) \MAnd \MAll{L_2,T_2,d}$\\
&&$(\DA{h}{g}{L_1}{T_1}{d}) \MAnd (\CPR{L_1}{T_1}{T_2}) \MAnd
(\LPR{L_1}{L_2}) \MImp (\DA{h}{g}{L_1}{T_1}{d})$\\\\

$\PVR{h}{g}{L_1}{T_1}$&is&
$(\SNV{h}{g}{L_1}{T_1}) \MAnd \MAll{L_2,T_2}$\\
&&$(\CPR{L_1}{T_1}{T_2}) \MAnd
(\LPR{L_1}{L_2}) \MImp (\SNV{h}{g}{L_2}{T_2})$\\\\

$\PVT{h}{g}{L_1}{T_1}$&is&
$(\SNV{h}{g}{L_1}{T_1}) \MAnd \MAll{U_1,d,n}$\\
&&$n \le d \MAnd (\DA{h}{g}{L_1}{T_1}{d}) \MAnd
\LSTAS{h}{L_1}{T_1}{U_1}{n} \MImp (\SNV{h}{g}{L_1}{U_1})$\\\\

$\PT{h}{g}{L_1}{T_1}$&is&
$(\SNV{h}{g}{L_1}{T_1}) \MAnd \MAll{L_2,T_2,U_1,d,n}$\\
&&$n \le d \MAnd (\DA{h}{g}{L_1}{T_1}{d}) \MAnd \LSTAS{h}{L_1}{T_1}{U_1}{n} \MAnd
(\CPR{L_1}{T_1}{T_2}) \MAnd$\\
&&$(\LPR{L_1}{L_2}) \MImp
\MEx{U_2} \LSTAS{h}{L_2}{T_2}{U_2}{n} \MAnd \CPCS{L_2}{U_1}{U_2}$ 
\end{tabular}\end{small}
\end{Figure}

\begin{theorem}[preservation properties]\hfil
\ObjLabel{preserve_props}
\begin{enumerate}

\item\ObjLabel{da_cpr_lpr_aux}
(conditional preservation of degree by reduction)\\
$(\MAll{L_1,T_1} \FPBG{h}{g}{L}{T}{L_1}{T_1} \MImp \PD{h}{g}{L_1}{T_1})$ and\\
$(\MAll{L_1,T_1} \FPBG{h}{g}{L}{T}{L_1}{T_1} \MImp \PVR{h}{g}{L_1}{T_1})$ and\\
$(\MAll{L_1,T_1} \FPBG{h}{g}{L}{T}{L_1}{T_1} \MImp \PVT{h}{g}{L_1}{T_1})$ imply
$\PD{h}{g}{L}{T}$.

\item\ObjLabel{snv_cpr_lpr_aux}
(conditional preservation of validity by reduction)\\
$(\MAll{L_1,T_1} \FPBG{h}{g}{L}{T}{L_1}{T_1} \MImp \PD{h}{g}{L_1}{T_1})$ and\\
$(\MAll{L_1,T_1} \FPBG{h}{g}{L}{T}{L_1}{T_1} \MImp \PVR{h}{g}{L_1}{T_1})$ and\\
$(\MAll{L_1,T_1} \FPBG{h}{g}{L}{T}{L_1}{T_1} \MImp \PVT{h}{g}{L_1}{T_1})$ and\\
$(\MAll{L_1,T_1} \FPBG{h}{g}{L}{T}{L_1}{T_1} \MImp \PT{h}{g}{L_1}{T_1})$ imply
$\PVR{h}{g}{L}{T}$.

\item\ObjLabel{snv_lstas_aux}
(conditional preservation of validity by static type)\\
$(\MAll{L_1,T_1} \FPBG{h}{g}{L}{T}{L_1}{T_1} \MImp \PD{h}{g}{L_1}{T_1})$ and\\
$(\MAll{L_1,T_1} \FPBG{h}{g}{L}{T}{L_1}{T_1} \MImp \PVR{h}{g}{L_1}{T_1})$ and\\
$(\MAll{L_1,T_1} \FPBG{h}{g}{L}{T}{L_1}{T_1} \MImp \PVT{h}{g}{L_1}{T_1})$ and\\
$(\MAll{L_1,T_1} \FPBG{h}{g}{L}{T}{L_1}{T_1} \MImp \PT{h}{g}{L_1}{T_1})$ imply
$\PVT{h}{g}{L}{T}$.

\item\ObjLabel{lstas_cpr_lpr_aux}
(conditional preservation of static type by reduction)\\
$(\MAll{L_1,T_1} \FPBG{h}{g}{L}{T}{L_1}{T_1} \MImp \PD{h}{g}{L_1}{T_1})$ and\\
$(\MAll{L_1,T_1} \FPBG{h}{g}{L}{T}{L_1}{T_1} \MImp \PVR{h}{g}{L_1}{T_1})$ and\\
$(\MAll{L_1,T_1} \FPBG{h}{g}{L}{T}{L_1}{T_1} \MImp \PVT{h}{g}{L_1}{T_1})$ and\\
$(\MAll{L_1,T_1} \FPBG{h}{g}{L}{T}{L_1}{T_1} \MImp \PT{h}{g}{L_1}{T_1})$ imply
$\PT{h}{g}{L}{T}$.

\item\ObjLabel{snv_preserve}
(preservation theorem, general form)\\
If $\SNV{h}{g}{L}{T}$ then 
$\PD{h}{g}{L}{T}$ and $\PVR{h}{g}{L}{T}$ and $\PVT{h}{g}{L}{T}$ and $\PT{h}{g}{L}{T}$. 

\item\ObjLabel{snv_cprs_lpr}
(preservation of validity by computation)\\
If $\CPRS{L}{T_1}{T_2}$
then $\SNV{h}{g}{L}{T_1}$ implies $\SNV{h}{g}{L}{T_2}$.

\item\ObjLabel{lstas_cpcs_lpr}
(preservation of conversion by static type)\\
If $\SNV{h}{g}{L}{T_1}$ and $\SNV{h}{g}{L}{T_2}$
and $n \le d_1$ and $n \le d_2$
and $\DA{h}{g}{L}{T_1}{d_1}$ and $\DA{h}{g}{L}{T_2}{d_2}$
and $\LSTAS{h}{L}{T_1}{U_1}{n}$ and $\LSTAS{h}{L}{T_2}{U_2}{n}$
then $\CPCS{L}{T_1}{T_2}$ implies $\CPCS{L}{U_1}{U_2}$.

\end{enumerate}
\end{theorem}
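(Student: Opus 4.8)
The plan is to obtain \Thref{}{snv_preserve} as the simultaneous fixed point of
the four ``conditional'' statements \Thref{}{da_cpr_lpr_aux},
\Thref{}{snv_cpr_lpr_aux}, \Thref{}{snv_lstas_aux}, and
\Thref{}{lstas_cpr_lpr_aux}, and then to read off \Thref{}{snv_cprs_lpr} and
\Thref{}{lstas_cpcs_lpr} as corollaries. The four conditional statements share a
common hypothesis pattern: each assumes that preservation of degree (PD),
preservation of validity by reduction (PVR), preservation of validity by static
type (PVT), and preservation of static type by reduction (PT) --- only the first
three, in the case of \Thref{}{da_cpr_lpr_aux} --- hold at every closure
$\CLOSURE{L_1}{T_1}$ reachable from $\CLOSURE{L}{T}$ by a \emph{proper}
$qrst$-computation $\FPBG{h}{g}{L}{T}{L_1}{T_1}$, and concludes the corresponding
property at $\CLOSURE{L}{T}$. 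The bundling is forced by the extended
applicability condition: preserving the validity of an application $\Appl{V}{T}$
requires tracking the $n$-iterated static type of $T$ under reduction (hence PT),
re-checking its validity (hence PVT), and controlling the degree bound $n\le d$
(hence PD), so none of the four properties can be proved in isolation.

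I would prove each of the four conditional statements by cases on the last rule
of $\SNV{h}{g}{L}{T}$, then by cases on the reduction $\CPR{L}{T}{T'}$ (and,
where relevant, on $\LPR{L}{L'}$). The sort and atom cases are immediate; the
$\mathrm{bind}$ case moves a binder into the environment and is closed by
applying the induction hypotheses to the body in the extended environment,
justified by a subclosure step inside $\FPBG$; the $\mathrm{cast}$ case uses
\Thref{cprs_props}{cprs_conf} together with \Thref{snv_props}{lsubsv_scpds_trans}.
The delicate case is $\mathrm{appl}$ reduced by a $\beta$-step (the $\theta$-step
being analogous), where part of the redex and part of the reductum migrate into
the environment, yielding the refinement
$\LSubSV{h}{g}{\LDef{L_1}{}{(\Cast{W}{V})}}{\LDec{L_2}{}{W}}$ of
\FigRef{lsubsv}{beta}; here I would invoke \Thref{snv_props}{lsubsv_snv_trans},
\Thref{snv_props}{lsubsv_scpds_trans}, \Thref{snv_props}{lsubsv_lstas_trans},
\Thref{snv_props}{lsubsv_fwd_lsubd}, \Thref{da_props}{lsubd_da_trans}, and the
generation lemma on abstraction derived from \Thref{cpr_props}{cpr_conf_lpr} and
\Thref{cprs_props}{cprs_conf}, checking that the inferred type of $V$ (of degree
$d+1$) stays coherent with the expected type $W$ (of degree $d$) up to
conversion. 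The $\delta$- and $\epsilon$-cases use the validity of the stored
definiens, reached by subclosure steps inside $\FPBG$, together with the
relocation bookkeeping.

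With the four conditional statements in hand, \Thref{}{snv_preserve} follows by
the induction principle \EqRef{fsb_alt}: given $\SNV{h}{g}{L}{T}$, first obtain
$\FSB{h}{g}{L}{T}$ from \Thref{snv_props}{snv_fwd_fsb}, then induct along
\EqRef{fsb_alt}, whose hypothesis supplies PD, PVR, PVT, and PT at every closure
with $\FPBG{h}{g}{L}{T}{L_1}{T_1}$ --- exactly the premises of the four
conditional statements --- so feeding them in yields all four properties at
$\CLOSURE{L}{T}$. The full generality of proper $qrst$-computation, rather than a
single proper $rst$-step, is indispensable here: in the $\beta$-case the subterm
$V$ is reached by a subclosure step and may itself have been reduced, and the
environment may have been reduced as well, so the closure at which the induction
hypothesis is needed is related to $\CLOSURE{L}{T}$ only by a composite
$qrst$-computation. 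This is precisely why \DefRef{fpbg}{} and
\Thref{fpb_props}{fpbs_fpbg_trans} were arranged as they are.

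Finally, \Thref{}{snv_cprs_lpr} follows by induction on the length of
$\CPRS{L}{T_1}{T_2}$, applying PVR from \Thref{}{snv_preserve} at each step with
the reflexive choice $\LPR{L}{L}$; and \Thref{}{lstas_cpcs_lpr} follows by
turning $\CPCS{L}{T_1}{T_2}$ into a pair of computations $\CPRS{L}{T_1}{T}$ and
$\CPRS{L}{T_2}{T}$ to a common reduct via \Thref{cprs_props}{cpcs_inv_cprs}, then
iterating PT along each branch --- keeping validity alive via
\Thref{}{snv_cprs_lpr} --- to obtain $n$-iterated static types $U_1'$ and $U_2'$
of $T$ with $\CPCS{L}{U_1}{U_1'}$ and $\CPCS{L}{U_2}{U_2'}$, and concluding
$U_1'=U_2'$ by uniqueness \Thref{lstas_props}{lstas_mono}, whence
$\CPCS{L}{U_1}{U_2}$ by transitivity and symmetry. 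I expect the main obstacle to
be the $\beta$/$\theta$ case of \Thref{}{snv_cpr_lpr_aux}: reconciling the
``expected'' type $W$ with the ``inferred'' type of $V$ after migration into the
environment, while simultaneously maintaining the degree bound $n\le d$ of the
applicability condition, is where the mutual induction and the refinement
\FigRef{lsubsv}{beta} must be orchestrated exactly right.
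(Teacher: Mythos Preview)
Your proposal is correct and follows essentially the same architecture as the paper: the four conditional lemmas are proved by cases on $T$ and on the reduction rule, with the $\beta$-case handled via the refinement \FigRef{lsubsv}{beta} and its transitivity lemmas; \Thref{}{snv_preserve} is then obtained by the induction principle \EqRef{fsb_alt} after \Thref{snv_props}{snv_fwd_fsb}; and \Thref{}{snv_cprs_lpr}, \Thref{}{lstas_cpcs_lpr} follow as corollaries via \Thref{cprs_props}{cpcs_inv_cprs} and \Thref{lstas_props}{lstas_mono}. The only minor deviations are in which auxiliary facts are invoked where: the paper locates the essential use of \Thref{cprs_props}{cprs_conf} in the \emph{flat} (hence \emph{appl}) and $\theta$ cases of \Thref{}{snv_cpr_lpr_aux} rather than in \emph{cast}/$\beta$, singles out \Thref{da_props}{da_lstas} for the $\delta$-case of \Thref{}{lstas_cpr_lpr_aux}, and cites $\PVT{}{}{}{}$ and $\PD{}{}{}{}$ (rather than $\PT{}{}{}{}$) for \Thref{}{lstas_cpcs_lpr}---but these are bookkeeping differences, not a different strategy.
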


\begin{proof}
\Thref{}{da_cpr_lpr_aux}, \Thref{}{snv_cpr_lpr_aux},
\Thref{}{snv_lstas_aux}, and \Thref{}{lstas_cpr_lpr_aux}
are proved by cases on $T$, and then by cases on the other premises.
When \FigRef{cpr}{\beta} is considered,
\Thref{}{da_cpr_lpr_aux} invokes
\Thref{da_props}{lsubd_da_trans} and \FigRef{lsubd}{\mathrm{beta}},
\Thref{}{snv_cpr_lpr_aux} invokes 
\Thref{snv_props}{lsubsv_snv_trans} and \FigRef{lsubsv}{\mathrm{beta}},
while \Thref{}{lstas_cpr_lpr_aux} invokes
\Thref{snv_props}{lsubsv_lstas_trans} and \FigRef{lsubsv}{\mathrm{beta}}.
Moreover \Thref{}{snv_cpr_lpr_aux} needs
\Thref{cprs_props}{cprs_conf} in the cases of \FigRef{cpr}{\mathrm{flat}}
(already noted by \cite{SPA2}) and of \FigRef{cpr}{\theta}, while
\Thref{}{lstas_cpr_lpr_aux} needs \Thref{da_props}{da_lstas}
in the case of \FigRef{cpr}{\delta}.
\Thref{}{snv_preserve} is proved
by induction on the proper $rst$-reducts of $\CLOSURE{L}{T}$
by invoking 
\Thref{}{da_cpr_lpr_aux}, \Thref{}{snv_cpr_lpr_aux},
\Thref{}{snv_lstas_aux}, and \Thref{}{lstas_cpr_lpr_aux}.
The induction is assured by \Thref{snv_props}{snv_fwd_fsb}
and by \Ruleref{fsb_alt}.
\Thref{}{snv_cprs_lpr} is proved
by induction on its first premise
by invoking $\PVR{}{}{}{}$
from \Thref{}{snv_preserve}. 
\Thref{}{lstas_cpcs_lpr} is a corollary of
\Thref{}{snv_cprs_lpr},
\Thref{cprs_props}{cpcs_inv_cprs}, and \Thref{lstas_props}{lstas_mono},
given $\PVT{}{}{}{}$ and $\PD{}{}{}{}$
from \Thref{}{snv_preserve}. 
\end{proof}

\Thref{preserve_props}{snv_preserve} sums up
the most significant propositions discussed in this article.

\section{Conclusion and Future Work}
\SecLabel{conclusion}

We presented in \SecRef{definitions}
a revised version of the formal system $\LD$ to be termed
``$\LD$ version 2A1'', and we proved in \SecRef{propositions}
that this calculus enjoys three relevant desired properties:
confluence of reduction (\Thref{cprs_props}{}),
strong normalization along $qrst$-computations (\Thref{gcp_props}{}),
and preservation of validity by reduction (\Thref{preserve_props}{}).

Notably, the matter of this article was entirely developed by the author
with the unavoidable help of the proof management system Matita of \cite{ARST11},
which mechanically validated the resulting formalization of \cite{lambdadeltaV2a} in full.
The development took 42 months, producing 143 definitions and 1416 propositions.
More data is available at $\LD$ Web site \URL{\LDHome}.

We wish to stress that, to our knowledge, we are presenting
as \Thref{fsb_props}{aaa_fsb}
the first fully machine-checked proof of the so-called
``big tree'' theorem \cite{SPAc4} for a calculus that includes $\LY$.
It is also important to point out that
the proof of this theorem is harder in $\LD$ than in $\LY$
since the latter system does not have environments.

The long time we needed to take $\LD$ to this stage,
played in favor of presenting the development as is,
while the revision of the calculus is far from being complete.
In particular the present treatment lacks the type assignment judgment
$\NTA{h}{L}{T}{U}$ and its desired properties found in \cite{lambdadeltaJ1a}.
Anyway, it is a design feature of $\LD$, the fact that a term is typed
iff it is valid, so the preservation theorem presented here 
is the crux of the ``subject reduction'' property of this judgment.

Moreover, we are interested in relating the present notion of validity,
based on an extended (\IE $\LY$-like) applicability condition,
with the one implied by \cite{lambdadeltaJ1a},
which is based on a restricted (\IE PTS-like) applicability condition
(see \SubRef{snv_defs}).
It might happen that every valid closure in the extended sense has
an $\eta$-equivalent formulation that is valid in the restricted sense.
We support this conjecture by noting that a typical case
in which we need extended validity, is the next: 
\[
\SNV{h}{g}{\LDec{\LDec{\LDec{L}{z}{\Sort{k}}}{y}{(\Abst{}{\Sort{k}}{\Sort{k}})}}{x}{y}}{\Appl{z}{x}}
\EqLabel{extended}\]
where named variables improve the readability.
If we $\eta$-expand $y$ (\IE the expected type of $x$) to $\Abst{w}{\Sort{k}}{\Appl{w}{y}}$, 
restricted validity suffices.

It is important to stress that the above transformation
looks like an $\eta$-expansion because of the notation,
but it might have a different logical meaning.
We see such a case considering Landau's ``Grundlagen der Analysis''
(GdA) formalized in the system $\AUTQE$ \cite{SPA3},
where Automath's unified binder \verb+[x:W]+ stands
either for $\Abst{x}{W}{}$, or for $\Prod{x}{W}{}$.
The GdA validates just in the extended sense because a situation like \EqRef{extended}
occurs in the definition of the constant \verb+ande2"l-r"+,
but four formal $\eta$-expansions assure its validity in the restricted sense as well. 
Each one takes an expected type \verb+b+, that is the $y$ of \EqRef{extended},
and turns it into \verb+[x:a]<x>b+ (\verb+<x>+ is our applicator $\Appl{x}{}$).
We must note that the expected type of \verb+b+ is \verb+[x:a]'prop'+,
whereas the expected type of \verb+[x:a]<x>b+ is \verb+'prop'+.
So this expansion is not type-preserving,
especially if we accept the statement of \cite{Bro11} on the GdA that
every unified binder of degree one stands for a $\Pi$.
This means that the expansion is indeed a $\Pi$-introduction.
Interestingly, \cite{Bro11} states that formal $\eta$-expansions,
whose logical meaning should be investigated, solve all
incompatibilities preventing the GdA to validate in a PTS. 

\Thref{snv_props}{snv_fwd_aaa} shows that a
valid closure can by typed by a simple type.
Using $\LD$ as a logical framework is not a priority,
but if we wish to do so (say, for validating the GdA),
we need the additional expressive power given by universes
(say, $\star$ in the $\LCube$, or \verb+'type'+ and \verb+'prop'+ in the GdA). 
However, adding universes to $\LD$ while preserving its properties
is challenging because the naive extension of $\LY$
with ``type inclusion'' (the device with which universes are built
in the Automath tradition) is not conservative, since
either confluence or uniqueness of types is lost. 

Other additions to $\LD$ we shall consider, include:
``global'' variables referred by level
(while the current variables referred by depth would be ``local''),
term-like environments with projections as we advocated in \cite{lambdadeltaJ1a}, 
and metavariables.
Furthermore, we are interested in improving multiple relocation (\DefRef{lifts}{}),
which we introduced since 
the set of the functions $\Lift{l}{m}{}{}$ is not closed for composition,
by considering the functions $\Lifts{\vect{c}}{}{}$ as primitive,
and by representing a multiple relocation more conveniently than with a list of pairs.
As the reader can see,
our long-term aim is to make $\LD$ a fully fledged and elegant type system
suitable for many purposes. 

\begin{acks}

I am grateful to A. Asperti, C. Sacerdoti Coen, and S. Solmi
for their constant support and for their valuable advices
on several issues related to $\LD$.
I thank the anonymous referees
for their helpful suggestions with which I could improve this text.
I wish to dedicate this work to A.D. Bonanno and R. Prazzoli
for the joyful moments we shared in these years during the development of $\LD$.

\end{acks}

\bibliography{lambdadelta,spa,helm,kamareddine,mtt,coq,references}

\newpage\appendix

\section{Summary of Notation}
\AppLabel{notation}
\setcounter{section}{1}

The ongoing revision of $\LD$ includes
an update of the notational conventions of \cite{lambdadeltaJ1a}.
This Appendix summarizes
the revised notation we introduced in \SecRef{definitions}.

\medskip\begin{raggedleft}\begin{tabular}{lll}
\DefNotation{A,B}{atomic arity}{aaa}
\DefNotation{\C}{reducibility candidate}{gcr}
\DefNotation{K,L}{environment}{tl}
\DefNotation{\R}{generic property on closures}{gcr}
\DefNotation{T,U,V,W}{term}{tl}
\DefNotation{\vect{V}}{list of arguments}{applv}
\DefNotation{c}{relocation pair}{lift}
\DefNotation{\vect{c}}{list of relocation pairs}{lifts}
\DefNotation{d}{degree}{da}
\DefNotation{\Deg{}{g}{}{}}{sort degree parameter}{da}
\DefNotation{\Next{h}{}}{sort hierarchy parameter}{lstas}
\DefNotation{i,j}{variable reference depth}{tl}
\DefNotation{k}{sort index}{tl}
\DefNotation{l}{relocation level}{lift}
\DefNotation{m}{relocation depth}{lift}
\DefNotation{n}{number of iterations}{lstas}
\DefNotation{\Impl{B}{A}}{functional atomic arity}{aaa}
\DefNotation{\CFun{\C_1}{\C_2}}{function subset}{cfun}
\DefNotation{\Append{K}{L}}{concatenation}{append}
\DefNotation{\LDef{L}{}{V}}{definition}{tl}
\DefNotation{\LDec{L}{}{W}}{declaration}{tl}
\DefNotation{\LREq{l}{m}{L_1}{L_2}}{ranged equivalence}{lreq}
\DefNotation{\LPR{L_1}{L_2}}{parallel reduction for environments}{lpr}
\DefNotation{\LPX{h}{g}{L_1}{L_2}}{extended parallel reduction for env.'s}{lpx}
\DefNotation{\CNX{h}{g}{L}{T}}{normal term for extended reduction}{cnx}
\DefNotation{\LPRS{L_1}{L_2}}{parallel computation for environments}{cprs}
\DefNotation{\LPXS{h}{g}{L_1}{L_2}}{extended parallel computation for env.'s}{cpxs}
\DefNotation{\CSX{h}{g}{L}{T}}{strongly norm. term for ext. reduction}{cnx}
\DefNotation{\SNV{h}{g}{L}{T}}{stratified validity}{snv}
\DefNotation{\SHNV{h}{g}{L}{T}{d}}{stratified higher validity}{lsubsv}
\DefNotation{\AAA{L}{T}{A}}{atomic arity assignment}{aaa}
\DefNotation{\CQR{L}{T_1}{T_2}}{sequential reduction}{transition}
\DefNotation{\CQX{h}{g}{L}{T_1}{T_2}}{extended sequential reduction}{xtransition}
\DefNotation{\CPR{L}{T_1}{T_2}}{parallel reduction for terms}{cpr}
\DefNotation{\CPX{h}{g}{L}{T_1}{T_2}}{extended parallel reduction for terms}{cpx}
\DefNotation{\CPRS{L}{T_1}{T_2}}{parallel computation for terms}{cprs}
\DefNotation{\CPXS{h}{g}{L}{T_1}{T_2}}{extended parallel computation for terms}{cpxs}
\DefNotation{\CPCS{L}{T_1}{T_2}}{parallel conversion for terms}{cprs}
\DefNotation{\LSTAS{h}{L}{T}{U}{n}}{iterated static type assignment}{lstas}
\DefNotation{\SCPDS{h}{g}{L}{T_1}{T_2}{n}}{stratified decomposed computation}{scpds}
\DefNotation{\SCPES{h}{g}{L}{T_1}{T_2}{n_1}{n_2}}{stratified decomposed conversion}{scpds}
\DefNotation{\DA{h}{g}{L}{T}{d}}{degree assignment}{da}
\DefNotation{\LSubR{L_1}{L_2}}{refinement for preservation of reduction}{lsubr}
\DefNotation{\LSubC{\R}{}{L_1}{L_2}}{refinement for reducibility}{lsubc}
\end{tabular}\end{raggedleft}
\newpage
\begin{raggedleft}\begin{tabular}{lll}
\DefNotation{\LSubA{L_1}{L_2}}{refinement for preserv. of atomic arity}{lsuba}
\DefNotation{\LSubD{h}{g}{L_1}{L_2}}{refinement for preservation of degree}{lsubd}
\DefNotation{\LSubSV{h}{g}{L_1}{L_2}}{refinement for preserv. of strat. validity}{lsubsv}
\DefNotation{\LLEq{l}{T}{L_1}{L_2}}{lazy equivalence for environments}{lleq}
\DefNotation{\LLOr{l}{T}{L_1}{L_2}{L}}{pointwise union}{llor}
\DefNotation{\TSTS{T_1}{T_2}}{same top structure}{tsts}
\DefNotation{\MIter{n}{h}}{iterated composition}{lstas}
\DefNotation{\FreeS{i}{l}{L}{U}}{hereditarily free variable}{frees}
\SecNotation{\Empty}{empty list}{definitions}
\DefNotation{\Null}{empty environment}{tl}
\DefNotation{\Full}{base atomic arity}{aaa}
\DefNotation{\Sort{k}}{sort}{tl}
\DefNotation{\LRef{i}}{variable reference}{tl}
\DefNotation{\Length{L}}{length}{length}
\DefNotation{\Abbr{}{V}{L}}{tail definition}{append}
\DefNotation{\Abbr{}{V}{T}}{abbreviation}{tl}
\DefNotation{\Abst{}{W}{L}}{tail declaration}{append}
\DefNotation{\Abst{}{W}{T}}{abstraction}{tl}
\DefNotation{\Appl{V}{T}}{application}{tl}
\DefNotation{\ApplV{\vect{V}}{}{T}}{multiple application}{applv}
\DefNotation{\Cast{W}{T}}{type annotation}{tl}
\DefNotation{\Lifts{\vect c}{T_1}{T_2}}{multiple relocation}{lifts}
\DefNotation{\Lift{l}{m}{T_1}{T_2}}{relocation}{lift}
\DefNotation{\LiftV{l}{m}{\vect{T}_1}{\vect{T}_2}}{vector relocation}{liftv}
\DefNotation{\Drops{\vect c}{L_1}{L_2}}{multiple drop}{drops}
\DefNotation{\Drop{l}{m}{L_1}{L_2}}{drop}{drop}
\DefNotation{\LSX{h}{g}{l}{T}{L}}{strongly norm. env. for ext. reduction}{lsx}
\DefNotation{\LCoSX{h}{g}{l}{L}}{strongly co-norm. env. for ext. reduction}{lcosx}
\DefNotation{\FSB{h}{g}{L}{T}}{strongly norm. closure for $rst$-reduction}{fsb}
\SubNotation{\CLOSURE{L}{T}}{closure}{closures}
\DefNotation{\InGCRV{L}{\vect{V}}{\R}}{multiple habitation}{gcr}
\DefNotation{\FLEq{l}{L_1}{T_1}{L_2}{T_2}}{lazy equivalence for closures}{fsb}
\DefNotation{\FQU{L_1}{T_1}{L_2}{T_2}}{direct subclosure}{fqu}
\DefNotation{\FQUQ{L_1}{T_1}{L_2}{T_2}}{reflexive direct subclosure}{fqu}
\DefNotation{\FQUS{L_1}{T_1}{L_2}{T_2}}{subclosure}{fqus}
\DefNotation{\FPB{h}{g}{L_1}{T_1}{L_2}{T_2}}{proper $rst$-reduction}{fpbg}
\DefNotation{\FPBQ{h}{g}{L_1}{T_1}{L_2}{T_2}}{$qrst$-reduction}{fpbs}
\DefNotation{\FPBG{h}{g}{L_1}{T_1}{L_2}{T_2}}{proper $qrst$-computation}{fpbg}
\DefNotation{\FPBS{h}{g}{L_1}{T_1}{L_2}{T_2}}{$qrst$-computation}{fpbs}
\DefNotation{\CLOSURE{l}{m}}{relocation pair}{lift}
\DefNotation{\ACR{\R}{}{A}}{interpretation of the atomic arity}{acr}
\DefNotation{\Simple{T}}{simple (or neutral) term}{simple}
\DefNotation{\PD{h}{g}{L_1}{T_1}}{preservation of degree by reduction}{IH}
\DefNotation{\PT{h}{g}{L_1}{T_1}}{preservation of static type by reduction}{IH}
\DefNotation{\PVR{h}{g}{L_1}{T_1}}{preservation of validity by reduction}{IH}
\DefNotation{\PVT{h}{g}{L_1}{T_1}}{preservation of validity by static type}{IH}
\SecNotation{\Cons{}{}}{list concatenation}{definitions}
\SecNotation{\MAll{},\MEx{},\MImp,\MAnd}{metalinguistic logical constants}{definitions}
\DefNotation{\EITHER{}{}}{shared notation}{tl}
\SecNotation{\hbox{\hss\qed}}{end of definition, end of proof}{introduction}
\end{tabular}\end{raggedleft}

\newpage\section{Pointers to the Certified Proofs}
\AppLabel{pointers}

At the the moment of writing this article,
the certified specification of the revised $\LD$ is available
just as a bundle of script files for the latest version of the
proof management system Matita.
The bundle is available on the Web at:
\URL{\LDHome download/lambdadelta\_2A1.tar.gz}.
For each proposition stated in the article, we give a pointer
consisting of a path with three components:
a directory inside the directory \verb+basic_2+ of the bundle,
a file name inside this directory,
and a proved statement inside this file.

Notice that the notation in the files
may differ from \AppRef{notation} because of incompatibilities between
the characters available for \LaTeX{} and for Matita.

Moreover, the given pointers might be modified
in the forthcoming revisions of $\LD$.

Here we are referring to the directory
\URL{trunk/matita/matita/contribs/lambdadelta/}
in revision 12964 of HELM repository
at \URL{http://helm.cs.unibo.it/software/index.html}. 

\begin{enumerate}

\Pointer{cpr_props}{lsubr_trans}{static/lsubr\_lsubr/lsubr\_trans}
\Pointer{cpr_props}{lsubr_cpr_trans}{reduction/cpr/lsubr\_cpr\_trans}
\Pointer{cpr_props}{cpr_conf_lpr}{reduction/lpr\_lpr/cpr\_conf\_lpr}
\Pointer{cpr_props}{lpr_conf}{reduction/lpr\_lpr/lpr\_conf}

\Pointer{cprs_props}{cprs_conf}{computation/cprs\_cprs/cprs\_conf}
\Pointer{cprs_props}{lprs_conf}{computation/lprs\_lprs/lprs\_conf}
\Pointer{cprs_props}{cpcs_inv_cprs}{equivalence/cpcs\_cpcs/cpcs\_inv\_cprs}

\Pointer{da_props}{da_lstas}{unfold/lstas\_da/da\_lstas}
\Pointer{da_props}{lstas_inv_da}{unfold/lstas\_da/lstas\_inv\_da}
\Pointer{da_props}{lstas_inv_da_ge}{unfold/lstas\_da/lstas\_inv\_da\_ge}
\Pointer{da_props}{lsubd_fwd_lsubr}{static/lsubd/lsubd\_fwd\_lsubr}
\Pointer{da_props}{lsubd_da_trans}{static/lsubd\_da/lsubd\_da\_trans}
\Pointer{da_props}{lsubd_da_conf}{static/lsubd\_da/lsubd\_da\_conf}
\Pointer{da_props}{lsubd_trans}{static/lsubd\_lsubd/lsubd\_trans}

\Pointer{lstas_props}{lstas_mono}{unfold/lstas\_lstas/lstas\_mono}
\Pointer{lstas_props}{lstas_inv_refl_pos}{unfold/lstas\_da/lstas\_inv\_refl\_pos}

\Pointer{cpx_props}{lsubr_cpx_trans}{reduction/cpx/lsubr\_cpx\_trans}
\Pointer{cpx_props}{cpr_cpx}{reduction/cpx/cpr\_cpx}
\Pointer{cpx_props}{sta_cpx}{reduction/cpx\_lift/sta\_cpx}
\Pointer{cpx_props}{fqu_cpx_trans}{reduction/cpx\_lift/fqu\_cpx\_trans}
\Pointer{cpx_props}{lpx_fqu_trans}{reduction/lpx\_drop/lpx\_fqu\_trans}
\Pointer{cpx_props}{lpx_cpx_trans}{computation/lpxs\_lpxs/lpx\_cpx\_trans}
\Pointer{cpx_props}{cpxs_fwd_beta}{computation/cpxs\_tsts/cpxs\_fwd\_beta}

\Pointer{aaa_props}{lsuba_fwd_lsubr}{static/lsuba/lsuba\_fwd\_lsubr}
\Pointer{aaa_props}{lsuba_aaa_trans}{static/lsuba\_aaa/lsuba\_aaa\_trans}
\Pointer{aaa_props}{lsuba_aaa_conf}{static/lsuba\_aaa/lsuba\_aaa\_conf}
\Pointer{aaa_props}{lsuba_trans}{static/lsuba\_lsuba/lsuba\_trans}
\Pointer{aaa_props}{aaa_mono}{static/aaa\_aaa/aaa\_mono}
\Pointer{aaa_props}{aaa_lstas}{unfold/lstas\_aaa/aaa\_lstas}
\Pointer{aaa_props}{cpx_lpx_aaa_conf}{reduction/lpx\_aaa/cpx\_lpx\_aaa\_conf}

\Pointer{gcp_props}{csx_gcr}{computation/csx\_tsts\_vector/csx\_gcr}
\Pointer{gcp_props}{acr_gcr}{computation/gcp\_cr/acr\_gcr}
\Pointer{gcp_props}{acr_aaa_csubc_lifts}{computation/gcp\_aaa/acr\_aaa\_csubc\_lifts}
\Pointer{gcp_props}{gcr_aaa}{computation/gcp\_aaa/gcr\_aaa}
\Pointer{gcp_props}{aaa_csx}{computation/csx\_aaa/aaa\_csx}
\Pointer{gcp_props}{lsubc_fwd_lsubr}{computation/lsubc/lsubc\_fwd\_lsubr}
\Pointer{gcp_props}{lsuba_lsubc}{computation/lsubc\_lsuba/lsuba\_lsubc}

\Pointer{llor_props}{llor_tail_frees}{multiple/llor\_alt/llor\_tail\_frees}
\Pointer{llor_props}{llor_tail_cofrees}{multiple/llor\_alt/llor\_tail\_cofrees}
\Pointer{llor_props}{llor_total}{multiple/llor\_drop/llor\_total}

\Pointer{lleq_props}{llpx_sn_llor_fwd_sn}{multiple/llpx\_sn\_llor/llpx\_sn\_llor\_fwd\_sn}
\Pointer{lleq_props}{llpx_sn_llor_dx}{multiple/lleq\_llor/llpx\_sn\_llor\_dx}
\Pointer{lleq_props}{lleq_lreq_trans}{multiple/lleq\_lreq/lleq\_lreq\_trans}
\Pointer{lleq_props}{lleq_fqu_trans}{multiple/lleq\_fqus/lleq\_fqu\_trans}
\Pointer{lleq_props}{lleq_cpx_trans}{reduction/cpx\_lleq/lleq\_cpx\_trans}
\Pointer{lleq_props}{lleq_lpx_trans}{reduction/lpx\_lleq/lleq\_lpx\_trans}
\Pointer{lleq_props}{cpx_lleq_conf_sn}{reduction/cpx\_lleq/cpx\_lleq\_conf\_sn}

\Pointer{lsx_props}{lsx_cpx_trans_lcosx}{computation/lcosx\_cpx/lsx\_cpx\_trans\_lcosx}
\Pointer{lsx_props}{lsx_lref_be_lpxs}{computation/lsx\_csx/lsx\_lref\_be\_lpxs}
\Pointer{lsx_props}{csx_lsx}{computation/lsx\_csx/csx\_lsx}

\Pointer{fpb_props}{fpbs_inv_alt}{computation/fpbs\_alt/fpbs\_inv\_alt}
\Pointer{fpb_props}{fpb_fpbq_alt}{reduction/fpbq\_alt/fpb\_fpbq\_alt}
\Pointer{fpb_props}{fpbq_inv_fpb_alt}{reduction/fpbq\_alt/fpbq\_inv\_fpb\_alt}
\Pointer{fpb_props}{lleq_fpb_trans}{reduction/fpb\_lleq/lleq\_fpb\_trans}
\Pointer{fpb_props}{fpbq_fpbg_trans}{computation/fpbg\_fpbs/fpbq\_fpbg\_trans}
\Pointer{fpb_props}{fpbs_fpbg_trans}{computation/fpbg\_fpbs/fpbs\_fpbg\_trans}

\Pointer{fsb_props}{csx_fsb_fpbs}{computation/fsb\_csx/csx\_fsb\_fpbs}
\Pointer{fsb_props}{aaa_fsb}{computation/fsb\_aaa/aaa\_fsb}

\Pointer{snv_props}{snv_fwd_aaa}{dynamic/snv\_aaa/snv\_fwd\_aaa}
\Pointer{snv_props}{snv_fwd_fsb}{dynamic/snv\_fsb/snv\_fwd\_fsb}
\Pointer{snv_props}{lsubsv_fwd_lsubd}{dynamic/lsubsv\_lsubd/lsubsv\_fwd\_lsubd}
\Pointer{snv_props}{lsubsv_fwd_lsuba}{dynamic/lsubsv\_lsuba/lsubsv\_fwd\_lsuba}
\Pointer{snv_props}{lsubsv_lstas_trans}{dynamic/lsubsv\_lstas/lsubsv\_lstas\_trans}
\Pointer{snv_props}{lsubsv_scpds_trans}{dynamic/lsubsv\_scpds/lsubsv\_scpds\_trans}
\Pointer{snv_props}{lsubsv_snv_trans}{dynamic/lsubsv/lsubsv\_snv\_trans}

\Pointer{preserve_props}{da_cpr_lpr_aux}{dynamic/snv\_da\_lpr/da\_cpr\_lpr\_aux}
\Pointer{preserve_props}{snv_cpr_lpr_aux}{dynamic/snv\_lpr/snv\_cpr\_lpr\_aux}
\Pointer{preserve_props}{snv_lstas_aux}{dynamic/snv\_lstas/snv\_lstas\_aux}
\Pointer{preserve_props}{lstas_cpr_lpr_aux}{dynamic/snv\_lstas\_lpr/lstas\_cpr\_lpr\_aux}
\Pointer{preserve_props}{snv_preserve}{dynamic/snv\_preserve/snv\_preserve}
\Pointer{preserve_props}{snv_cprs_lpr}{dynamic/snv\_preserve/snv\_cprs\_lpr}
\Pointer{preserve_props}{lstas_cpcs_lpr}{dynamic/snv\_preserve/lstas\_cpcs\_lpr}

\end{enumerate}

\end{document}